\documentclass[onecolumn,10pt]{IEEEtran}
\usepackage[mathscr]{eucal}
\usepackage[cmex10]{amsmath}
\usepackage{epsfig,epsf,psfrag}
\usepackage{amssymb,amsmath,amsthm,amsfonts,latexsym}
\usepackage{amsmath,graphicx,bm,xcolor,url,overpic}
\usepackage{array}
\usepackage{verbatim}
\usepackage{bm}
\usepackage{algorithmic}
\usepackage{algorithm}
\usepackage{verbatim}
\usepackage{textcomp}
\usepackage{mathrsfs}
\usepackage{epstopdf}

\newcommand{\openone}{\leavevmode\hbox{\small1\normalsize\kern-.33em1}}

\catcode`~=11 \def\UrlSpecials{\do\~{\kern -.15em\lower .7ex\hbox{~}\kern .04em}} \catcode`~=13 

\allowdisplaybreaks[4]

\newcommand{\nn}{\nonumber}

\newcommand{\calA}{\mathcal{A}}
\newcommand{\calB}{\mathcal{B}}
\newcommand{\calC}{\mathcal{C}}

\newcommand{\calF}{\mathcal{F}}

\newcommand{\calH}{\mathcal{H}}
\newcommand{\calI}{\mathcal{I}}

\newcommand{\calM}{\mathcal{M}}
\newcommand{\calN}{\mathcal{N}}

\newcommand{\calP}{\mathcal{P}}
\newcommand{\calQ}{\mathcal{Q}}
\newcommand{\calR}{\mathcal{R}}

\newcommand{\calT}{\mathcal{T}}

\newcommand{\calX}{\mathcal{X}}
\newcommand{\calY}{\mathcal{Y}}


\newcommand{\bT}{\mathbf{T}}

\newcommand{\bV}{\mathbf{V}}

\newcommand{\bx}{\mathbf{x}}
\newcommand{\bX}{\mathbf{X}}
\newcommand{\by}{\mathbf{y}}
\newcommand{\bY}{\mathbf{Y}}


\newcommand{\rmc}{\mathrm{c}}

\newcommand{\rmd}{\mathrm{d}}

\newcommand{\rmG}{\mathrm{G}}

\newcommand{\rmH}{\mathrm{H}}

\newcommand{\rmM}{\mathrm{M}}

\newcommand{\rmo}{\mathrm{o}}

\newcommand{\rmr}{\mathrm{r}}

\newcommand{\rmS}{\mathrm{S}}
\newcommand{\rmt}{\mathrm{t}}
\newcommand{\rmT}{\mathrm{T}}

\newcommand{\rmU}{\mathrm{U}}

\newcommand{\rmV}{\mathrm{V}}


\newcommand{\bbE}{\mathsf{E}}

\newcommand{\bbN}{\mathbb{N}}

\newcommand{\bbP}{\mathbb{P}}

\newcommand{\bbR}{\mathbb{R}}



\DeclareMathAlphabet{\mathbsf}{OT1}{cmss}{bx}{n}
\DeclareMathAlphabet{\mathssf}{OT1}{cmss}{m}{sl}

\newcommand{\rvbPh}{\bsfPhi}

\DeclareSymbolFont{bsfletters}{OT1}{cmss}{bx}{n}  
\DeclareSymbolFont{ssfletters}{OT1}{cmss}{m}{n}
\DeclareMathSymbol{\bsfGamma}{0}{bsfletters}{'000}
\DeclareMathSymbol{\ssfGamma}{0}{ssfletters}{'000}
\DeclareMathSymbol{\bsfDelta}{0}{bsfletters}{'001}
\DeclareMathSymbol{\ssfDelta}{0}{ssfletters}{'001}
\DeclareMathSymbol{\bsfTheta}{0}{bsfletters}{'002}
\DeclareMathSymbol{\ssfTheta}{0}{ssfletters}{'002}
\DeclareMathSymbol{\bsfLambda}{0}{bsfletters}{'003}
\DeclareMathSymbol{\ssfLambda}{0}{ssfletters}{'003}
\DeclareMathSymbol{\bsfXi}{0}{bsfletters}{'004}
\DeclareMathSymbol{\ssfXi}{0}{ssfletters}{'004}
\DeclareMathSymbol{\bsfPi}{0}{bsfletters}{'005}
\DeclareMathSymbol{\ssfPi}{0}{ssfletters}{'005}
\DeclareMathSymbol{\bsfSigma}{0}{bsfletters}{'006}
\DeclareMathSymbol{\ssfSigma}{0}{ssfletters}{'006}
\DeclareMathSymbol{\bsfUpsilon}{0}{bsfletters}{'007}
\DeclareMathSymbol{\ssfUpsilon}{0}{ssfletters}{'007}
\DeclareMathSymbol{\bsfPhi}{0}{bsfletters}{'010}
\DeclareMathSymbol{\ssfPhi}{0}{ssfletters}{'010}
\DeclareMathSymbol{\bsfPsi}{0}{bsfletters}{'011}
\DeclareMathSymbol{\ssfPsi}{0}{ssfletters}{'011}
\DeclareMathSymbol{\bsfOmega}{0}{bsfletters}{'012}
\DeclareMathSymbol{\ssfOmega}{0}{ssfletters}{'012}


\newcommand{\hatH}{\hat{H}}
\newcommand{\tilh}{\tilde{h}}

\newcommand{\hatK}{\hat{K}}

\newcommand{\tilP}{\tilde{P}}

\newcommand{\tilQ}{\tilde{Q}}

\newcommand{\hatT}{\hat{T}}

\newcommand{\bari}{\bar{i}}
\newcommand{\barj}{\bar{j}}


\newcommand{\bkappa}{\bm{\kappa}}

\newcommand{\bmu}{\bm{\mu}}

\newcommand{\bSigma	}{\bm{\Sigma}}

\newcommand{\tlambda}{\tilde{\lambda}}

\newcommand{\tOmega}{\tilde{\Omega}}
\newcommand{\tPsi}{\tilde{\Psi}}






\DeclareMathOperator*{\argmin}{arg\,min}



\newcommand{\Var}{\mathrm{Var}}

\newtheorem{theorem}{Theorem} 
\newtheorem{lemma}{Lemma}

\usepackage{subfigure,epstopdf,bbm} 
\usepackage[utf8]{inputenc} 
\usepackage[T1]{fontenc}    
\usepackage{url,bbm}            
\usepackage{booktabs}       
\usepackage{amsfonts}       
\usepackage{nicefrac}       
\usepackage{microtype}      
\usepackage{cite}
\usepackage{subfigure,transparent,color,graphicx} 

\newcommand{\bbo}{\mathbbm{1}}
\usepackage[ colorlinks = true,
         linkcolor = blue,
         urlcolor  = blue,
         citecolor = red,
         anchorcolor = green]{hyperref}
\linespread{1}
\allowdisplaybreaks[2]
\flushbottom

\begin{document}
\title{Large and Small Deviations for \\Statistical Sequence Matching}

\author{Lin Zhou, Qianyun Wang, Jingjing Wang, Lin Bai and Alfred O. Hero

\thanks{L. Zhou, Q. Wang, J. Wang and L. Bai are with the School of Cyber Science and Technology, Beihang University, Beijing, China, 100083 (Emails: \{lzhou, wangqianyun, drwangjj, l.bai\}@buaa.edu.cn).}
\thanks{A. O. Hero is with the department of Electrical Engineering and Computer Science, University of Michigan, Ann Arbor (Email: hero@eecs.umich.edu).}

}

\maketitle

\begin{abstract}
We revisit the problem of statistical sequence matching between two databases of sequences initiated by Unnikrishnan (TIT 2015) and derive theoretical performance guarantees for the generalized likelihood ratio test (GLRT). We first consider the case where the number of matched pairs of sequences between the databases is known. In this case, the task is to accurately find the matched pairs of sequences among all possible matches between the sequences in the two databases. We analyze the performance of the GLRT by Unnikrishnan and explicitly characterize the tradeoff between the mismatch and false reject probabilities under each hypothesis in both large and small deviations regimes. Furthermore, we demonstrate the optimality of Unnikrishnan's GLRT test under the generalized Neyman-Person criterion for both regimes and illustrate our theoretical results via numerical examples. Subsequently, we generalize our achievability analyses to the case where the number of matched pairs is unknown, and an additional error probability needs to be considered. When one of the two databases contains a single sequence, the problem of statistical sequence matching specializes to the problem of multiple classification introduced by Gutman (TIT 1989). For this special case, our result for the small deviations regime strengthens previous result of Zhou, Tan and Motani (Information and Inference 2020) by removing unnecessary conditions on the generating distributions.
\end{abstract}

\begin{IEEEkeywords}
Finite blocklength analysis, Classification, Second-order asymptotics, Mismatch, False alarm
\end{IEEEkeywords}

\section{Introduction}

Hypothesis testing lies in the intersection of information theory, signal processing and statistics~\cite{blahut1974hypothesis,lehmann2006testing,batu2013testing}. In the simplest model of binary hypothesis testing, one is given a sequence of a certain length and two known distributions $(P,Q)$ that could have generated the sequence. It is assumed that the observed sequence is generated i.i.d. from one of the two distributions. There are two hypotheses, each of which specifies a possible generating distribution of the test sequence. The task is to design a test to correctly identify the true hypothesis. However, in practical applications such as image classification and junk mail identification, the generating distribution under each hypothesis is usually unavailable. Thus, the hypothesis testing framework fails to apply directly.

To resolve the above problem, Gutman~\cite{gutman1989asymptotically} proposed the framework of statistical classification, where under each hypothesis, a training sequence generated i.i.d. from the unknown generating distribution is available. For the binary case, the task is thus to design a test based on the training data without knowledge of the generating distributions. Naturally, there are two performance criteria: the type-I and type-II error probabilities, each of which specifies the probability of an error under the respective null and alternative hypotheses. Gutman proposed a threshold based generalized likelihood ratio test (GLRT), analyzed its asymptotically achievable error exponent rate for the type-I error probability and proved the asymptotic optimality of the test under the generalized Neyman-Pearson criterion. Specifically, the generalized Neyman-Pearson criterion requires one to consider tests that ensure exponential decay of the error probability under each hypothesis with a certain exponent rate for all possible generating distributions. The optimality under the generalized Neyman-Pearson criterion implies that a test has smallest false reject probability under each hypothesis for any generating distributions under the above condition. Note that Gutman's GLRT for binary classification generalizes the Hoeffding's test~\cite{hoeffding1965} for the binary hypothesis testing problem when the generating distribution $P$ under hypothesis $\rmH_1$ is known while the generating distribution $Q$ under hypothesis $\rmH_2$ is unknown. Recently, Zhou, Tan and Motani~\cite{zhou2018binary} refined Gutman's results by explicitly deriving the tradeoff between the type-I and type-II error probabilities in the small deviations regime that provides approximations to the performance of an optimal test in the non-asymptotic setting, using finite-length testing and training sequences. The authors of \cite{gutman1989asymptotically,zhou2018binary} also considered the case of multiple hypotheses with a reject option, where the reject option claims that the testing sequence is not matched to any generating distributions of training sequences. As~\cite{gutman1989asymptotically,zhou2018binary} assumed that at least one of the multiple hypotheses is valid, the reject option indicates that further investigation is required to make a reliable decision. In~\cite{gutman1989asymptotically,zhou2018binary}, the reject option is critical to establishing the optimality of Gutman's test. 

Motivated by studies of privacy of anonymized databases and applications in accurate user targeting for advertisement recommender systems, Unnikrishnan~\cite[Section IV]{unnikrishnan2015asymptotically} generalized Gutman's framework to statistical sequence matching in a pair of databases of sequences. This problem strictly generalizes the statistical classification problem from testing a single  sequence for match to a distribution to finding matching pairs of distributions based on realizations of multiple sample sequences in a pair of databases. Specifically, in statistical sequence matching, there are two databases of sequences, where each sequence of each database is generated i.i.d. from an unknown distribution. If two sequences, each from one of the two databases, are generated from the same distribution, the two sequences are called a matched pair; otherwise, the two sequences are called an unmatched pair. The task is to design a test, without knowledge of the generating distributions of any of the sequences, to correctly identify all matched pairs of sequences between the two databases or to claim a reject option, which implies that no matched pair of sequences between the two databases is found. 

Unnikrishnan assumed that the number of matched pairs (matches) is positive and known. In this case, under each hypothesis, there are two performance criteria: the mismatch probability and the false reject probability. The mismatch probability quantifies the probability that an incorrect hypothesis other than the reject option is claimed while the false reject probability quantities the probability the test mistakenly declares that there are no matching sequences. In the formulation of \cite{unnikrishnan2015asymptotically}, any decision of reject is a false reject since it is assumed that the number of matched pairs is non-zero. Unnikrishnan proposed a threshold-based GLRT, analyzed its achievable mismatch probability and proved the optimality of the test when the length of each sequence tends to infinity, analogous to Gutman's results for statistical classification. However, the false reject probability was not explicitly bounded and the tradeoff between the two types of error probabilities was not studied.

In this paper, we refine Unnikrishnan's result ~\cite[Section IV]{unnikrishnan2015asymptotically} in two ways. Firstly, we explicitly derive the tradeoff between the mismatch and false reject probabilities in both large and small deviations regimes. Secondly, we generalize our achievability results to the case where the true number of matches, which could be zero, is unknown and derive performance tradeoffs among the probabilities of mismatch, the false reject and false alarm  for a variant of Unnikrishnan's test. Our main contributions are summarized as follows.

\subsection{Main Contributions}
When the number of matched pairs is known, we characterize the optimal tradeoff between the mismatch and false reject probabilities in both large deviations~\cite{dembo2009large} and small deviations regimes~\cite{polyanskiy2010finite,hayashi2009information}. Specifically, in both regimes, the mismatch probability decays exponentially fast while the false reject probability behaves differently over the two regimes. In the large deviations regime, the false reject probability also decays exponentially and thus the exponential tradeoff of mismatch and false reject probabilities are characterized. In contrast, in the small deviations regime, the false reject probability is upper bounded by a non-vanishing constant. For both regimes, we prove optimality of Unnikrishnan's GLRT test under the generalized Neyman-Pearson criterion for all possible tuples of generating distributions of the sequences. When specialized to multiple classification, our large deviations result specializes to the corresponding results of Gutman~\cite[Theorem 3]{gutman1989asymptotically} while our small deviations result refines \cite[Theorem 4.1]{zhou2018binary} by removing the unnecessary condition on the unknown tuple of generating distributions of training sequences.

We first consider the large deviations regime and establish the first order expansion (better known as the error exponent rates) when the length of each sequence tends to infinity. In fact, if one derives a second-order expansion in the large deviations regime~\cite{altug14a,altug14b,erseghe2016tit} for the present problem, the corresponding result provides a good approximation to the finite sample size setting. However, the derivation is more complicated and we leave it for future work. As shown by Unnikrishnan~\cite[Theorem 4.1]{unnikrishnan2015asymptotically}, the mismatch probability decays exponentially fast with the speed proportional to the threshold $\lambda$ of the test (cf. \eqref{test:unn}). Simultaneously, we completely characterize the achievable false reject exponent rate as a function of the threshold $\lambda$ and the evaluated tuples of unknown generating distributions. Thus, we reveal the asymptotic rate tradeoff between the mismatch and false reject probabilities. If the mismatch exponent rate $\lambda$ increases, the false reject exponent rate decreases; and if $\lambda$ is larger than a threshold value, dependent on the unknown generating distributions, the false reject exponent rate equals zero. Finally, using both mathematical analysis and numerical experiments, we compare the performance of Unnikrishnan's test and a simple test in Algorithm \ref{simpletest} that repeatedly applies Gutman's statistical classification test to find matching sequences. Our results strengthen the results of comparison in~\cite[Section IV. A]{unnikrishnan2015asymptotically} by explicitly characterizing the false reject exponent rates of both tests.

To provide more accurate insights on the achievable performance of an optimal test in the finite sample size setting, we also derive the second-order expansion in the small deviations regime. Specifically, we derive a non-asymptotic upper bound on the false reject probability and apply the multi-variate Berry-Esseen theorem to yield a bound that involves the complementary cumulative distribution function (cdf) of a multivariate Gaussian random vector. Furthermore, we show that the false reject probability is upper bounded by a constant if the mismatch exponent rate $\lambda$ is a particular function of the unknown generating distributions and the sample size $n$. In particular, as the sample size tends to infinity, the value of the particular function tends to the threshold value of $\lambda$, above which the false reject exponent is exactly zero in the large deviations regime. Therefore, the small deviations regime refines the large deviations regime for the special case of zero false reject exponent, which leads to a constant false reject probability. We illustrate our small deviations results and compare the performance of Unnikrishnan's test and the simple Gutman's test.

Finally, we generalize the achievability results to the case where the number of matched pairs is unknown. In this case, we need to consider the additional error probability that bounds the probability of the error event when the number of matched pairs is exactly zero. This is because, when the number of matched pairs is unknown, it can be either zero or strictly positive. When the number of matched pairs is zero, any decision of the reject option is an error, which we call the false alarm, where the test mistakenly claims that some pairs of matched sequences are found. In contrast, when the number of matched pairs is strictly positive, we have the same mismatch and false reject probabilities in the case of a known number of matches. For this case, we construct a test in two steps, where the first step estimates the number of matches and the second step applies Unnikrishnan's GLRT test when the estimated number of matches is positive. We analyze the achievable performance of the proposed test in both large and small deviations regimes and thus elucidate the tradeoff among the probabilities of mismatch, false reject and false alarm. When specialized to multiple classification, our results generalize the corresponding results in~\cite[Theorem 3]{gutman1989asymptotically} and \cite[Theorem 4.1]{zhou2018binary} to the more practical case where the testing sequence is allowed to be generated from a distribution different from the generating distribution of any training sequence.

\subsection{Other Related Works}
We recall other related works on statistical classification and sequence matching. Merhav and Ziv~\cite{merhav1991bayesian} studied the Bayesian setting of statistical classification and derived the achievable error exponent rate when the lengths of testing and training sequences tend to infinity. The results of \cite{merhav1991bayesian} were recently refined by Saito and Matsushima~\cite{shota2020beyasian,shota2021beyasian} who derived the corresponding result in the finite blocklength setting using Bayes codes. Unnikrishnan and Huang~\cite{unnikrishnan2016weak} proposed the weak convergence analysis and provided tight bounds for error probabilities of statistical classification. Hsu and Wang~\cite{hsu2020binary} generalized Gutman's result~\cite{gutman1989asymptotically} to the mismatched case where under the true hypothesis, the generating distributions of the training sequence and the testing sequence deviate slightly and explicitly characterized the impact of the distribution deviation on achievable error exponents. Haghifam, Tan and Khisti~\cite{mahdi2021sequential} generalized the achievability part of Gutman's result~\cite{gutman1989asymptotically} to the semi-sequential setting where the testing sequence is observed sequentially in a streaming manner. The results for the binary case of \cite{mahdi2021sequential} were recently refined by Hsu, Li and Wang~\cite{Ihwang2022sequential} who considered two fully sequential settings and derived tight results with matching achievability and converse bounds. Gutman's results have also been generalized to large alphabet~\cite{kelly2013}, distributed detection~\cite{he2019distributed}, outlier hypothesis testing~\cite{li2014,zhou2022second} and two-phase classification~\cite{zhou2022twophase}.

When one database of sequences is replaced by a database of known distributions, the problem of statistical sequence matching reduces to the problem of matching sequences to known generating distributions~\cite[Section III]{unnikrishnan2015asymptotically}. Unnikrishnan fully characterized the exponent rate tradeoff between the mismatch and false reject probabilities when the length of each sequence tends to infinity. The special case when both databases contain the same number of sequences was studied in \cite[Chapter 10]{ahlswede1987} and \cite{ahlswede2006}.

\subsection{Organization for the Rest of the Paper}

In Section \ref{sec:pf}, we set up the notation, formulate the problem of statistical sequence matching and recall Unnikrishnan's GLRT test and asymptotic results. Subsequently, in Sections \ref{sec:known} and \ref{sec:unknown}, we present and discuss our results for the case of known and unknown number of matches, respectively. The proofs of our results are presented in Sections \ref{sec:proofs:known} and \ref{sec:proofs:unknown}. Finally, in Section \ref{sec:conc}, we conclude the paper and discuss future directions. For smooth presentation of main results, the proofs of supporting lemmas are deferred to appendices.

\section{Problem Formulation and Existing Results}
\label{sec:pf}

\subsection*{Notation}
Random variables and their realizations are in upper case (e.g.,  $X$) and lower case (e.g.,  $x$), respectively. All sets are denoted in calligraphic font (e.g.,  $\mathcal{X}$). We use $\bbR$, $\bbR_+$, and $\bbN$ to denote the set of real numbers, non-negative real numbers, and  natural numbers respectively. Given any number $a\in\bbN$, we use $[a]$ to denote the collection of natural numbers between $1$ and $a$. We use superscripts to denote the length of vectors, e.g., $X^n:=(X_1,\ldots,X_n)$. All logarithms are base $e$. The set of all probability distributions on a finite set $\calX$ is denoted as $\calP(\calX)$. Notation concerning the method of types follows~\cite{ZhouBook}. Given a vector $x^n = (x_1,x_2,\ldots,x_n) \in\calX^n$, the {\em type} or {\em empirical distribution} is denoted as $\hat{T}_{x^n}(a)=\frac{1}{n}\sum_{i=1}^n \mathbbm{1}\{x_i=a\},a\in\calX$. The set of types formed from length-$n$ sequences with alphabet $\calX$ is denoted as $\calP_{n}(\calX)$. Given $P\in\calP_{n}(\calX)$, the set of all sequences of length $n$ with type $P$, the {\em type class}, is denoted as $\calT^n_P$. For any $k\in\bbN$ and $(x_1,\ldots,x_k)\in\bbR^k$, let $\rvbPh_k(x_1,\ldots,x_k;\bmu,\bSigma)$ be the multivariate generalization of the Gaussian cdf, i.e.,
\begin{align}
\rvbPh_k(x_1,\ldots,x_k;\bmu,\bSigma)
&:=\int_{-\infty}^{x_1}\ldots\int_{-\infty}^{x_k}\calN(\bx;\bmu;\bSigma)\rmd \bx\label{def:kQ},
\end{align}
where $\calN(\bx; \bmu;\bSigma)$ is the probability density function of a $k$-variate Gaussian with mean vector $\bmu$ and non-singular covariance matrix $\bSigma$~\cite{Tanbook}. When $k=1$, we use $\Phi(x)$ to denote the cdf of a normal random variable with mean zero and variance one. Finally, for any $k\in\bbN$, we use $\mathbf{1}_k$ to denote a row vector of length $k$ with all elements being one and we use $\mathbf{0}_k$ similarly.

\subsection{Case of Known Number of Matches}
\label{sec:pf:known}
We first consider the case where the number of matched pairs of sequences across the two databases are known. Fix integers $(M_1,M_2,K,N,n)\in\bbN^5$ such that $M_1\geq M_2\geq K$. Let $\bX^N:=\{X_1^N,\ldots,X_{M_1}^N\}$ denote a database of $M_1$ sequences, where for each $i\in[M_1]$, $X_i^N$ is generated i.i.d. from an unknown distribution $P_i$ defined on the finite alphabet $\calX$. Let $\bY^n:=\{Y_1^n,\ldots,Y_{M_2}^n\}$ be another database of $M_2$ sequences, where for each $i\in[M_2]$, $Y_i^n$ is generated i.i.d. from an unknown distribution $Q_i$ defined on $\calX$. Without loss of generality, we assume that $N=n\alpha$ for some $\alpha\in\bbR_+$. For simplicity, we assume that the length of each sequence in a database is exactly the same. This assumption can be relaxed to databases that have different sequence lengths by using the method of types~\cite{csiszar1998mt} in the same spirit of \cite{unnikrishnan2015asymptotically,gutman1989asymptotically}. The only required changes is to calculate the types from sequences of different lengths and use sequence length as a parameter when designing the test.

Following the setting of Unnikrishnan~\cite{unnikrishnan2015asymptotically}, we assume that each sequence in each database is generated by a distinct distribution, i.e., there is no redundant element in either the set $P^{M_1}:=\{P_1,\ldots,P_{M_1}\}$ or the set $Q^{M_2}:=\{Q_1,\ldots,Q_{M_2}\}$. Furthermore, assume that there are $K$ pairs of sequences that are generated from the same distribution, i.e., there exists two subsets $\calA\subseteq[M_1]$ and $\calB\subseteq[M_2]$ such that $|\calA|=|\calB|=K$ and there exists a unique mapping $\sigma:\calA\to\calB$ such that for each $i\in\calA$, $X_i^n$ and $Y_{\sigma(i)}^n$ are generated from the same distribution, i.e., $P_i=Q_{\sigma(i)}$.

As argued by Unnikrishnan~\cite{unnikrishnan2015asymptotically}, if the distinct distribution assumption is removed for $\bY^n$, the problem reduces to repeated version of the $M$-ary classification problem~\cite{gutman1989asymptotically,zhou2018binary}, which is solved by testing whether $Y_i^n$ is generated from the same distribution as some sequence in $\bX^N$ for each $i\in[M_2]$. Furthermore, if $K=M_2=1$, this problem is exactly the $M_1$-ary classification problem. Therefore, the statistical matching problem significantly generalizes the $M_1$-ary classification problem.

Note that there are in total $T_K:={M_1\choose K}{M_2\choose K}K!$ possibilities of $K$-matches between the two databases. To represent each possibility (hypothesis) explicitly, we need the following definitions. Let $\calC_1^K$ be the collection of all ${M_1\choose K}$ subsets of $[M_1]$ with size $K$ and let $\calC_2^K$ be the collection of all ${M_2\choose K}$ subsets of $[M_2]$ with size $K$. For each $\calA^K\times\calB^K\in\calC_1^K\times\calC_2^K$, let $\calC_\mathrm{Per}^K(\calA^K,\calB^K)$ denote the set of all $K!$ unique mappings from $\calA^K\to\calB^K$. For each $l\in[T_K]$, a hypothesis $\rmH_l^K$ corresponds to a triple $(\calA_l^K,\calB_l^K,\sigma_l^K)\in\calC_1^K\times\calC_2^K\times\calC_\mathrm{Per}^K(\calA_l^K,\calB_l^K)$ such that for each $i\in\calA_l^K$, the sequences $X_i^N$ and $Y_{\sigma_l^K(i)}^n$ are generated from the same distribution $P_i=Q_{\sigma_l^K(i)}$.  Following~\cite{unnikrishnan2015asymptotically}, we also define $\calM_l^K:=\{(i,\sigma_l^K(i))\}_{i\in\calA_l^K}$ as the set of indices of matched sequences across the two databases, which is represented by a bipartie graph with weight $K$ between two sets of vertices with sizes $M_1$ and $M_2$, respectively.  Note that $(\calA_l^K,\calB_l^K,\sigma_l^K)$ is equivalent to $\calM_l^K$. We find it convenient to use the notation $(\calA_l^K,\calB_l^K,\sigma_l^K)$ in the presentation and derivation for some of our theoretical results. Furthermore, $\calA_l^K=\{i\in[M_1]:\exists~j\in[M_2]\mathrm{~s.t.~}(i,j)\in\calM_l^K\}$ and $\calB_l^K=\{j\in[M_2]:\exists~i\in[M_1]\mathrm{~s.t.~}(i,j)\in\calM_l^K\}$.

Our task is to design a test $\phi_{n,N}:\calX^{M_1N}\times\calY^{M_2n}\to\{\{\rmH_l^K\}_{l\in[T_K]},\rmH_{\rmr}\}$ to correctly identify the unique $K$-match between the two databases with a no match decision $\rmH_\rmr$ that calls for further investigation, i.e.,
\begin{itemize}
\item $\rmH_l^K$ with $l\in[T_K]$: the sequence $X_i^N$ and $Y_{\sigma_l^K(i)}^n$ are generated from the same distribution for each $i\in\calA_l^K$.
\item $\rmH_\rmr$: there is no $K$-match between the two databases $\bX^N$ and $\bY^n$.
\end{itemize} 
Under hypothesis $\rmH_\rmr$, $K=0$, $T_K=1$ and thus $(\calA_l^K,\calB_l^K,\calM_l^K)$ are all empty sets for $l\in[T_K]$. We remark that the adoption of the no match decision $\rmH_\rmr$ is consistent with the literature on statistical classification~\cite{gutman1989asymptotically,zhou2018binary} and statistical matching with the reject option~\cite{unnikrishnan2015asymptotically}. Furthermore, since the generating distributions are unknown, one would like to a design a universal test with good performance under any tuples of generating distributions. If the null hypothesis is not introduced, one aims to minimize the maximal mismatch probability under all hypotheses for all possible tuples of distributions. The resulting maximal mismatch probability can be very large, even close to one, since the worst case of generating distributions dominates. By introducing a null hypothesis, when it is hard to make a reliable decision under a particular tuple of generating distributions, the test can declare the null hypothesis and requires further investigation to make sure that the mismatch probability is small. In particular, as we shall show in our main results, below, allowing the additional null hypothesis enables us to derive tight results for optimal tests in both the large and small deviations regimes under the generalized Neyman-Pearson criterion~\cite{gutman1989asymptotically,zhou2018binary}.

For each $l\in[T_K]$, define the following set of generating distributions
\begin{align}
\calP_l^K:=
\big\{(\tilP^{M_1},\tilQ^{M_2})\in\calP(\calX)^{M_1+M_2}:~P_i=Q_j~\mathrm{iff}~(i,j)\in\calM_l^K\big\}\label{def:calp:lk}.
\end{align}
Note that $\calP_l^K$ denotes all possible tuples of generating distributions under hypothesis $\rmH_l^K$. To evaluate the performance of a test $\phi_{n,N}$, for each $l\in[T_K]$, under hypothesis $\rmH_l^K$ and any tuple of generating distributions $(P^{M_1},Q^{M_2})\in\calP_l^K$, we consider the following two probabilities:
\begin{align}
\beta(\phi_{n,N}|P^{M_1},Q^{M_2})&:=\Pr\big\{\phi_{n,N}(\bX^N,\bY^n)\notin\{\rmH_l^K,\rmH_\rmr\}\big\},\label{def:betal}\\
\zeta(\phi_{n,N}|P^{M_1},Q^{M_2})&:=\Pr\big\{\phi_{n,N}(\bX^N,\bY^n)=\rmH_\rmr\big\}\label{def:zetal}.
\end{align}
Note that $\beta(\phi_{n,N}|P^{M_1},Q^{M_2})$ is the mismatch probability, corresponding to the probability that an incorrect $K$-match is decided under hypothesis $\rmH_l^K$ when the generating distributions are $(P^{M_1},Q^{M_2})\in\calP_l^K$, while $\zeta(\phi_{n,N}|P^{M_1},Q^{M_2})$ is the false reject probability, corresponding to the probability that a no-match decision is output under hypothesis $\rmH_l^K$ with the same generating distributions. We remark that in the case of known number of matches, we only consider hypotheses that identify all $K$ matches simultaneously. The partial match case that identifies less than $K$ pairs of matched sequences is not considered. It would be worthwhile to generalize our analyses to cover partial match in future work.

Note that the generating distributions $(P^{M_1},Q^{M_2})$ are \emph{unknown} when we design and run the test $\phi_{n,N}$. However, we need the knowledge of these generating distributions to evaluate the performance of the test $\phi_{n,N}$. Ideally, we would like the test to be universal so that regardless of distributions $(P^{M_1},Q^{M_2})$, under each hypothesis, both mismatch and false reject probabilities are extremely small. Towards this goal, for each $l\in[T_K]$, given any $(P^{M_1},Q^{M_2})\in\calP_l^K$ and non-negative target false reject exponent $E\in\bbR_+$, for any sample sizes $(n,N)$, in the large deviations regime, we aim to characterize the first order expansion of the universal mismatch exponent $\lambda^*_{\mathrm{LD}}(n,N,E|P^{M_1},Q^{M_2}):=-\frac{1}{n}\log\beta_{\mathrm{LD}}^*(n,N,E|P^{M_1},Q^{M_2})$, where
\begin{align}
\beta_{\mathrm{LD}}^*(n,N,E|P^{M_1},Q^{M_2})
&:=\inf_{\substack{\phi_{n,N}:\\\zeta(\phi_{n,N}|P^{M_1},Q^{M_2})\leq \exp(-nE)}}\max_{l\in[T_K]}\sup_{(\tilP^{M_1},\tilQ^{M_2})\in\calP_l^K}\beta(\phi_{n,N}|\tilP^{M_1},\tilQ^{M_2})
\label{def:e*:ld}.
\end{align}
Note that in \eqref{def:e*:ld}, we take an inner supremum over $\sup_{(\tilP^{M_1},\tilQ^{M_2})\in\calP_l^K}$ to consider the worst case mismatch probability over all possible tuples of generating distributions under a particular hypothesis and we take another maximum over $l\in[T_K]$ to consider the maximal mismatch probability under all hypotheses. Thus, $\beta_{\mathrm{LD}}^*(n,N,E|P^{M_1},Q^{M_2})$ denotes the maximal universal mismatch probability under any hypothesis over all possible tuples of generating distributions of any test that ensures exponential decay of the false reject probability with rate of at least $E$ under hypothesis $\rmH_l^K$ with a particular tuple of generating distributions $(P^{M_1},Q^{M_2})\in\calP_l^K$. Equivalently, given any target mismatch exponent $\lambda\in\bbR_+$, one can  characterize the false reject exponent $E^*_{\mathrm{LD}}(n,N,\lambda|P^{M_1},Q^{M_2}):=-\frac{1}{n}\log\zeta_{\mathrm{LD}}^*(n,N,\lambda|P^{M_1},Q^{M_2})$, where
\begin{align}
\zeta_{\mathrm{LD}}^*(n,N,E|P^{M_1},Q^{M_2})
&:=\inf_{\substack{\phi_{n,N}:~\forall~l\in[T_K]\mathrm{~and~}(\tilP^{M_1},\tilQ^{M_2})\in\calP_l^K,\\\beta(\phi_{n,N}|\tilP^{M_1},\tilQ^{M_2})\leq \exp(-n\lambda)}}\zeta(\phi_{n,N}|P^{M_1},Q^{M_2})
\label{def:e*:ld:2}.
\end{align}
It follows from \eqref{def:e*:ld} and \eqref{def:e*:ld:2} that
\begin{align}
\lambda^*_{\mathrm{LD}}(n,N,E|P^{M_1},Q^{M_2})
=\sup\big\{\lambda\in\bbR_+:~E^*_{\mathrm{LD}}(n,N,\lambda|P^{M_1},Q^{M_2})\geq E\big\}\label{eqity:fr:mismatch}.
\end{align}
For ease of notation, we explicitly bound the false reject exponent $E^*_{\mathrm{LD}}(n,N,\lambda|P^{M_1},Q^{M_2})$ in Theorem \ref{ld:known} and the bound on $\lambda^*_{\mathrm{LD}}(n,N,E|P^{M_1},Q^{M_2})$ follows from \eqref{eqity:fr:mismatch}. In the achievability analysis, we use a distribution free test and prove its performance under any tuple of generating distributions $(P^{M_1},Q^{M_2})\in\calP_l^K$. In the converse part, we adopt the generalized Neyman-Pearson criterion of Gutman~\cite{gutman1989asymptotically}, which results in an universal performance constraint on the maximal mismatch probability over all possible tuples of generating distributions, i.e., the constraint inside the infimum of \eqref{def:e*:ld:2}. Since the generating distributions of observed sequences are unknown, the above formulation puts an universal constraint on the mismatch probability under each non-null hypothesis and a non-universal constraint on false reject probability under the null hypothesis. Such a setting is known as partial universal and has been considered in the literature~\cite{gutman1989asymptotically,zhou2018binary,zhou2022second,unnikrishnan2015asymptotically,he2019distributed}.

We remark that the non-universal constraint on the false reject probability can be easily generalized to be hold for a set of unknown generating distributions and the fundamental limit follows from \eqref{def:e*:ld} and \eqref{def:e*:ld:2}. Specifically, let $\calQ\subseteq\calP^{M_1+M_2}$ be a set of generating distributions. One can impose the constraint inside the infimum of \eqref{def:e*:ld} for all distributions $(P^{M_1},Q^{M_2})\in\calQ$ and obtain the corresponding fundamental limit $\beta_{\mathrm{LD}}^*(n,N,E|\calQ)$. Correspondingly, the equivalent fundamental limit analogous to \eqref{def:e*:ld:2} is $\zeta_{\mathrm{LD}}^*(n,N,E|\calQ)$ with an additional supremum over $(P^{M_1},Q^{M_2})\in\calQ$ required for \eqref{def:e*:ld:2}. It follows that 
\begin{align}
\beta_{\mathrm{LD}}^*(n,N,E|\calQ)&=\sup_{(P^{M_1},Q^{M_2})\in\calQ}\beta_{\mathrm{LD}}^*(n,N,E|P^{M_1},Q^{M_2}),\\
\zeta_{\mathrm{LD}}^*(n,N,E|\calQ)&=\sup_{(P^{M_1},Q^{M_2})\in\calQ}\zeta_{\mathrm{LD}}^*(n,N,E|P^{M_1},Q^{M_2}).
\end{align}
Thus, it suffices to study the partial universal fundamental limit in \eqref{def:e*:ld} and \eqref{def:e*:ld:2}.

In the small deviations regime, given any positive real number $\varepsilon\in(0,1)$, we aim to characterize the second-order expansion of the universal mismatch exponent $\lambda^*_{\mathrm{SD}}(n,N,\varepsilon|P^{M_1},Q^{M_2}):=-\frac{1}{n}\log \beta_{\rm{SD}}^*(n,N,\varepsilon|P^{M_1},Q^{M_2})$, where
\begin{align}
\beta_{\rm{SD}}^*(n,N,\varepsilon|P^{M_1},Q^{M_2})
:=\inf_{\substack{\phi_{n,N}:\\\zeta(\phi_{n,N}|P^{M_1},Q^{M_2})\leq \varepsilon}}\max_{l\in[T_K]}\sup_{(\tilP^{M_1},\tilQ^{M_2})\in\calP_l^K}\beta(\phi_{n,N}|\tilP^{M_1},\tilQ^{M_2}).
\label{def:l*:sd}
\end{align}
Similarly, $\beta_{\rm{SD}}^*(n,N,\varepsilon|P^{M_1},Q^{M_2})$ denotes the maximal universal mismatch probability under all tuples of generating distributions $(\tilP^{M_1},\tilQ^{M_2})\in\calP_l^K$ subject to a constant false reject probability $\varepsilon$ under hypothesis $\rmH_l^K$ for a particular tuple of generating distributions $(P^{M_1},Q^{M_2})\in\calP_l^K$.

Our first contribution is to characterize the first-order expansion of $\lambda_{\rm{LD}}^*(n,N,E|P^{M_1},Q^{M_2})$ and the second-order expansion of $\lambda_{\rm{SD}}^*(n,N,\varepsilon|P^{M_1},Q^{M_2})$.

\subsection{Case of Unknown Number of Matches}
\label{sec:pf:unknown}

A more practical setting is where the number of matches $K$ is \emph{unknown} a priori. This setting is more challenging since prior information on the number of matches is seldom available. In this case, the number of matches must be estimated and the pairs of matched sequences must be identified. 
To account for the possibility that no match between two databases exists, we define the null hypothesis, denoted as the reject hypothesis $\rmH_\rmr$, which corresponds to $K=0$. For each $K\in[M_2]$, we use $\calH_K$ to denote the set of all $T_K$ hypotheses when the number of matches is $K$. Thus, when the number of matches is unknown, the total number of hypotheses increases to $T+1$ where $T:=\sum_{K=1}^{M_2}T_K$. 

Correspondingly, our task is to design a test $\phi_{n,N}:\calX^{M_1N}\times\calY^{M_2n}\to\{\{\calH_K\}_{K\in[M_2]},\rmH_{\rmr}\}$ to correctly identify among the following hypotheses:
\begin{itemize}
\item $\rmH_l^K\in\calH_K$ where $K\in[M_2]$ and 
$l\in[T_K]$: the sequence $X_i^N$ and $Y_{\sigma_l^K}(i)$ are generated from the same distribution for each $i\in\calA_l^K\in\calC_1^K$.
\item $\rmH_\rmr$: there is no matched sequences between the two databases $\bX^N$ and $\bY^n$.
\end{itemize} 

Analogously to \eqref{def:calp:lk}, define the following set of possible distributions under the null hypothesis:
\begin{align}
\calP_0:=\big\{(\tilP^{M_1},\tilQ^{M_2})\in\calP(\calX)^{M_1+M_2}:~P_i\neq Q_j~\forall~(i,j)\in[M_1]\times[M_2]\big\}
\label{def:calP:r}.
\end{align}
To evaluate the performance of the test$\phi_{n,N}$, for each $K\in[M_2]$ and $l\in[T_K]$, under the non-null hypothesis $\rmH_l^K$ and generating distributions $(P^{M_1},Q^{M_2})\in\calP_l^K$, we consider the mismatch probability $\beta(\phi_{n,N}|P^{M_1},Q^{M_2})$ and the false reject probability $\zeta(\phi_{n,N}|P^{M_1},Q^{M_2})$ as in the case of known number of matches. Furthermore, under the null hypothesis $\rmH_\rmr$, for any tuple of generating distributions $(P^{M_1},Q^{M_2})\in\calP_0$, we also need the following false alarm probability:
\begin{align}
\eta(\phi_{n,N}|P^{M_1},Q^{M_2}):=\Pr\big\{\phi_{n,N}(\bX^N,\bY^n)\neq \rmH_\rmr\big\}\label{def:etar}.
\end{align}
Note that $\eta(\phi_{n,N}|P^{M_1},Q^{M_2})$ quantifies the probability that the test declares that there exists a matched pair of sequences between the two databases when there is none.

When the number of matches is unknown, we need to study the tradeoff among the probabilities of mismatch, false reject and false alarm. Analogous to the case where the number of matches is known, we derive achievability bounds for all three probabilities and discover the tradeoff among them for a variant of Unnikrishnan's test.

\subsection{Unnikrishnan's Test and Result}

We next recall Unnikrishnan's test for the case when the number of matches is \emph{known}. Recall that $\alpha=\frac{N}{n}$ is defined as the ratio between the lengths of sequences of the two databases.

To present the test, we need the following definitions. Given any distributions $(P,Q)\in\calP^2(\calX)$, for any positive constant $a\in\bbR_+$,
define the following generalized Jensen-Shannon divergence~\cite[Eq. (2.3)]{zhou2018binary} for a positive constant $\alpha\in\bbR_+$:
\begin{align}
\mathrm{GJS}(P,Q,\alpha):=\alpha D\bigg(P\bigg\|\frac{\alpha P+Q}{1+\alpha}\bigg)+D\bigg(Q\bigg\|\frac{\alpha P+Q}{1+\alpha}\bigg)\label{def:GJS}.
\end{align}
Note that $\mathrm{GJS}(P,Q,\alpha)$ measures the distance between the distributions $P$ and $Q$ via a linear combination of KL divergences between the distribution $P/Q$ and the convex combination $\frac{\alpha P+Q}{1+\alpha}$ of distributions $P$ and $Q$. When $\alpha=1$, $\mathrm{GJS}(P,Q,1)$ is twice of the Jensen-Shannon divergence~\cite[Eq. (4.1)]{lin1991divergence} when the weights of two distributions are the same. When $\alpha\to\infty$, $\mathrm{GJS}(P,Q,\alpha)\to D(Q\|P)$. The definition of the generalized Jensen-Shannon divergence in \eqref{def:GJS} dates back to Gutman~\cite{gutman1989asymptotically} in the asymptotic studies of statistical classification. In particular, $\mathrm{GJS}(P,Q,\alpha)$ is the first-order expansion of the small deviations regime for the optimal test in binary classification where one needs to determine whether a testing sequence $Y^n$ is generated i.i.d. from an unknown distributions $P$ or $Q$ while training sequences $(X_1^N,X_2^N)$ generated i.i.d. from $P$ and $Q$, respectively, are available. The GJS function has also been used in other studies including~\cite{
li2014,unnikrishnan2015asymptotically,zhou2018binary,he2019distributed,hsu2020binary,mahdi2021sequential,zhou2022second,zhou2022twophase}.

For any two sets of distributions $P^{M_1}=(P_1,\ldots,P_{M_1})$ and $Q^{M_2}=(Q_1,\ldots,Q_{M_2})$, for each $l\in[T_K]$, let
\begin{align}
\rmG_l^K(P^{M_1},Q^{M_2},\alpha)
&:=\sum_{i\in\calA_l^K}\mathrm{GJS}(P_i,Q_{\sigma_l^K(i)},\alpha)=\sum_{j\in\calB_l^K}\mathrm{GJS}(P_{(\sigma_l^K)^{-1}(j)},Q_j,\alpha)=\sum_{(i,j)\in\calM_l^K}\mathrm{GJS}(P_i,Q_j,\alpha)
\label{def:Gl},
\end{align}
where $(\calA_l^K,\calB_l^K,\sigma_l^K)$ specify the matched pairs of sequences under hypothesis $\rmH_l^K$ as explained in Section \ref{sec:pf:unknown}. Specifically, $\calA_l^K\in\calC_1^K$ identifies the set of indices of matched sequences of the database $\bX^N$, $\calB_l^K\in\calC_2^K$ identifies the set of indices of matched sequences of the database $\bY^n$ and $\sigma_l^K$ identifies the unique match among $\{x_i^n\}_{i\in\calA_l^K}$ and $\{y_j^n\}_{j\in\calB_l^K}$.  We remark that the three equivalent definitions of $\rmG_l^K(P^{M_1},Q^{M_2},\alpha)$ are related via the definitions of $(\calA_l^K,\calB_l^K,\sigma_l^K)$ and $\calM_l^K$. We find it convenient to use these three forms in different parts of our analyses.

Consider any realizations of two databases $\bx^N=\{x_1^N,\ldots,x_{M_1}^N\}$ and $\by^n=\{y_1^n,\ldots,y_{M_2}^n\}$. Let $\hatT_{\bx^N}:=(\hatT_{x_1^N},\ldots,\hatT_{x_{M_1}^n})$ and let $\hatT_{\by^n}:=(\hatT_{y_1^n},\ldots,\hatT_{y_{M_2}^n})$ be the collection of empirical distributions. For each $t\in[T_K]$, define the scoring function 
\begin{align}
\rmS_t^K(\bx^N,\by^n):=\rmG_t^K(\hatT_{\bx^N},\hatT_{\by^n},\alpha)\label{def:Sl}.
\end{align}
Furthermore, let
\begin{align}
l_K^*(\bx^N,\by^n)&:=\argmin_{t\in[T_K]}\rmS_t^K(\bx^N,\by^n),\label{def:l4unn}\\
h_K(\bx^N,y^n)&:=\min_{t\in[T_K]:l\neq l_K^*(\bx^N,\by^n)}\rmS_t^K(\bx^N,\by^n)\label{def:h4unn},
\end{align}
denote the index of the hypothesis whose scoring function is minimal and the value of the second minimal scoring function, respectively. Let $\lambda\in\bbR_+$ be any positive real number. For each $n\in\bbN$, define 
\begin{align}
\lambda_n:=\lambda+\frac{K|\calX|\log ((1+\alpha)n+1)}{n}\label{def:lambdan}.
\end{align}

Unnikrishnan's test operates as follows:
\begin{align}
\label{test:unn}
\phi_{n,N}^{\rmU,K}(\bx^N,\by^n,\lambda)
&=\left\{
\begin{array}{ll}
\rmH_l^K&\mathrm{if~}l_K^*(\bx^N,\by^n)=l\mathrm{~and~}h_K(\bx^N,\by^n)>\lambda_n,\\
\rmH_\rmr&\mathrm{if~}h_K(\bx^N,\by^n)\leq\lambda_n.
\end{array}
\right.
\end{align}
Note that when $M_2=K=1$, the test in \eqref{test:unn} reduces to a test for statistical classification of $M_1$ hypotheses, which classifies whether a test sequence $y^n$ is generated from the same distribution as one of the training sequence in $\bx^N$, studied by Gutman~\cite{gutman1989asymptotically}. We use $\phi_{n,N}^{\rmU,K=1}(\bx^N,y^n,\lambda)$ to denote the test for this special case.

The following performance guarantee was rephrased from \cite[Lemma 5 and Theorem 4.1]{unnikrishnan2015asymptotically}.
\begin{theorem}
\label{th:unn}
For each $l\in[T_K]$, under the hypothesis $\rmH_l^K$, Unnikrishnan's test is first-order asymptotically optimal by achieving $\beta_{\rm{SD}}^*(n,N,\varepsilon|P^{M_1},Q^{M_2})$ as $n\to\infty$ for any $\varepsilon\in(0,1)$ and any tuple of generating distributions $(P^{M_1},Q^{M_2})\in\calP_l^K$.
\end{theorem}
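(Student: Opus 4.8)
The plan is to prove the equivalent statement that $\lim_{n\to\infty}-\frac{1}{n}\log\beta^*_{\mathrm{SD}}(n,N,\varepsilon|P^{M_1},Q^{M_2})=\lambda^\star$, where $\lambda^\star:=\min_{t\in[T_K]\setminus\{l\}}\rmG_t^K(P^{M_1},Q^{M_2},\alpha)$, and that Unnikrishnan's test $\phi_{n,N}^{\rmU,K}(\cdot,\cdot,\lambda^{(n)})$ with a threshold sequence $\lambda^{(n)}\uparrow\lambda^\star$ attains this exponent while keeping $\zeta\le\varepsilon$ for all large $n$; together these give first-order optimality. I would first record that $\lambda^\star>0$: for each $t\neq l$, $\calM_t^K\neq\calM_l^K$ forces a pair $(i,j)\in\calM_t^K\setminus\calM_l^K$, and then $P_i\neq Q_j$ by the definition of $\calP_l^K$ in \eqref{def:calp:lk}, so $\mathrm{GJS}(P_i,Q_j,\alpha)>0$.

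For achievability I would bound the two error probabilities of $\phi_{n,N}^{\rmU,K}$ separately. If the test outputs $\rmH_{l'}^K$ with $l'\neq l$, then $l$ is not the minimizing index, so its score $\rmS_l^K=\rmG_l^K(\hatT_{\bX^N},\hatT_{\bY^n},\alpha)$ is at least the runner-up score $h_K$, which exceeds $\lambda_n$; hence the mismatch event implies $\rmS_l^K>\lambda_n$. Under any $(\tilP^{M_1},\tilQ^{M_2})\in\calP_l^K$ the $K$ summands of $\rmS_l^K$ are GJS functionals of the $2K$ \emph{independent} sequences $\{X_i^N,Y_{\sigma_l^K(i)}^n\}_{i\in\calA_l^K}$, each pair drawn from a common distribution; combining the method of types with the identity $\alpha D(P'\|\mu)+D(Q'\|\mu)=(1+\alpha)D\big(\frac{\alpha P'+Q'}{1+\alpha}\,\big\|\,\mu\big)+\mathrm{GJS}(P',Q',\alpha)$ gives $\Pr[\mathrm{GJS}(\hatT_{X^N},\hatT_{Y^n},\alpha)\ge\gamma]\le\poly(n)\,e^{-n\gamma}$ for same-distribution pairs, and summing over joint types of the $K$ pairs yields $\Pr[\rmS_l^K>\lambda_n]\le\poly(n)\,e^{-n\lambda_n}$ uniformly over $l$ and over $\calP_l^K$; the polynomial is cancelled up to a vanishing term by the $\frac{K|\calX|\log((1+\alpha)n+1)}{n}$ correction in \eqref{def:lambdan}, so $\beta(\phi_{n,N}^{\rmU,K})\le e^{-n\lambda^{(n)}+o(n)}$. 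For the false reject probability, evaluated under the true $(P^{M_1},Q^{M_2})\in\calP_l^K$, the strong law gives $\rmS_l^K\to\rmG_l^K(P^{M_1},Q^{M_2},\alpha)=0$ and $\rmS_t^K\to\rmG_t^K(P^{M_1},Q^{M_2},\alpha)>0$ for $t\neq l$, so $h_K\to\lambda^\star$ almost surely; choosing e.g. $\lambda^{(n)}=\lambda^\star-n^{-1/4}$ makes $\lambda_n<\lambda^\star$ eventually and $\zeta=\Pr[h_K\le\lambda_n]\to0<\varepsilon$, while $-\frac{1}{n}\log\beta(\phi_{n,N}^{\rmU,K})\ge\lambda^{(n)}-o(1)\to\lambda^\star$. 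This yields $\liminf_n-\frac{1}{n}\log\beta^*_{\mathrm{SD}}\ge\lambda^\star$ and shows the test attains it.

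For the converse I would run the generalized Neyman--Pearson change-of-measure argument. Fix any $\phi_{n,N}$ with $\zeta(\phi_{n,N}|P^{M_1},Q^{M_2})\le\varepsilon$ and any $\delta>0$; if $\beta(\phi_{n,N}|P^{M_1},Q^{M_2})\ge e^{-n\delta}$ then trivially $\max_{l'}\sup_{\calP_{l'}^K}\beta\ge e^{-n(\lambda^\star+\delta)}$, so assume it is smaller, whence the region $\calD:=\{\phi_{n,N}=\rmH_l^K\}$ satisfies $\Pr_{P^{M_1},Q^{M_2}}[\calD]\ge1-\varepsilon-o(1)$. Let $t^\star$ attain the minimum defining $\lambda^\star$ and build $(\tilP^{M_1},\tilQ^{M_2})\in\calP_{t^\star}^K$ by setting $\tilP_i=\tilQ_j$ to the common true distribution for $(i,j)\in\calM_{t^\star}^K\cap\calM_l^K$, to the centroid $\frac{\alpha P_i+Q_j}{1+\alpha}$ for $(i,j)\in\calM_{t^\star}^K\setminus\calM_l^K$, and to full-support $o(1)$-perturbations of the true distributions elsewhere so as to restore the distinctness in \eqref{def:calp:lk}; the same additive identity then gives $\alpha\sum_iD(P_i\|\tilP_i)+\sum_jD(Q_j\|\tilQ_j)\le\lambda^\star+o(1)$. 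Intersecting $\calD$ with a $\delta'$-typical set $\calT_n$ (of probability $\to1$) and changing measure on $\calT_n$, where the log-likelihood ratio is at least $-n\big(\alpha\sum_iD(P_i\|\tilP_i)+\sum_jD(Q_j\|\tilQ_j)\big)-n\delta' c$, gives $\beta(\phi_{n,N}|\tilP^{M_1},\tilQ^{M_2})\ge\Pr_{\tilP^{M_1},\tilQ^{M_2}}[\calD\cap\calT_n]\ge(1-\varepsilon-o(1))\,e^{-n(\lambda^\star+\delta)}$ for $\delta'$ small and $n$ large, since outputting $\rmH_l^K$ under $\rmH_{t^\star}^K$ is a mismatch. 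Taking the infimum over $\phi_{n,N}$ and then $\delta\downarrow0$ gives $\limsup_n-\frac{1}{n}\log\beta^*_{\mathrm{SD}}\le\lambda^\star$, which combined with achievability finishes the argument; this is exactly the template of Gutman and Unnikrishnan, specialized to the matching hypothesis structure.

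I expect the two main obstacles to be: (i) making the type-counting mismatch bound hold \emph{uniformly} over every hypothesis $l\in[T_K]$ and every tuple in $\calP_l^K$ with precisely the polynomial factor that $\lambda_n$ in \eqref{def:lambdan} is designed to absorb --- this relies on the independence of the $2K$ sequences in a matched configuration and on the GJS additive identity; and (ii) in the converse, constructing the perturbed tuple $(\tilP^{M_1},\tilQ^{M_2})\in\calP_{t^\star}^K$ that simultaneously realizes divergence cost $\lambda^\star+o(1)$ and obeys all the distinctness constraints of \eqref{def:calp:lk}, since degenerate coincidences among the centroid and untouched distributions must be cleared by $o(1)$ full-support perturbations without disturbing the cost. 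One should also reconcile the compact expression $\min_{t\neq l}\rmG_t^K(P^{M_1},Q^{M_2},\alpha)$ for $\lambda^\star$ with Unnikrishnan's explicit single-edge-modification formula, which coincides with it because multi-edge modifications only increase the GJS sum.
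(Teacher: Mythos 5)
Your proposal is sound and proves the right statement: the first-order exponent is $\Lambda_l(P^{M_1},Q^{M_2},K,\alpha)=\min_{t\neq l}\rmG_t^K(P^{M_1},Q^{M_2},\alpha)$, and Unnikrishnan's test with a threshold creeping up to it attains this while keeping $\zeta\le\varepsilon$. Note that the paper itself does not prove Theorem \ref{th:unn} — it is quoted from Unnikrishnan — and the statement instead follows as a corollary of Theorem \ref{sr:known}, whose second-order expansion $\chi_l^*=\Lambda_l-\nu_l^*/\sqrt{n}$ has exactly your $\lambda^\star$ as leading term. Your achievability is essentially identical to the paper's (the chain \eqref{easyreason}--\eqref{achldfinal}: mismatch implies $\rmS_l^K>\lambda_n$, then method of types plus the additive GJS identity, with the $K|\calX|\log((1+\alpha)n+1)/n$ correction absorbing the polynomial). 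The converses differ in mechanics: you change measure directly on the acceptance region of an arbitrary test intersected with a typical set, tilting the off-$\calM_l^K$ pairs of $\calM_{t^\star}^K$ to the centroid $\frac{\alpha P_i+Q_j}{1+\alpha}$; the paper (Lemma \ref{converse:gnp}) first reduces an arbitrary test to a type-based test via the counting argument of Lemma \ref{type:optimal} and then performs the same centroid tilting on types, which yields a \emph{non-asymptotic} lower bound on the false-reject probability that is needed for the $1/\sqrt{n}$ refinement. Your route is shorter and suffices for the first-order claim; the paper's buys the second-order converse. Your worry (ii) about restoring distinctness in $\calP_{t^\star}^K$ by $o(1)$ perturbations is legitimate and handled correctly.

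One small gap to repair in the achievability: from $h_K\to\Lambda_l$ almost surely and $\lambda_n\uparrow\Lambda_l$ you cannot conclude $\Pr\{h_K\le\lambda_n\}\to0$ — a moving threshold requires a fluctuation rate, not just convergence. Since the empirical GJS scores concentrate at rate $O_P(1/\sqrt{n})$ around their means (Chebyshev on the information densities, as in \eqref{com:ach:3}, or the Taylor expansion \eqref{taylor1}), your choice $\lambda^{(n)}=\lambda^\star-n^{-1/4}$ does work, but this step should be made explicit rather than attributed to the strong law alone.
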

In a nutshell, Theorem \ref{th:unn} implies that Unnikrishnan's test ensures exponential decay of mismatch probabilities and is optimal in the generalized Neyman-Pearson sense as it minimizes the false reject probability among all tests that ensure exponential decay of mismatch probabilities with the same exponent rate.

Although Theorem \ref{th:unn} provides insights, Theorem \ref{th:unn} does not explicitly bound the false reject probability and thus does not reveal the tradeoff between the mismatch and false reject probabilities under each hypothesis. Furthermore, Theorem \ref{th:unn} holds when $n\to\infty$ and thus does not apply to the more practical finite sample size setting. The first contribution in this paper resolves these problems by providing theoretical bounds on the false reject probability in both the large and small deviations regimes. Our theoretical results demonstrate the performance tradeoff between the mismatch and false reject probabilities. Subsequently, we generalize our results to the case where the number of matches is unknown. Our main results are presented in the next two sections.

\section{Results for Known Number of Matches}
\label{sec:known}
In this section, we provide a refined analyses of Unnikrishnan's test $\phi_{n,N}^{\rmU,K}$ for known number of matches in both large and small deviations regimes. Furthermore, we demonstrate the optimality of Unnikrishnan's test in both regimes under the generalized Neyman-Pearson criterion.

\subsection{Asymptotic Intuition}
\label{asymp:intuition}
We first explain why Unnikrishnan's test $\phi_{n,N}^{\rmU,K}$ works using intuition obtained from the weak law of large numbers. Fix any $l\in[T_K]$ and consider any tuple of generating distributions $(P^{M_1},Q^{M_2})\in\calP_l^K$. Under hypothesis $\rmH_l^K$, the sequences $\{x_i^N\}_{i\in\calA_l^K}$ and $\{y_j^n\}_{j\in\calB_l^K}$ are matched and $\sigma_l^K:\calA_l^K\to\calB_l^K$ specifies the unique pair of matches, i.e., for each $i\in\calA_l^K$, $X_i^N$ and $Y_{\sigma_l^K(i)}^n$ are generated from the same distribution $P_i=Q_{\sigma_l^K(i)}$. For each $i\in\calA_l^K$, the weak law of large numbers implies that the empirical distribution $x_i^N$ tends to $P_i$ and the empirical distribution of $y_{\sigma_l^K(i)}^n$ tends to $Q_{\sigma_l^K(i)}=P_i$ and thus $\mathrm{GJS}(\hatT_{x^N},\hatT_{y^n_{\sigma_l^K(i)}},\alpha)$ tends to zero. Therefore, it follows from \eqref{def:Sl} that $\rmS_l^K(\bx^N,\by^n)$ tends to \emph{zero}. 

Consider any $t\in[T_K]$ such that $t\neq l$. Under hypothesis $\rmH_l^K$, the scoring function $\rmS_t^K(\bx^N,\by^n)$ satisfies
\begin{align}
\rmS_t^K(\bx^N,\by^n)
&=\sum_{(i,j)\in\calM_t^K}\mathrm{GJS}(\hatT_{x_i^N},\hatT_{y_j^n},\alpha)\\
&=\sum_{\substack{(i,j)\in(\calM_t^K\cap\calM_l^K)}}\mathrm{GJS}(\hatT_{x_i^N},\hatT_{y_j^n},\alpha)+\sum_{\substack{(i,j)\in(\calM_t^K\setminus\calM_l^K)}}\mathrm{GJS}(\hatT_{x_i^N},\hatT_{y_j^n},\alpha),\\
&\to 0+\sum_{\substack{(i,j)\in(\calM_t^K\setminus\calM_l^K)}}\mathrm{GJS}(P_i,Q_j,\alpha)\label{almostsure:conve1}\\
&>0\label{ineq},
\end{align}
where \eqref{almostsure:conve1} holds almost surely due to the weak law of large numbers and the continuous property of $\mathrm{GJS}(P,Q,\alpha)$ and 
\eqref{ineq} holds since $P_i=Q_j$ holds only for $(i,j)\in\calM_l^K$. Therefore, under hypothesis $\rmH_l^K$, the second minimal scoring function satisfies
\begin{align}
h_K(\bx^N,\by^n)
&=\min_{t\in[T_K]:t\neq l}\rmS_t^K(\bx^N,\by^n)\\
&\to\min_{t\in[T_K]:t\neq l}\rmG_t^K(P^{M_1},Q^{M_2},\alpha)\label{same3:almostsure}\\
&=:\Lambda_l(P^{M_1},Q^{M_2},K,\alpha)\label{def:Lambdal},
\end{align}
where \eqref{same3:almostsure} holds almost surely similarly to \eqref{almostsure:conve1}. Thus, asymptotically if the threshold $\lambda<\Lambda_l(P^{M_1},Q^{M_2},K,\alpha)$, no mismatch or false reject event can occur. Otherwise, if $\lambda\geq \Lambda_l(P^{M_1},Q^{M_2},K,\alpha)$, the false reject event occurs with probability tends to one asymptotically. In the following, we characterize the false reject probabilities as a function of the threshold $\lambda$ for the large and small deviations regimes, respectively.

\subsection{Large Deviations}
We first derive the tradeoff between the decay rates (exponent rates) of mismatch and false reject probabilities for the asymptotic setting of large sample size $n$. To present and discuss our result, we need the following definitions and lemma. Fix $l\in[T_K]$. Given any two tuples of distributions $(P^{M_1},Q^{M_2})\in\calP_l^K$ and $(\Omega^{M_1},\Psi^{M_2})\in\calP^{M_1+M_2}(\calX)$, define the following linear combination of KL divergences:
\begin{align}
E_l(P^{M_1},Q^{M_2},\Omega^{M_1},\Psi^{M_2},\alpha)
&:=\sum_{i\in[M_1]}\alpha D(\Omega_i\|P_i)+\sum_{j\in\calB_l^K}D(\Psi_j\|P_{(\sigma_l^K)^{-1}(j)})+\sum_{j\notin\calB_l^K}D(\Psi_j\|Q_j)
\label{def:El}.
\end{align}
As we show in \eqref{useEl}, below, under hypothesis $\rmH_l^K$, $\exp(-nE_l(P^{M_1},Q^{M_2},\Omega^{M_1},\Psi^{M_2},\alpha))$ upper bounds the probability of the set of sequences $(\bx^N,\by^n)$ such that $\hatT_{x_i^N}=\Omega_i$ for each $i\in[M_2]$ and $\hatT_{y_i^n}=\Psi_i$ for each $i\in[M_2]$ when the generating distributions are $(P^{M_1},Q^{M_2})\in\calP_l^K$.

Furthermore, given any non-negative real number $\lambda\in\bbR_+$, let
\begin{align}
F_l(P^{M_1},Q^{M_2},\alpha,\lambda,K)
&:=\min_{\substack{(t,s)\in[T_K]^2:\\t\neq s}}\min_{\substack{(\Omega^{M_1},\Psi^{M_2})\in\calP^{M_1+M_2}(\calX):\\\rmG_t^K(\Omega^{M_1},\Psi^{M_2},\alpha)\leq \lambda\\\rmG_s^K(\Omega^{M_1},\Psi^{M_2},\alpha)\leq \lambda}}E_l(P^{M_1},Q^{M_2},\Omega^{M_1},\Psi^{M_2},\alpha)\label{def:Fl}.
\end{align}
As we shall show in Theorem \ref{ld:known}, below, the expression $F_l(P^{M_1},Q^{M_2},\alpha,\lambda,K)$ lower bounds the error exponent of the false reject probability under hypothesis $\rmH_l^K$ with generating distributions $(P^{M_1},Q^{M_2})\in\calP_l^K$. Such a result resembles the definition of the type-II error exponent in binary hypothesis testing when the type-I error probability decays exponentially fast with a exponent rate of $\lambda$~\cite{blahut1974hypothesis}. To illustrate, we plot $F_l(P^{M_1},Q^{M_2},\alpha,\lambda,K)$ in Fig. \ref{illus:fl}.
\begin{figure}[tb]
\centering
\includegraphics[width=.5\columnwidth]{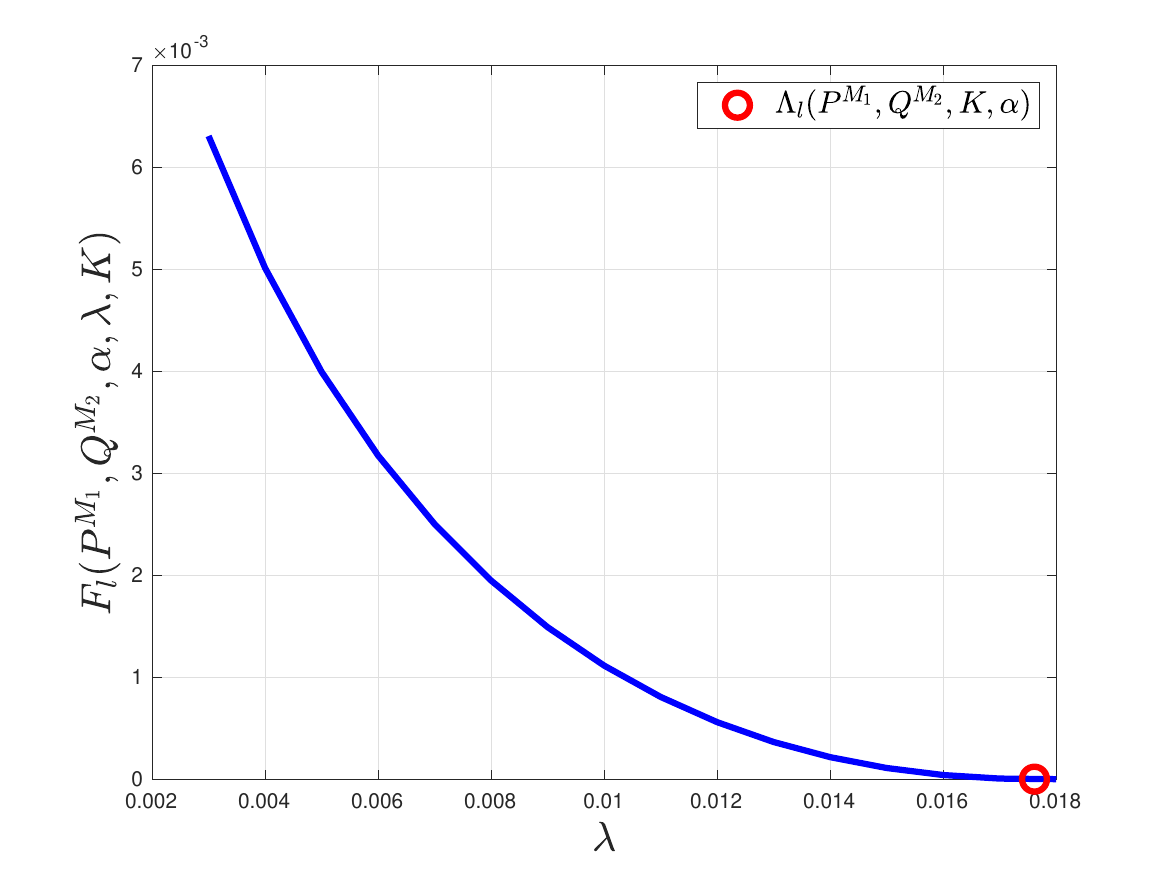}
\caption{Plot of $F_l(P^{M_1},Q^{M_2},\alpha,\lambda,K)$ as a function of $\lambda$ when $\alpha=2$, $M_1=4$, $M_2=K=2$, $(P_1,P_2,P_3,P_4)=\mathrm{Bern}(0.1,0.2,0.3,0.4)$ and $Q_1=P_1$, $Q_2=P_2$.}
\label{illus:fl}
\end{figure}

Finally, given $l\in[T_K]$ and any $(t,s)\in[T_K]^2$ such that $t\neq s$, for each $j\in[M_2]$, let
\begin{align}
f_j^{l,t,s}(P^{M_1},\alpha)
&:=\min_{\Psi\in\calP(\calX)} \Big(D(\Psi\|P_{(\sigma_l^K)^{-1}(j)})+\sum_{\substack{i\in[M_1]:\sigma_t^K(i)=j\\\mathrm{or~}\sigma_s^K(i)=j}}\alpha D(\Psi\|P_i)\Big),\label{def:fj}\\
g_j^{l,t,s}(P^{M_1},Q^{M_2},\alpha)
&:=\min_{\Psi\in\calP(\calX)} \Big(D(\Psi\|Q_j)+\sum_{\substack{i\in[M_1]:\sigma_t^K(i)=j\\\mathrm{or~}\sigma_s^K(i)=j}}\alpha D(\Psi\|P_i)\Big)\label{def:gj},
\end{align}
and let
\begin{align}
\Upsilon_l(P^{M_1},Q^{M_2},K,\alpha)
&:=\min_{\substack{(t,s)\in[T_K]^2:\\t\neq s}}\bigg(\sum_{j\in\calB_l^K:j\in(\calB_t^K\cup\calB_s^K)}f_j^{l,t,s}(P^{M_1},\alpha)+\sum_{j\in(\calB_l^K)^\rmc:j\in(\calB_t^K\cup\calB_s^K)}g_j^{l,t,s}(P^{M_1},Q^{M_2},\alpha)\bigg)\label{def:Upsilonl}.
\end{align}

Some properties of the exponent function $F_l(\cdot)$ are summarized in the following lemma.
\begin{lemma}
\label{prop:Fl}
The following claims hold for each $l\in[T_K]$.
\begin{enumerate}
\item Monotonicity: $F_l(P^{M_1},Q^{M_2},\alpha,\lambda,K)$ is non-increasing in $\lambda$.
\item Zero Condition: $F_l(P^{M_1},Q^{M_2},\alpha,\lambda,K)=0$ if and only if
\begin{align}
\lambda
&\geq\Lambda_l(P^{M_1},Q^{M_2},K,\alpha).
\end{align}
\item Maximal value: the maximal value of $F_l(P^{M_1},Q^{M_2},\alpha,\lambda,K)$ is 
\begin{align}
F_l(P^{M_1},Q^{M_2},\alpha,0,K)
&=\Upsilon_l(P^{M_1},Q^{M_2},K,\alpha).
\end{align}
\end{enumerate}
\end{lemma}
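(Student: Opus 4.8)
The plan is to establish the three claims by exploiting the structure of $F_l$ in \eqref{def:Fl} as a minimization of the nonnegative, jointly continuous divergence functional $E_l$ in \eqref{def:El} over a feasible set defined by the constraints $\rmG_t^K(\Omega^{M_1},\Psi^{M_2},\alpha)\leq\lambda$ and $\rmG_s^K(\Omega^{M_1},\Psi^{M_2},\alpha)\leq\lambda$. The key observation throughout is that for $(P^{M_1},Q^{M_2})\in\calP_l^K$ we have $E_l(P^{M_1},Q^{M_2},P^{M_1},Q^{M_2},\alpha)=0$, since each KL term vanishes (using $P_{(\sigma_l^K)^{-1}(j)}=Q_j$ for $j\in\calB_l^K$), and conversely $E_l=0$ forces $\Omega_i=P_i$ for all $i$, $\Psi_j=P_{(\sigma_l^K)^{-1}(j)}$ for $j\in\calB_l^K$, and $\Psi_j=Q_j$ for $j\notin\calB_l^K$ by strict positivity of KL divergence — i.e., the unique zero of $E_l(P^{M_1},Q^{M_2},\cdot,\cdot,\alpha)$ is $(P^{M_1},Q^{M_2})$ itself.

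For claim 1 (monotonicity), I would note that enlarging $\lambda$ enlarges each feasible set $\{\rmG_t^K\leq\lambda,\ \rmG_s^K\leq\lambda\}$, hence the inner minimum over a larger set is no larger; taking the outer minimum over the finitely many pairs $(t,s)$ preserves this, so $F_l$ is non-increasing in $\lambda$. For claim 2 (zero condition), the "if" direction: when $\lambda\geq\Lambda_l(P^{M_1},Q^{M_2},K,\alpha)=\min_{t\neq l}\rmG_t^K(P^{M_1},Q^{M_2},\alpha)$, pick $s=l$ (so $\rmG_l^K(P^{M_1},Q^{M_2},\alpha)=0\leq\lambda$) and let $t\neq l$ achieve the minimum in $\Lambda_l$, so that $\rmG_t^K(P^{M_1},Q^{M_2},\alpha)\leq\lambda$ as well; then $(\Omega^{M_1},\Psi^{M_2})=(P^{M_1},Q^{M_2})$ is feasible for the pair $(t,s)$ and yields $E_l=0$, forcing $F_l=0$ by nonnegativity. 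For the "only if" direction: if $F_l=0$, then by continuity of $E_l$ and compactness of the simplex some feasible $(\Omega^{M_1},\Psi^{M_2})$ attains $E_l=0$ for some pair $(t,s)$; by the uniqueness of the zero noted above this forces $(\Omega^{M_1},\Psi^{M_2})=(P^{M_1},Q^{M_2})$, and feasibility then gives $\rmG_t^K(P^{M_1},Q^{M_2},\alpha)\leq\lambda$ and $\rmG_s^K(P^{M_1},Q^{M_2},\alpha)\leq\lambda$ with $t\neq s$; since at most one index $r\in[T_K]$ can have $\rmG_r^K(P^{M_1},Q^{M_2},\alpha)=0$ (namely $r=l$, because $\calM_r^K$ being all matched pairs under distributions in $\calP_l^K$ forces $\calM_r^K=\calM_l^K$), at least one of $t,s$ differs from $l$ and hence $\lambda\geq\min_{r\neq l}\rmG_r^K(P^{M_1},Q^{M_2},\alpha)=\Lambda_l(P^{M_1},Q^{M_2},K,\alpha)$.

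For claim 3 (maximal value), monotonicity gives that the maximum is attained at $\lambda=0$, where the constraints become $\rmG_t^K(\Omega^{M_1},\Psi^{M_2},\alpha)=0$ and $\rmG_s^K(\Omega^{M_1},\Psi^{M_2},\alpha)=0$; by \eqref{def:Gl} and the fact that $\mathrm{GJS}(P,Q,\alpha)=0$ iff $P=Q$, this is equivalent to $\Omega_i=\Psi_{\sigma_t^K(i)}$ for $i\in\calA_t^K$ and $\Omega_i=\Psi_{\sigma_s^K(i)}$ for $i\in\calA_s^K$. The remaining task is to minimize $E_l$ over these equality-constrained configurations and show the value equals $\Upsilon_l$ in \eqref{def:Upsilonl}. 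I would do this by observing that $E_l$ decomposes additively over the coordinates $\Omega_i$ and $\Psi_j$, that the unconstrained $\Omega_i$ (those with $i$ not forced by any $t$- or $s$-match) can be set to $P_i$ contributing zero, and that each constraint-coupled block — corresponding to a $\Psi_j$ with $j\in\calB_t^K\cup\calB_s^K$ together with the $\Omega_i$ tied to it — reduces, after substituting the equality constraints into $E_l$, exactly to the single-variable minimizations $f_j^{l,t,s}$ (for $j\in\calB_l^K$, using the $D(\Psi\|P_{(\sigma_l^K)^{-1}(j)})$ term) and $g_j^{l,t,s}$ (for $j\notin\calB_l^K$, using the $D(\Psi\|Q_j)$ term) defined in \eqref{def:fj}–\eqref{def:gj}; summing over $j$ and minimizing over $(t,s)$ yields $\Upsilon_l$. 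The main obstacle I anticipate is the bookkeeping in claim 3: carefully tracking which $\Omega_i$ are pinned to which $\Psi_j$ under the two permutations $\sigma_t^K,\sigma_s^K$ simultaneously (a single $\Psi_j$ may be tied to as many as two distinct $\Omega_i$'s, one from $\sigma_t^K$ and one from $\sigma_s^K$), verifying that these coupling blocks are disjoint so the minimization truly separates, and confirming that the KL terms for the unpinned $\Omega_i$ and for $\Psi_j$ with $j\notin\calB_t^K\cup\calB_s^K$ all vanish at the optimum — this is where the precise definitions of $E_l$, $\calB_l^K$, and the $\sigma$'s must be matched against \eqref{def:Upsilonl} term by term.
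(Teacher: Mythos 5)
Your route coincides with the paper's for all three claims: (i) from nestedness of the feasible sets in $\lambda$, (ii) by observing that $(\Omega^{M_1},\Psi^{M_2})=(P^{M_1},Q^{M_2})$ is the unique zero of $E_l$ and becomes feasible for some admissible pair $(t,s)$ exactly when $\lambda\geq\Lambda_l(P^{M_1},Q^{M_2},K,\alpha)$, and (iii) by setting $\lambda=0$, converting the constraints into the equalities $\Omega_i=\Psi_{\sigma_t^K(i)}$, $\Omega_i=\Psi_{\sigma_s^K(i)}$, and decomposing $E_l$ into per-$j$ blocks matching $f_j^{l,t,s}$ and $g_j^{l,t,s}$. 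Claims (i) and (ii) are complete as you state them; indeed your ``only if'' direction of (ii) --- compactness plus lower semicontinuity to extract a feasible zero of $E_l$, and the observation that $l$ is the unique index with $\rmG_r^K(P^{M_1},Q^{M_2},\alpha)=0$ so that at least one of $t,s$ differs from $l$ --- is spelled out more carefully than in the paper, which essentially asserts the equivalence in one line.

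The genuine gap is the point you flag in claim (iii) and then leave unresolved: whether the coupling blocks are disjoint so that the constrained minimization of $E_l$ separates into the independent single-variable problems $f_j^{l,t,s}$ and $g_j^{l,t,s}$. They need not be. If some $i\in\calA_t^K\cap\calA_s^K$ has $j_1:=\sigma_t^K(i)\neq\sigma_s^K(i)=:j_2$ (e.g.\ $M_1=M_2=K=2$ with $t$ the identity matching and $s$ the swap), then the constraints $\rmG_t^K=\rmG_s^K=0$ pin the single coordinate $\Omega_i$ to both $\Psi_{j_1}$ and $\Psi_{j_2}$, forcing $\Psi_{j_1}=\Psi_{j_2}$; the blocks for $j_1$ and $j_2$ are then coupled, and the single term $\alpha D(\Omega_i\|P_i)$ of $E_l$ is reproduced in both blocks of the decomposition, so the sum of independently minimized $f_j$'s and $g_j$'s simultaneously double-counts that term and relaxes the equality $\Psi_{j_1}=\Psi_{j_2}$, and is not evidently equal to the constrained minimum of $E_l$. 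The paper's own proof writes the decomposition \eqref{El:decompose} and passes directly to \eqref{def:Upsilonl} without treating this case, so your plan is faithful to the paper; but to call claim (iii) proved you would need either to show that the offending configurations cannot occur at the minimizing pair $(t,s)$, or to carry out the block accounting explicitly for overlapping pairs.
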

The proof of Lemma \ref{prop:Fl} is provided in Appendix \ref{proof:prop:Fl}.

Recall that $\alpha=\frac{N}{n}$ is defined as the ratio between the lengths of sequences of the two databases. Our first result establishes a relation between the mismatch and false reject exponent rates, through the exponent function $F_l(\cdot)$, for any threshold $\lambda>0$.
\begin{theorem}
\label{ld:known}
For each $l\in[T_K]$, given any non-negative real number $\lambda\in\bbR_+$ and any tuple of generating distributions $(P^{M_1},Q^{M_2})\in\calP_l^K$, 
\begin{align}
\lim_{n\to\infty}E_{\mathrm{LD}}^*(n,N,\lambda|P^{M_1},Q^{M_2})=F_l(P^{M_1},Q^{M_2},\alpha,\lambda,K).
\end{align}
\end{theorem}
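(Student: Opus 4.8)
The plan is to establish matching achievability and converse bounds on $E_{\mathrm{LD}}^*(n,N,\lambda|P^{M_1},Q^{M_2})$, both expressed through the exponent function $F_l(\cdot)$ of \eqref{def:Fl}. The achievability direction analyzes Unnikrishnan's test $\phi_{n,N}^{\rmU,K}$ directly by the method of types; the converse uses the universal mismatch constraint together with exchangeability within type classes and a small-margin continuity argument to pass between the constraints $\rmG_t^K\le\lambda$ and slightly tightened versions of them.

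\textbf{Achievability.} I would run $\phi_{n,N}^{\rmU,K}$ with threshold $\lambda$ (if necessary enlarging the threshold by an $o(1)$ amount, which is already what the correction term in \eqref{def:lambdan} does, so that the mismatch constraint holds exactly). That this test is feasible, i.e. $\beta(\phi_{n,N}^{\rmU,K}|\tilP^{M_1},\tilQ^{M_2})\le\exp(-n\lambda)$ for every $l'\in[T_K]$ and $(\tilP^{M_1},\tilQ^{M_2})\in\calP_{l'}^K$, is essentially Unnikrishnan's Theorem~4.1 (cf. Theorem~\ref{th:unn}): a mismatch under $\rmH_{l'}^K$ forces $\rmG_{l'}^K(\hatT_{\bx^N},\hatT_{\by^n},\alpha)>\lambda_n$, and since $\alpha D(\Omega_i\|\tilP_i)+D(\Psi_j\|\tilP_i)\ge\mathrm{GJS}(\Omega_i,\Psi_j,\alpha)$ for each matched pair in $\calM_{l'}^K$, a type-counting bound with the polynomial correction in \eqref{def:lambdan} gives the claim. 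For the false reject probability under $\rmH_l^K$ with $(P^{M_1},Q^{M_2})\in\calP_l^K$, I would note that $\{\phi_{n,N}^{\rmU,K}=\rmH_\rmr\}=\{h_K(\bx^N,\by^n)\le\lambda_n\}$ is contained in $\bigcup_{t\neq s}\{\rmG_t^K(\hatT_{\bx^N},\hatT_{\by^n},\alpha)\le\lambda_n,\ \rmG_s^K(\hatT_{\bx^N},\hatT_{\by^n},\alpha)\le\lambda_n\}$, since if the second-smallest score is $\le\lambda_n$ then two distinct scores are. Using the type estimate $\Pr_{(P^{M_1},Q^{M_2})}\{\hatT_{\bx^N}=\Omega^{M_1},\hatT_{\by^n}=\Psi^{M_2}\}\le\exp(-nE_l(P^{M_1},Q^{M_2},\Omega^{M_1},\Psi^{M_2},\alpha))$ (valid because, for $(P^{M_1},Q^{M_2})\in\calP_l^K$, the matched $Y$-sequences are generated by the corresponding $P_i$, which is exactly how $E_l$ in \eqref{def:El} is built) and summing over the polynomially many type tuples, $\zeta\le\poly(n)\exp(-n\min_{t\neq s}\min_{\rmG_t^K\le\lambda_n,\rmG_s^K\le\lambda_n}E_l)$. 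As $\lambda_n\downarrow\lambda$, lower semicontinuity of $E_l$ and compactness of the feasible sets force this inner minimum to converge to $F_l(P^{M_1},Q^{M_2},\alpha,\lambda,K)$, giving $\liminf_n E_{\mathrm{LD}}^*(n,N,\lambda|P^{M_1},Q^{M_2})\ge F_l$.

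\textbf{Converse.} Fix any feasible sequence of tests $\phi_{n,N}$ and a margin $\delta>0$; let $F_l^{(\delta)}$ be the value of \eqref{def:Fl} with both constraints relaxed to $\le\lambda-\delta$, attained by distributions $(\Omega^\star,\Psi^\star)$ and a pair $(t,s)$, and pick types $(\Omega^{M_1},\Psi^{M_2})$ close to $(\Omega^\star,\Psi^\star)$ with $\rmG_t^K,\rmG_s^K\le\lambda-\delta/2$ and $E_l(P^{M_1},Q^{M_2},\Omega^{M_1},\Psi^{M_2},\alpha)\le F_l^{(\delta)}+o(1)$. Let $\calE:=\{\hatT_{\bx^N}=\Omega^{M_1},\hatT_{\by^n}=\Psi^{M_2}\}$, a product of type classes. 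On $\calE$, if the test does not output $\rmH_\rmr$ then its decision $\rmH_{l''}^K$ satisfies $l''\neq t$ or $l''\neq s$, hence it commits a mismatch for $\rmH_t^K$ or for $\rmH_s^K$. Since an i.i.d. law is exchangeable within type classes, the conditional probability of the mismatch-for-$\rmH_t^K$ event given $\calE$ is independent of the underlying distribution, so I would evaluate it under a tailored $(\tilP^{M_1},\tilQ^{M_2})\in\calP_t^K$ obtained by setting each matched pair $(i,j)\in\calM_t^K$ to the common distribution $\tfrac{\alpha\Omega_i+\Psi_j}{1+\alpha}$ and the remaining coordinates equal to the observed types (with a vanishing perturbation keeping all distributions distinct). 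Under this alternative, $\Pr\{\calE\}\ge\poly(n)^{-1}\exp(-n\rmG_t^K(\Omega^{M_1},\Psi^{M_2},\alpha))\ge\poly(n)^{-1}\exp(-n(\lambda-\delta/2))$, so the mismatch budget $\exp(-n\lambda)$ forces the conditional mismatch probability to be $\le\poly(n)\exp(-n\delta/2)\to0$, and likewise for $\rmH_s^K$. Therefore $\Pr_{(P^{M_1},Q^{M_2})}\{\calE\cap\{\phi_{n,N}=\rmH_\rmr\}\}\ge(1-o(1))\Pr_{(P^{M_1},Q^{M_2})}\{\calE\}\ge\poly(n)^{-1}\exp(-nE_l(P^{M_1},Q^{M_2},\Omega^{M_1},\Psi^{M_2},\alpha))$, whence $\zeta(\phi_{n,N}|P^{M_1},Q^{M_2})\ge\poly(n)^{-1}\exp(-n(F_l^{(\delta)}+o(1)))$ and $\limsup_n E_{\mathrm{LD}}^*\le F_l^{(\delta)}$. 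Finally, letting $\delta\downarrow0$: using joint convexity of $\mathrm{GJS}(\cdot,\cdot,\alpha)$ in $(P,Q)$, one may move $(\Omega^\star,\Psi^\star)$ a fraction $\delta/\lambda$ toward a point where $\rmG_t^K=\rmG_s^K=0$ (set all coordinates in each connected component of $\calM_t^K\cup\calM_s^K$ equal), which produces a point feasible for the $(\lambda-\delta)$-relaxed problem with $E_l$ changed by only $O(\delta)$; this shows $\lim_{\delta\downarrow0}F_l^{(\delta)}=F_l$, completing the converse.

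\textbf{Main obstacle.} The method-of-types estimates and the type-counting are routine; the crux is the converse. The delicate ingredients are (i) the exchangeability transfer, which is what lets the conditional mismatch probability given $\calE$ be bounded via the universal constraint evaluated under a \emph{favorable} alternative in $\calP_t^K$, and (ii) the $\delta$-margin continuity argument $F_l^{(\delta)}\to F_l$, which must be carried out with care when some $P_i$ or $Q_j$ lack full support (so that the divergences in $E_l$ remain finite along the perturbation, treating the degenerate case $F_l=+\infty$ separately). The margin $\delta$ is essential: with $\delta=0$ the alternative distribution renders $\calE$ only as likely as $\exp(-n\lambda)$ — the same order as the mismatch budget — and the conditional mismatch probability could then fail to be negligible, so the argument would not close.
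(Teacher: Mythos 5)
Your proposal is correct and follows essentially the same route as the paper: the achievability is the identical method-of-types analysis of Unnikrishnan's test (containment of the reject event in the pairwise union, the $\exp(-nE_l)$ type-class bound, polynomial counting, and continuity in the threshold), and your converse rests on the same two key ingredients as the paper's Lemma \ref{converse:gnp} — exchangeability within type classes and the tailored midpoint alternative $\tfrac{\alpha\Omega_i+\Psi_j}{1+\alpha}$ under which $E_t$ collapses to $\rmG_t^K$, contradicting the universal mismatch budget. The only cosmetic difference is that you condition directly on a single type-class event with a fixed margin $\delta\downarrow 0$, whereas the paper packages the exchangeability step as an auxiliary type-based test (Lemma \ref{type:optimal} with $\kappa_i=1/n$) and uses a vanishing margin $\delta_{n,N}$.
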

The proof of Theorem \ref{ld:known} is given in Section \ref{proof:ld:known}. In the achievability part, we analyze Unnikrishnan's test and prove its asymptotic optimality in the large deviations regime. The analysis of the mismatch probability is similar to \cite[Theorem 4.1]{unnikrishnan2015asymptotically} and the analysis of the false reject probability follows from the method of types~\cite{csiszar1998mt}. The converse proof relies on a non-asymptotic converse bound on the false reject probability that we derive in Lemma \ref{converse:gnp}, which refines the asymptotic converse argument of \cite[Lemma 5]{unnikrishnan2015asymptotically}. Several remarks are as follows.

Firstly, Theorem \ref{ld:known} generalizes Unnikrishnan's result~\cite[Theorem 4.1]{unnikrishnan2015asymptotically} (cf. Theorem \ref{th:unn}) by deriving explicit bounds on the exponential decay rates of the false reject probability under each hypothesis. In particular, Unnikrishnan wrote that ``analytical expressions for the rejection exponents are difficult to obtain''. We manage to explicitly characterize the false reject exponent rate for Unnikrishnan's test as $F_l(P^{M_1},Q^{M_2},\alpha,\lambda,K)$ under hypothesis $\rmH_l^K$ when the generating distributions are $(P^{M_1},Q^{M_2})\in\calP_l^K$. The explicit performance guarantee for the false reject probability helps to further understand the impact of the threshold $\lambda$ of the test and guides the choice of $\lambda$ in practice.

Secondly, the achievability part of Theorem \ref{ld:known} reveals the tradeoff between the universal mismatch exponent and the false reject exponent under each hypothesis. Specifically, for each $l\in[T_K]$, under hypothesis $\rmH_l^K$, Unnikrishnan's test ensures that the mismatch probability decays exponentially fast with a rate no less than $\lambda$ under all tuples of unknown generating distributions $(\tilP^{M_1},\tilQ^{M_2})\in\calP_l^K$ and ensures that the false reject probability decays exponentially fast with the exponent rate of $F_l(P^{M_1},Q^{M_2},\alpha,\lambda,K)$ under the particular tuple of generating distributions $(P^{M_1},Q^{M_2})\in\calP_l^K$. Under each hypothesis $l\in[T_K]$ and any unknown generating distributions $(P^{M_1},Q^{M_2})\in\calP_l^K$, it follows from Claim (i) of Lemma \ref{prop:Fl} that if the mismatch exponent rate $\lambda$ increases, the false reject exponent rate $F_l(P^{M_1},Q^{M_2},\alpha,\lambda,K)$ decreases and vice versa. In particular, Claims (ii) and (iii) of Lemma \ref{prop:Fl} show that for each $l\in[T_K]$, if the universal mismatch exponent rate is large enough so that $\lambda\geq \Lambda_l(P^{M_1},Q^{M_2},K,\alpha)$, the false reject exponent rate $F_l(P^{M_1},Q^{M_2},\alpha,\lambda,K)=0$ while the false reject exponent rate takes the largest value $\Upsilon_l(P^{M_1},Q^{M_2},K,\alpha)$ when the mismatch exponent rate $\lambda=0$. 

Thirdly, the converse part of Theorem \ref{ld:known} demonstrates the optimality of Unnikrishnan's test in the large deviations regime in the generalized Neyman-Pearson sense of Gutman~\cite{gutman1989asymptotically}, which has been adopted in the follow up studies, e.g.,~\cite{zhou2018binary,zhou2022second,he2019distributed}. Specifically, the generalized Neyman-Pearson criterion requires each test to guarantee universal exponential decay of mismatch probabilities with a rate no less than $\lambda$. Such a criterion generalizes the Neyman-Pearson criterion for hypothesis testing with known generating distributions to hypothesis testing problems with unknown distributions and allows one to derive optimality guarantee for corresponding tests, e.g.,~\cite{zhou2018binary,zhou2022second,unnikrishnan2015asymptotically}.

Finally, when specialized to $M_2=K=1$, our derivation of the false reject exponent rate specializes to the corresponding result for statistical classification of multiple hypotheses~\cite{gutman1989asymptotically}. In this case, we have $T_1=M$, $\calA_l^K=\{l\}$, $\calB_l^K=1$, $\sigma_l^K=1$, $\calM_l^K=(l,1)$, $Q=P_l$ under hypothesis $\rmH_l^K$ for each $l\in[T_1]$ and
\begin{align}
\Lambda_l(P^{M_1},Q,\alpha)
&=\min_{t\in[M_1]:t\neq l}\mathrm{GJS}(P_t,P_l,\alpha),
\\
E_l(P^{M_1},Q,\Omega^M,\Psi,\alpha)
&=\sum_{i\in[M_1]}\alpha D(\Omega_i\|P_i)+D(\Psi\|P_l).
\end{align}
Thus, under hypothesis $\rmH_l^K$, the false reject exponent rate is
\begin{align}
F_l(P^{M_1},Q,\alpha,\lambda)
&=\min_{\substack{(t,s)\in[M_1]^2:t\neq s}}\min_{\substack{(\Omega^{M_1},\Psi)\in\calP^{M_1+1}(\calX):\\\rm{GJS}(\Omega_s,\Psi,\alpha)\leq \lambda\\\rm{GJS}(\Omega_t,\Psi,\alpha)\leq \lambda}}E_l(P^{M_1},Q,\Omega^{M_1},\Psi,\alpha)\\
&=\min_{\substack{(t,s)\in[M_1]^2:t\neq s}}\min_{\substack{(\Omega^{M_1},\Psi)\in\calP^{M+1}(\calX):\\\rm{GJS}(\Omega_s,\Psi,\alpha)\leq \lambda\\\rm{GJS}(\Omega_t,\Psi,\alpha)\leq \lambda}} \Big(\alpha D(\Omega_t\|P_t)+\alpha D(\Omega_s\|P_s)+D(\Psi\|Q)\Big)\label{ld:frej:classify},
\end{align}
which is precisely the false reject exponent rate for $M$-ary classification~\cite[Eq. (5.103)]{zhou2018binary}. Finally, the maximal value for the false reject exponent rate is
\begin{align}
F_l(P^{M_1},Q,\alpha,0)
&=\min_{(t,s)\in[M_1]^2:t\neq s}\min_{\Psi\in\calP(\calX)}\Big(\alpha D(\Psi\|P_t)+\alpha D(\Psi\|P_s)+D(\Psi\|Q)\Big),
\end{align}
which is precisely the corresponding result for statistical classification~\cite[Eq. (5.104)]{zhou2018binary}.

\subsection{Small Deviations}
To shed lights on the performance of Unnikrishnan's test for finite $n$, in the small deviations regime, we derive a second-order expansion of the universal mismatch exponent subject to a constant false reject probability. To present our results, we need the following definitions. Given any distributions $(P,Q)\in\calP^2(\calX)$, for each $x\in\calX$, we need the following two information densities (log likelihood ratios)~\cite[Eq. (3.4)]{zhou2018binary}
\begin{align}
\imath_1(x|P,Q,\alpha)&:=\log\frac{(1+\alpha)P(x)}{\alpha P(x)+Q(x)}\label{def:i1},\\
\imath_2(x|P,Q,\alpha)&:=\log\frac{(1+\alpha)Q(x)}{\alpha P(x)+Q(x)}\label{def:i2}.
\end{align}
Fix $l\in[T_K]$ and a tuple of generating distributions $(P^{M_1},Q^{M_2})\in\calP_l^K$. For each $(t_1,t_2)\in[T_K]^2$, define the covariance function
\begin{align}
\mathrm{Cov}_{t_1,t_2}(P^{M_1},Q^{M_2},\alpha)
\nn&=\bigg(
\sum_{(i,j)\in\calM_{t_1}^K}\sum_{(\bari,\barj)\in\calM_{t_2}^K:\bari=i}\alpha\mathrm{Cov}\big(\imath_1(X_i|P_i,Q_j,\alpha),\imath_1(X_{\bari}|P_{\bari},Q_{\barj},\alpha)\big)\\*
&\qquad\qquad\qquad+\sum_{(i,j)\in\calM_{t_1}^K}\sum_{(\bari,\barj)\in\calM_{t_2}^K:\barj=j}\mathrm{Cov}\big(\imath_2(Y_j|P_i,Q_j,\alpha),\imath_2(Y_{\barj}|P_{\bari},Q_{\barj},\alpha)\big)\bigg)\label{def:vt1t2},
\end{align}
where $X_i\sim P_i$, $Y_j\sim Q_j$ for each $i\in[M_1]$ and $j\in[M_2]$, and the random variables $(X_1,\ldots,X_{M_1})$ and $(Y_1,\ldots,Y_{M_2})$ are independent of each other.

Furthermore, define the set 
\begin{align}
\calI_l(P^{M_1},Q^{M_2}):=\{t\in[T_K]:~t\neq l,~\rmG_t^K(P^{M_1},Q^{M_2},\alpha)=\Lambda_l(P^{M_1},Q^{M_2},K,\alpha)\}\label{def:calIl}.
\end{align}
Let $\tau_l:=|\calI_l(P^{M_1},Q^{M_2})|$ denote the size of the set. Furthermore, define the function $\rmo:\calI_l(P^{M_1},Q^{M_2})\to [\tau_l]$ such that for each $t\in\calI_l(P^{M_1},Q^{M_2})$, $\rmo(t)=i$ if $t$ is the $i$-th smallest element in the set $\calI_l(P^{M_1},Q^{M_2})$. Now define the square matrix $\bV^l(P^{M_1},Q^{M_2},\alpha)$ of dimension $\tau_l\times \tau_l$ such that for each $(t_1,t_2)\in[\tau_l]^2$, $\bV_{t_1,t_2}^l(P^{M_1},Q^{M_2},\alpha)=\mathrm{Cov}_{\rmo^{-1}(t_1),\rmo^{-1}(t_2)}(P^{M_1},Q^{M_2},\alpha)$.

Recall that $N=n\alpha$ and $\rvbPh$ denotes the multivariate Gaussian cdf. Recall the definition of $\Lambda_l(P^{M_1},Q^{M_2},K,\alpha)$ in \eqref{def:Lambdal}. 
Given any $\varepsilon\in(0,1)$, let
\begin{align}
\nu_l^*(\varepsilon|P^{M_1},Q^{M_2},K,\alpha)
&:=\inf\bigg\{L\in\bbR:~\rvbPh_{\tau_l}\big(L\times \mathbf{1}_{\tau_l};\mathbf{0}_{\tau_l};\bV^l(P^{M_1},Q^{M_2},\alpha)\big)\geq 1-\varepsilon\bigg\},\label{def:nul*}\\
\chi_l^*(n,\varepsilon|P^{M_1},Q^{M_2},K,\alpha)
&:=\Lambda_l(P^{M_1},Q^{M_2},K,\alpha)-\frac{\nu_l^*(\varepsilon|P^{M_1},Q^{M_2},K,\alpha)}{\sqrt{n}}\label{def:lambdal*}.
\end{align}
We remark that the dimension $\tau_l$ can degenerate to one. In this case, the multivariate Gaussian cdf above in \eqref{def:nul*} reduces to a Gaussian cdf function and $\nu_l^*(\varepsilon|P^{M_1},Q^{M_2},K,\alpha)=\Phi^{-1}(\varepsilon)\sqrt{\mathrm{Cov}_{t,t}(P^{M_1},Q^{M_2},\alpha)}$, where $\Phi^{-1}(\cdot)$ is the inverse cdf function of a Gaussian random variable with mean zero and variance one and $t$ is the only element in  $\calI_l(P^{M_1},Q^{M_2})$. A numerical illustration of $\chi_l^*(n,\varepsilon|P^{M_1},Q^{M_2},K,\alpha)$ is available in Fig. \ref{illus:chi}.

\begin{figure}[tb]
\centering
\includegraphics[width=.5\columnwidth]{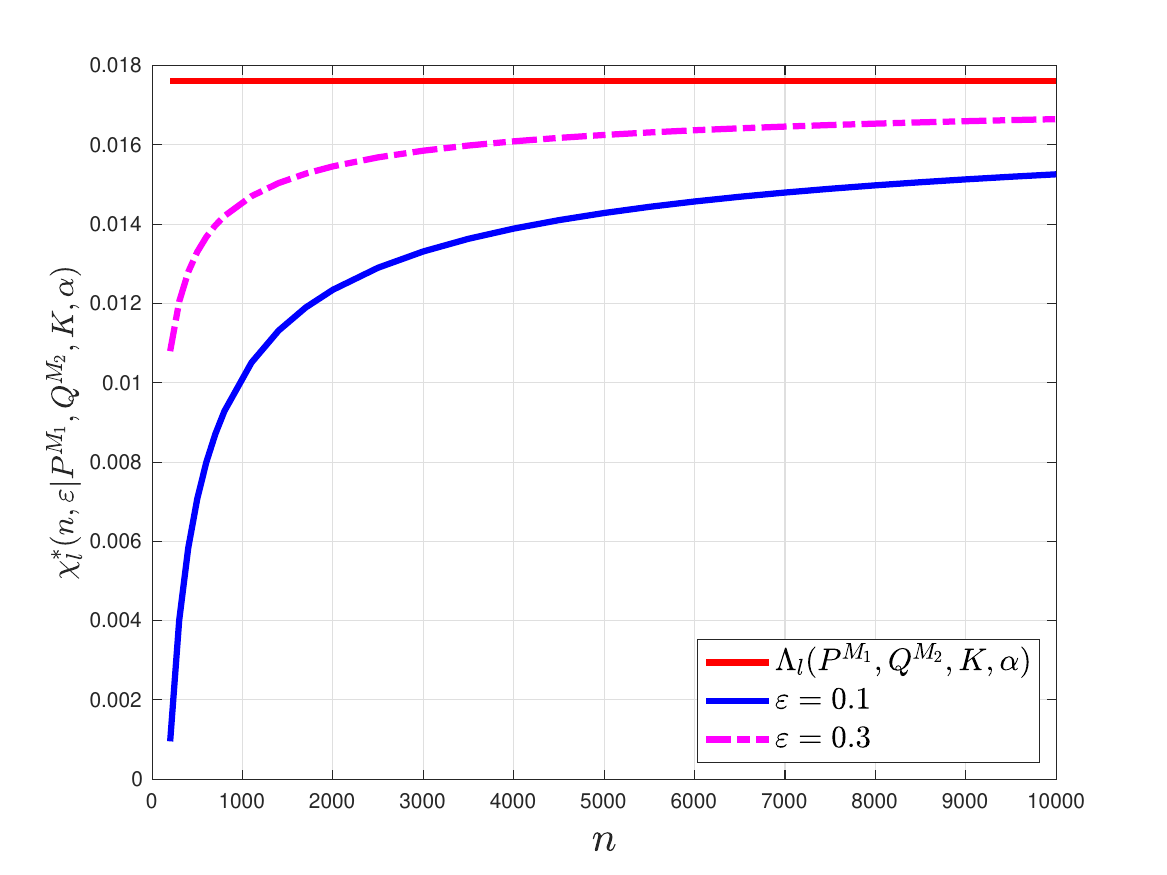}
\caption{Plot of $\chi_l^*(n,\varepsilon|P^{M_1},Q^{M_2},K,\alpha)$ as a function of $n$ for various values of $\varepsilon$ when $\alpha=2$, $M_1=4$, $M_2=K=2$, $(P_1,P_2,P_3,P_4)=\mathrm{Bern}(0.1,0.2,0.3,0.4)$ and $Q_1=P_1$, $Q_2=P_2$. In this case, $\tau_l=1$. Thus, the multivariate Gaussian cdf degenerates to the univariate Gaussian cdf.}
\label{illus:chi}
\end{figure}

Our result for the small deviations regime states as follows.
\begin{theorem}
\label{sr:known}
For each $l\in[T_K]$, given any positive real number $\varepsilon\in(0,1)$ and any tuple of generating distributions $(P^{M_1},Q^{M_2})\in\calP_l^K$, 
\begin{align}
\lambda_{\mathrm{SD}}^*(n,N,\varepsilon|P^{M_1},Q^{M_2})
&=\chi_l^*(n,\varepsilon|P^{M_1},Q^{M_2},K,\alpha)+O(\log n/n).
\end{align}
\end{theorem}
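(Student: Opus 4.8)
The plan is to prove matching achievability and converse bounds on $\lambda_{\mathrm{SD}}^*(n,N,\varepsilon|P^{M_1},Q^{M_2})$, both centered at the constant $\Lambda_l(P^{M_1},Q^{M_2},K,\alpha)$ with a $\Theta(1/\sqrt n)$ correction governed by the multivariate Gaussian cdf $\rvbPh_{\tau_l}$. For the \textbf{achievability} direction I would analyze Unnikrishnan's test $\phi_{n,N}^{\rmU,K}$ with threshold $\lambda_n$ as in \eqref{def:lambdan}. The mismatch probability is handled exactly as in \cite{unnikrishnan2015asymptotically} (method of types plus the penalty term in $\lambda_n$), which forces the mismatch exponent to be at least $\lambda$ uniformly over $(\tilP^{M_1},\tilQ^{M_2})\in\calP_l^K$, costing only $O(\log n/n)$. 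The substantive part is the false reject probability: under $\rmH_l^K$ with the true distributions $(P^{M_1},Q^{M_2})$, the false reject event is $\{h_K(\bX^N,\bY^n)\le\lambda_n\}$, i.e. $\min_{t\neq l}\rmS_t^K(\bX^N,\bY^n)\le\lambda_n$. First I would argue that only the ``closest competing'' hypotheses matter asymptotically: for $t\neq l$ with $\rmG_t^K(P^{M_1},Q^{M_2},\alpha)>\Lambda_l(P^{M_1},Q^{M_2},K,\alpha)$, a large-deviations estimate shows $\Pr\{\rmS_t^K\le\lambda_n\}$ is exponentially small and hence negligible at the $1/\sqrt n$ resolution, so the event essentially reduces to $\min_{t\in\calI_l(P^{M_1},Q^{M_2})}\rmS_t^K(\bX^N,\bY^n)\le\lambda_n$. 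For $t\in\calI_l$, I would Taylor-expand $\rmG_t^K(\hatT_{\bX^N},\hatT_{\bY^n},\alpha)$ around the true types using that each $\mathrm{GJS}(P_i,Q_j,\alpha)$ with $(i,j)\in\calM_t^K$ either vanishes (if $(i,j)\in\calM_l^K$, a second-order/quadratic term) or is a positive constant (if $(i,j)\notin\calM_l^K$, a first-order term whose linearization is the information density $\imath_1$ or $\imath_2$ of \eqref{def:i1}--\eqref{def:i2}). The linear (first-order) terms accumulate to a sum of i.i.d.\ centered contributions whose joint covariance across $t\in\calI_l$ is precisely $\bV^l(P^{M_1},Q^{M_2},\alpha)$ in \eqref{def:vt1t2}; the quadratic terms are $O_p(1/n)$ and absorbed into the remainder. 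Applying the multivariate Berry--Esseen theorem then gives
\begin{align}
\Pr\Big\{\min_{t\in\calI_l}\rmS_t^K\le \Lambda_l - \tfrac{\nu}{\sqrt n}\Big\}
= 1-\rvbPh_{\tau_l}\big(\nu\mathbf{1}_{\tau_l};\mathbf{0}_{\tau_l};\bV^l\big) + O(1/\sqrt n),
\end{align}
so choosing $\lambda$ such that $\lambda_n = \chi_l^*(n,\varepsilon|\cdots) + O(\log n/n)$ makes this probability at most $\varepsilon$, establishing $\lambda_{\mathrm{SD}}^*\ge \chi_l^* + O(\log n/n)$.

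For the \textbf{converse} I would invoke the non-asymptotic converse bound on the false reject probability announced as Lemma \ref{converse:gnp} (the finite-sample refinement of \cite[Lemma 5]{unnikrishnan2015asymptotically}) under the generalized Neyman--Pearson constraint that the mismatch probability decays with exponent at least $\lambda$ uniformly over $\calP_l^K$. This bound should say that any such test has false reject probability lower bounded by the probability, under the true $(P^{M_1},Q^{M_2})$, of a type event of the form ``there exist two distinct hypotheses $t\neq s$ whose scoring functions are both $\le\lambda$ (up to a vanishing slack)''. Specializing the competing hypotheses to pairs drawn from $\calI_l\cup\{l\}$ and performing the same Taylor expansion and Berry--Esseen approximation as in achievability yields the matching upper bound $\lambda_{\mathrm{SD}}^*\le \chi_l^* + O(\log n/n)$. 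Combining the two directions gives the claimed expansion.

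The \textbf{main obstacle} is the delicate second-order analysis of $h_K$, i.e.\ of the \emph{minimum} of the competing scoring functions $\{\rmS_t^K\}_{t\in\calI_l}$: one must (a) correctly identify that $\calI_l(P^{M_1},Q^{M_2})$ is exactly the set of hypotheses contributing at the $1/\sqrt n$ scale and control the exponentially-negligible remainder uniformly, (b) linearize each $\rmG_t^K$ carefully, separating the "matched-in-both'' pairs (which give only $O_p(1/n)$ quadratic fluctuations, via the fact that $\mathrm{GJS}$ has a zero of order two at $P=Q$) from the "matched-in-$t$-only'' pairs (whose linearization produces the information densities and thus the Gaussian fluctuations), and (c) verify that the resulting joint fluctuation vector has covariance exactly $\bV^l$ and apply a multivariate Berry--Esseen bound, handling the degenerate case $\tau_l=1$ separately. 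A secondary technical point is checking that the non-singularity hypothesis needed for $\rvbPh_{\tau_l}$ to be well-defined holds for $\bV^l$; if it can be singular one restricts to the support and the statement still holds with the convention that $\nu_l^*$ is defined via the lower-dimensional Gaussian. Aligning the converse's type-event bound with the achievability's Gaussian approximation — so the $O(\log n/n)$ slacks cancel — is where the bookkeeping is heaviest, and this is essentially where the improvement over \cite[Theorem 4.1]{zhou2018binary} (removal of the condition on the unknown generating distributions) comes from, since the argument here never needs $\calI_l$ to be a singleton or the competing distributions to be in general position.
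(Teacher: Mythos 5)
Your proposal is correct and follows essentially the same route as the paper's proof: achievability via Unnikrishnan's test with a typical-set restriction, Taylor expansion of the scoring functions into information densities, dismissal of the non-binding hypotheses outside $\calI_l$ (the paper uses Chebyshev to get an $O(1/n)$ bound rather than an exponential one, but either suffices at the $1/\sqrt n$ scale), and the multivariate Berry--Esseen theorem with covariance $\bV^l$; converse via Lemma \ref{converse:gnp} followed by the same Gaussian approximation. The only bookkeeping point you gloss over is that $h_K$ is the second-smallest score rather than $\min_{t\neq l}\rmS_t^K$ -- the inequality $h_K\geq\min_{t\neq l}\rmS_t^K$ handles achievability for free, while the converse needs the additional (easy, Chebyshev-type) observation that $\rmS_l^K$ is the smallest score with probability $1-O(1/n)$.
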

The proof of Theorem \ref{sr:known} is given in Section \ref{proof:sr:known}. In the achievability part, we analyze Unnikrishnan's test in \eqref{test:unn} to demonstrate its optimality. Compared with the large deviations analyses in Theorem \ref{ld:known}, the difference lies in the way that we bound the false reject probability. To do so, we apply Taylor expansion to the scoring function of $\rmS_l^K(\bx^N,\by^n)$ for typical realizations $(\bx^N,\by^n)$ of the database sequences and subsequently apply the multivariate Berry-Esseen theorem \cite[Cor. 29]{watanabe2015} (see also \cite[Cor. 1.1]{Tanbook}) to the derived non-asymptotic bounds. As shown in the proof of Theorem \ref{sr:known}, we have independent but not identically distributed random vectors with covariance matrix not equal to the identity matrix. Therefore, the multivariate Berry-Esseen theorem for random vectors with identity covariance matrix (cf. \cite{Ben03}) and the generalizations of the Berry-Esseen theorem for functions of i.i.d. random vectors~(cf. \cite[Prop. 1]{molavianjazi2015second} and \cite[Prop. 1]{iri2015third}) are not applicable. The converse proof uses the non-asymptotic converse bound in Lemma \ref{converse:gnp} on the false reject probability under the generalized Neyman-Pearson criterion and proceeds similarly to the achievability proof using the Taylor expansion of scoring functions and the multivariate Berry-Esseen theorem.

Compared with Theorem \ref{ld:known} that characterizes the \emph{asymptotic} performance tradeoff between the mismatch and false reject probabilities under each hypothesis when the sample size $n\to\infty$, Theorem \ref{sr:known} provides approximation to the performance of Unnikrishnan's test in the finite sample size setting. In particular, Theorem \ref{sr:known} elucidates the tradeoff between the target false reject probability $\varepsilon$ under a particular tuple of distributions and the universal mismatch exponent rate $\chi_l^*(n,\varepsilon|P^{M_1},Q^{M_2},K,\alpha)$ when the sample size $n$ is \emph{finite}. Such a result is known as a second-order expansion in the small deviations regime since $\chi_l^*(n,\varepsilon|P^{M_1},Q^{M_2},K,\alpha)$ characterizes the more refined second-order expansion $\frac{\nu_l^*(\varepsilon|P^{M_1},Q^{M_2},K,\alpha)}{\sqrt{n}}$ beyond the first-order expansion $\Lambda_l(P^{M_1},Q^{M_2},K,\alpha)$ that is also revealed from the large deviations analyses. Theorem \ref{ld:known} complements Theorem \ref{ld:known} by showing that the false reject probability is a constant when the mismatch probability decays with a exponent rate close to $\Lambda_l(P^{M_1},Q^{M_2},K,\alpha)$ on the order of $\frac{1}{\sqrt{n}}$ while Theorem \ref{ld:known} shows that the false reject probability decays exponentially fast if the mismatch exponent rate $\lambda$ is strictly less than $\Lambda_l(P^{M_1},Q^{M_2},K,\alpha)$ asymptotically.

Finally, when specialized to $M_2=K=1$, Theorem \ref{sr:known} establishes a second-order expansion in the small deviations regime for statistical classification of multiple hypotheses and refines~\cite[Theorem 4.1]{zhou2018binary} by removing a restrictive uniqueness assumption. Under this case, $T_K=M$, $\calA_l^K=\{l\}$, $\calB_l^K=1$, $\sigma_l^K=1$, $\calM_l^K=(l,1)$, $Q=P_l$ under hypothesis $\rmH_l^K$ for each $l\in[M]$. Fix $l\in[M]$ and a tuple of generating distributions $(P^{M_1},Q)\in\calP_l^1$. For each $(t_1,t_2)\in[M]^2$,
\begin{align}
\bV_{t_1,t_2}^l(P^{M_1},Q,\alpha)
\nn&=\bigg(\bbo(t_1=t_2)\alpha\mathrm{Cov}\Big(\imath_1(X_{t_1}|P_{t_1},Q,\alpha),\imath_1(X_{t_2}|P_{t_2},Q,\alpha)\Big)\\*
&\qquad\qquad\qquad+\mathrm{Cov}\Big(\imath_2(Y|P_{t_1},Q,\alpha),\imath_2(Y|P_{t_2},Q,\alpha)\Big)\bigg),
\end{align}
where $(X_{t_1},X_{t_2},Y)\sim P_{t_1}\times P_{t_2}\times Q$. If $\lambda\leq \chi_l^*(n,\varepsilon|P^{M_1},Q^{M_2},K,\alpha)$, the false reject probability is upper bound by $\varepsilon$ asymptotically. Our results refine \cite[Theorem 4.1]{zhou2018binary} by removing the constraint on unknown generating distributions $P^{M_1}$, which requires that the minimizer of $\min_{t\in[M_1]:t\neq l}\mathrm{GJS}(P_l,P_t,\alpha)$ is unique for each $l\in[M_1]$. For example, when $M_1=4$ and $(P_1,P_2,P_3,P_4)=\mathrm{Bern}(0.2,0.1,0.327,0.4)$, we have $\tau_1=2$. This case could not be handled by~\cite[Theorem 4.1]{zhou2018binary}. In constrast, using Theorem \ref{sr:known}, we can obtain the second-order expansion in the small deviations regime. In particular, we have 
\begin{align}
\Lambda_l(P^{M_1},Q^{M_2},K,\alpha)&=0.275,\\
\calI_l(P^{M_1},Q^{M_2})&=\{2,3\},\\
\bV^l(P^{M_1},Q^{M_2},\alpha)
&=\left[\begin{array}{cc}
0.0189&-0.12\\
-0.12&0.0174
\end{array}
\right].
\end{align}
A numerical plot of the second-order expansion $\chi_l^*(n,\varepsilon|P^{M_1},Q^{M_2},K,\alpha)$ is provided in Fig. \ref{illus:chi:r2}.

\begin{figure}[tb]
\centering
\includegraphics[width=.5\columnwidth]{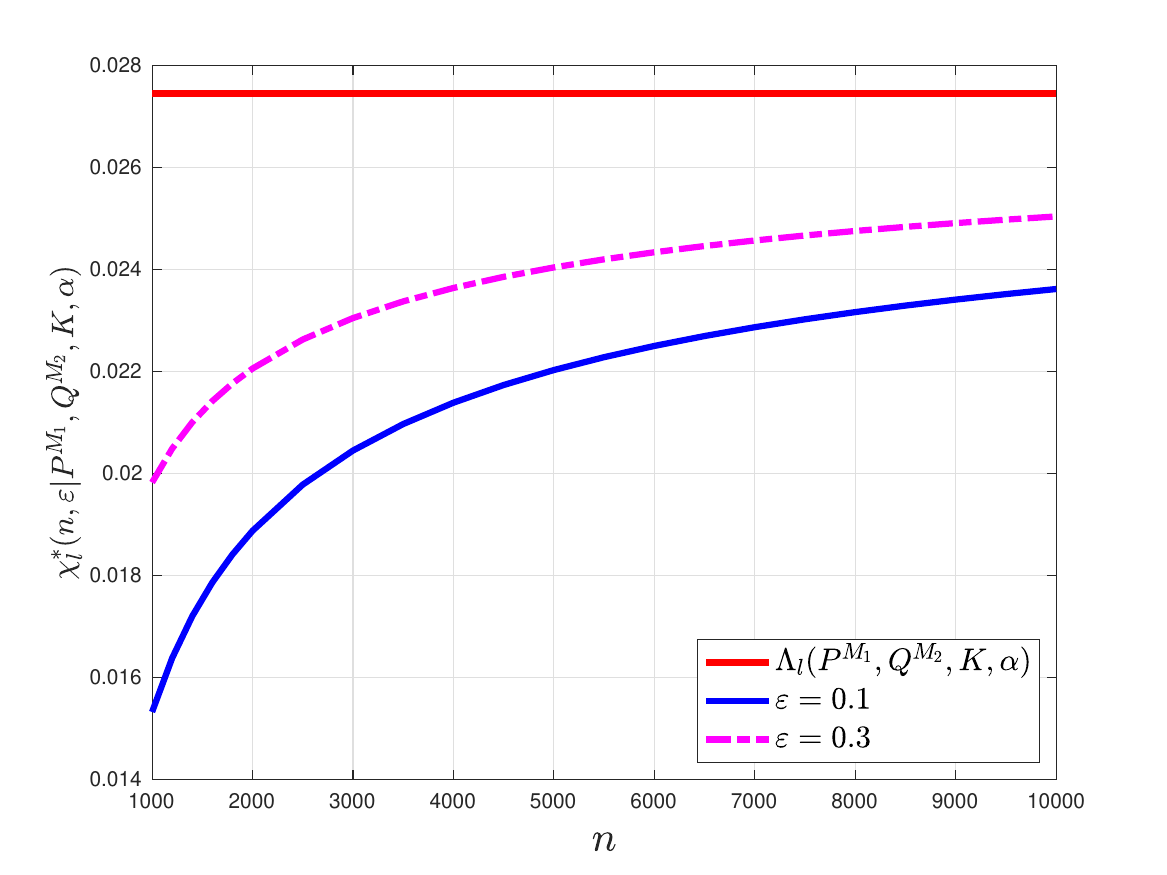}
\caption{Plot of $\chi_l^*(n,\varepsilon|P^{M_1},Q^{M_2},K,\alpha)$ as a function of $n$ for various values of $\varepsilon$ when $\alpha=2$, $M_1=4$, $M_2=K=1$, $(P_1,P_2,P_3,P_4)=\mathrm{Bern}(0.1,0.2,0.327,0.4)$ and $Q=P_1$. In this case, $\tau_l=2$ and the multivariate Gaussian cdf is required to obtain the second-order expansion $\chi_l^*(n,\varepsilon|P^{M_1},Q^{M_2},K,\alpha)$.}
\label{illus:chi:r2}
\end{figure}

\subsection{Comparison with A Simple Test}
In this section, we study the test in Algorithm \ref{simpletest}~\cite[Section IV.A]{unnikrishnan2015asymptotically} and compare its performance with Unnikrishnan's test in \eqref{test:unn}. The test in Algorithm \ref{simpletest} repeatedly checks whether each sequence in the database $\bY^n$ is generated from the same distribution as one of the sequences in the other database $\bX^N$ and outputs a decision for each sequence $Y_j^n$ with $j\in[M_2]$ using Unnikrishnan's test $\phi_{n,N}^{\rmU,K=1}$.

\begin{algorithm}[bt]
\caption{A Simple test $\phi_{n,N}^{\rmS}$}
\label{simpletest}
\begin{algorithmic}
\REQUIRE Two databases $\bx^N=(x_1^N,\ldots,x_{M_1}^N)$ and $\by^n=(y_1^n,\ldots,y_{M_2}^n)$ and a threshold $\lambda\in\bbR_+$
\ENSURE A sequence matching decision $\hatH$
\FOR{$j\in[M_2]$}
\STATE Use Unnikrishnan's test with $K=1$ to obtain $\hat{\rmH}_j=\phi_{n,N}^{\rmU,K=1}(\bx^N,y_j^n)$ with threshold $\lambda_n$ in \eqref{def:lambdan}, where $\hat{\rmH}_j$ takes values in $\{\{\rmH_l^K\}_{l\in[M_1]},\rmH_\rmr\}$, $\rmH_\rmr$ means that $y_j^n$ is not matched to any sequence in $\bx^N$ and $\rmH_l^K$ means that $y_j^n$ is matched to sequence $x_l^N$
\ENDFOR
\end{algorithmic}
\end{algorithm}

Similar to \cite{unnikrishnan2015asymptotically}, we assume that $M_2=K$ and thus $\calB_l^K=[M_2]$ for all $l\in[T_K]$. The assumption of $M_2=K$ simplifies the analyses of mismatch and false reject probabilities for the simple test. In this case, a false reject event occurs if for any $j\in[M_2]$, the sequence $y_j^n$ is given a reject decision, i.e., $\hat{\rmH}_j=H_\rmr$ and an mismatch event occurs if for any $j\in[M_2]$, the sequence $y_j^n$ is misclassified, i.e., $\hat{\rmH}_j=i$ such that $\sigma_l^K(i)\neq j$. 

For each $l\in[T_K]$ and any tuple of generating distributions $(\tilP^{M_1},\tilQ^{M_2})\in\calP_l^K$, the mismatch probability of $\phi_{n,N}^{\rmS}$ satisfies
Fix each $l\in[T_K]$ and any tuple of distributions $(P^{M_1},Q^{M_2})\in\calP_l^K$, the mismatch and false reject probabilities of 
\begin{align}
\beta(\phi_{n,N}^{\rmS}|P^{M_1},Q^{M_2})
&\leq \sum_{j\in[M_2]}\Pr\{\phi_{n,N}^{\rmU,K=1}(\bX^N,Y_j^n)\notin\{\rmH_\rmr,\rmH_l^K\}\},\\
\zeta(\phi_{n,N}^{\rmS}|P^{M_1},Q^{M_2})
&\leq \sum_{j\in[M_2]}\Pr\{\phi_{n,N}^{\rmU,K=1}(\bX^N,Y_j^n)=\rmH_\rmr\},
\end{align}

Recursively applying Theorem \ref{ld:known} with $M_2=K=1$, we have 
\begin{enumerate}
\item the universal mismatch exponent satisfies
\begin{align}
\liminf_{n\to\infty}-\frac{1}{n}\log\Big(\max_{l\in[T_K]}\sup_{(\tilP^{M_1},\tilQ^{M_2})\in\calP_l^K}\beta(\phi_{n,N}^{\rmS}|\tilP^{M_1},\tilQ^{M_2})\Big)&\geq \lambda,\label{simple:same:mismatch}
\end{align}
\item for each $l\in[T_K]$ and any tuple of generating distributions $(P^{M_1},Q^{M_2})\in\calP_l^K$, the false reject exponent satisfies
\begin{align}
\liminf_{n\to\infty}-\frac{1}{n}\log\zeta(\phi_{n,N}^{\rmS}|P^{M_1},Q^{M_2})
&\geq \min_{j\in[M_2]}F_l(P^{M_1},Q_j,\alpha,\lambda,K)\\
&=\min_{j\in[M_2]}\min_{\substack{(t,s)\in[M_1]:\\t\neq s}}
\min_{\substack{(\Omega^{M_1},\Psi)\in\calP^{M_1+M_2}(\calX):\\\mathrm{GJS}(\Omega_t,\Psi,\alpha)\leq \lambda\\
\mathrm{GJS}(\Omega_s,\Psi,\alpha)\leq \lambda}}\Big(\sum_{i\in[M_1]}\alpha D(\Omega_i\|P_i)+D(\Psi\|Q_j)\Big).
\end{align}
\end{enumerate}
Compared with Theorem \ref{ld:known}, we find that the simple test in Algorithm \ref{simpletest} achieves the same universal mismatch exponent rate as the optimal test $\phi_{n,N}^{\rmU}$ but the false reject exponent rate can be smaller. Specifically, in Appendix \ref{just:simpleworse}, we show that if $K>1$,
\begin{align}
F_l(P^{M_1},Q^{M_2},\alpha,\lambda,K)
\geq\min_{j\in[M_2]}F_l(P^{M_1},Q_j,\alpha,\lambda,K)\label{simpleworse}.
\end{align}
However, whether the above inequality holds strictly or with equality remains to be explored. Our numerical example in Fig. \ref{freject_com} implies that the simulated false reject exponent of both tests are equal as the sample length $n\to\infty$, although the non-asymptotic performance of Unnikrishnan's test is better.

Analogously, one could also apply the small deviations results in Theorem \ref{sr:known} to obtain corresponding results for the simple test. However, as the second-order expansion in the small deviations regime involves rather complicated equations, we omit this analysis here and instead, we compare the numerical performances of both tests by simulation in the next subsection.

\subsection{A Numerical Example}
Consider the binary alphabet $\calX=\{0,1\}$ and let $M_1=4$, $M_2=K=2$. This yields $T_K=12$ hypotheses as discussed in Section \ref{sec:pf:known}, covering the set of all possible matches of sequences across two databases. Without loss of generality, assume that hypothesis $\rmH_1$ is true such that $P_1=Q_1$ and $P_2=Q_2$. If not otherwise stated, we consider distributions $(P^4,Q^2)$ such that $P_1=Q_1=\mathrm{Bern}(0.1)$, $P_2=Q_2=\mathrm{Bern}(0.11)$, $P_3=\mathrm{Bern}(0.12)$ and $P_4=\mathrm{Bern}(0.13)$. Note that the distributions are required in the numerical example to generate the sequence samples. The tests studied in this paper proceed without knowledge of generating distributions and our theoretical results hold under any tuple of generating distributions.

Firstly, to illustrate the theoretical findings in Theorem \ref{ld:known}, we present simulated results of mismatch and false reject exponents of Unnikrishnan's test in \eqref{test:unn} and the simple test in Algorithm \ref{simpletest}, both with the threshold $\lambda=10^{-4}$. Similar to \cite{zhou2022second}, the choice of such a small $\lambda$ is selected to ensure that the error probabilities can be numerically approximated without too many simulation trials. Ideally, one would simulate the maximal mismatch probability over all possible tuples of generating distributions $(P^{M_1},Q^{M_2})\in\calP_1^K$ and calculate the corresponding exponent rate. However, in practice, we could only simulate the maximal mismatch probability for a limited set of distributions. In our numerical here, we consider any $(P^{M_1},Q^{M_2})$ such that $P_1=Q_1$ can be any distribution among Bernoulli distributions with parameters $(0.1,0.105,0.115,0.125.0.135,0.14)$ and other distributions are set as in the last paragraph. For simplicity, we use $\calQ$ to denote the set of above distributions. For each sample size $n$, we run each test $10^6$ times and calculate the empirical exponents for the maximal mismatch probability under all considered tuples of generating distributions. The simulated mismatch error exponents of both tests are plotted and compared with theoretical bounds in Fig. \ref{misclassify_com}. We observe that simulated mismatch exponents of both tests converge towards the theoretical bound $\lambda$ as the sample size $n$ increases, which numerically confirm theoretical findings in Theorems \ref{ld:known} and \eqref{simple:same:mismatch}, respectively. Finally, we remark that the gap between the simulated exponent values and the asymptotic theoretical limit results from the uncharacterized high-order terms in the large deviations regime~\cite{altug14a,altug14b}.

\begin{figure}[tb]
\centering
\includegraphics[width=.5\columnwidth]{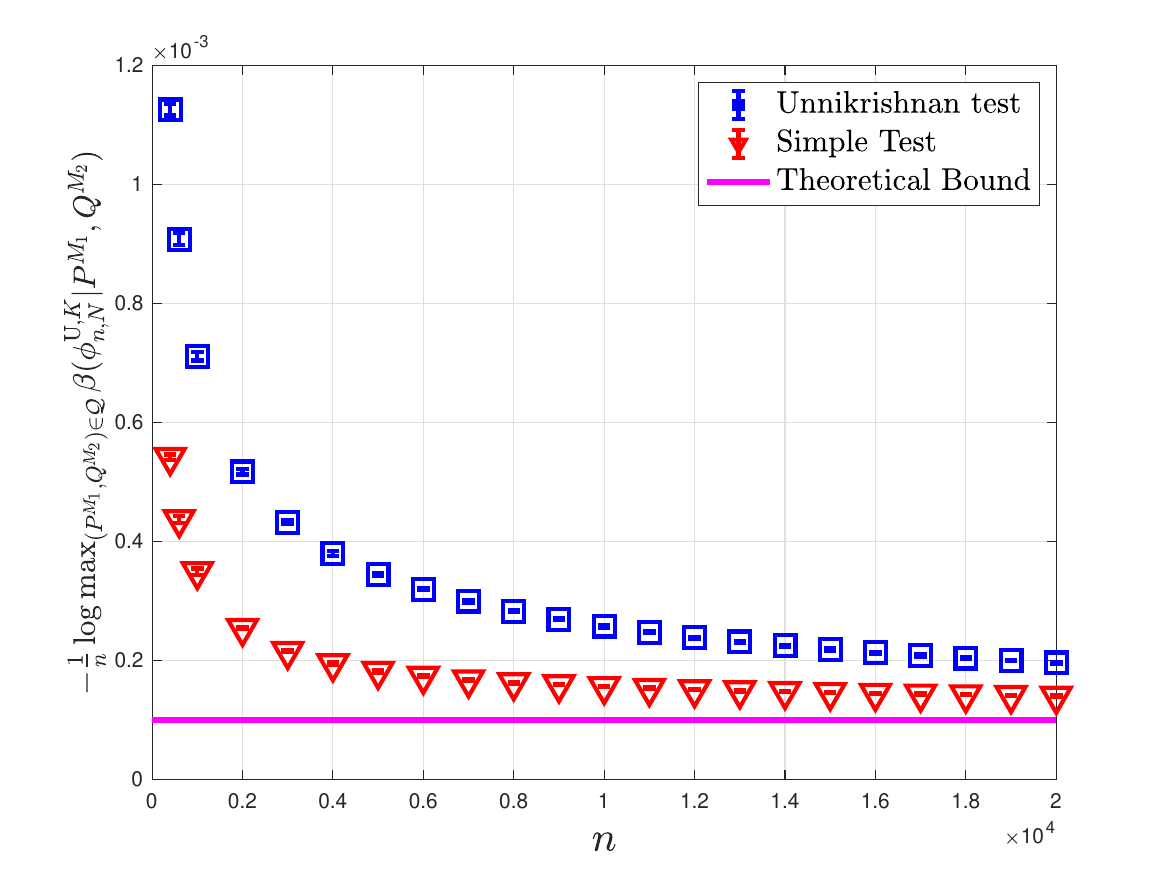}
\caption{Simulated exponents for the maximal mismatch probabilities of Unnikrishnan's test and the simple test with threshold $\lambda=10^{-4}$ when $M_1=4$, $M_2=K=2$ under any tuple of generating distributions $(P^{M_1},Q^{M_2})\in\calQ$. The error bar denotes two standard deviations below and above the mean value. The plot empirically confirms the mismatch exponents of both tests converge to the asymptotic lower bounded $\lambda$ claimed in Theorems \ref{ld:known} and \eqref{simple:same:mismatch} as the sample size $n$ increases. On the right hand side, the simulated false reject exponents of both tests are plotted.}
\label{misclassify_com}
\end{figure}

\begin{figure}[tb]
\centering
\includegraphics[width=.5\columnwidth]{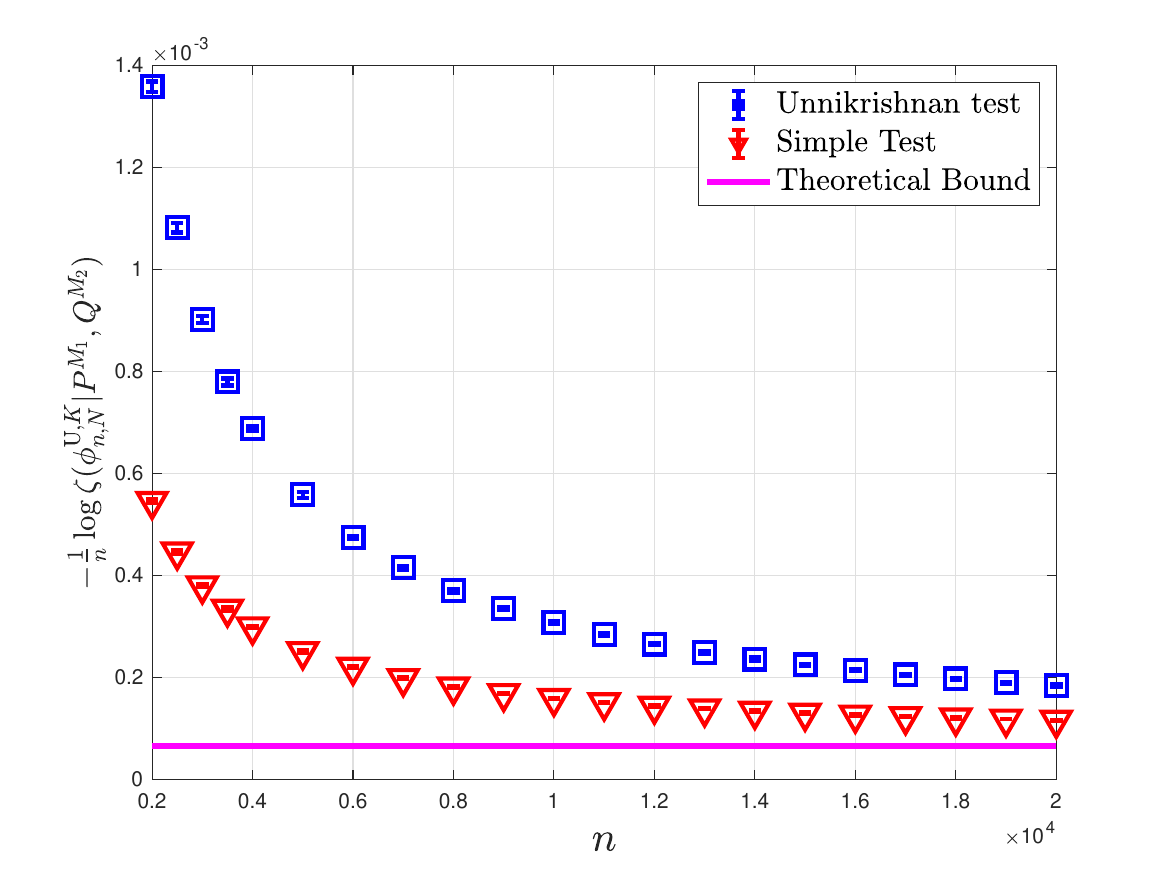}
\caption{Simulated false reject exponents of Unnikrishnan's test and the simple test with threshold $\lambda=10^{-4}$ when $M_1=4$, $M_2=K=2$ for generating distributions $P_1=Q_1=\mathrm{Bern}(0.1)$, $P_2=Q_2=\mathrm{Bern}(0.11)$, $P_3=\mathrm{Bern}(0.12)$ and $P_4=\mathrm{Bern(0.13)}$. The error bar denotes two standard deviations below and above the mean value. As observed, the plots empirically confirms that the false reject exponent of Unnikrishnan's test converges towards the theoretical bound in Theorem \ref{ld:known}.}
\label{freject_com}
\end{figure}

Furthermore, we numerically simulate the false reject exponents of both tests and compare the results with theoretical benchmarks in Fig. \ref{freject_com} under the same setting of Fig. \ref{misclassify_com} except we fix $P_1=Q_1=\mathrm{Bern}(0.1)$.  As observed, Unnikrishnan's test achieves larger non-asymptotic false reject exponent than the simple test and the false reject exponents of both tests converge to the same theoretical value as the sample size $n$ increases, which numerically confirms theoretical findings in \eqref{simpleworse}. However, whether the simple test achieves strictly smaller false reject exponent requires further investigation.

Finally, to demonstrate the tightness of the second-order expansion in the small deviations regime in Theorem \ref{sr:known}, we simulate Unnikrishnan's test in \eqref{test:unn}. We set the target false reject probability as $\varepsilon=0.1$ and set the threshold $\lambda=\chi_l^*(n,\varepsilon|P^{M_1},Q^{M_2},K,\alpha)$ as in Theorem \ref{sr:known}. The simulated false reject probabilities for $P_1=\mathrm{Bern}(0.1)$ and various choices of $(P_2,P_3,P_4)$ are plotted in Fig \ref{freject_coro}. The distributions are chosen so that the second-order expansion $\chi_l^*(\varepsilon|P^{M_1},Q^{M_2},K,\alpha)$ in the small deviations regime is positive and large enough for the simulated sample sizes. As observed, the simulated false reject probabilities converge towards the target value $\varepsilon$ as the sample size $n$ increases. The difference in the speed of approaching for different curves mainly results from the uncharacterized higher order terms in $O(\log n/n)$, which becomes relatively smaller for larger $\Lambda_l(P^{M_1},Q^{M_2},K,\alpha)$.

\begin{figure}[tb]
\centering
\includegraphics[width=.5\columnwidth]{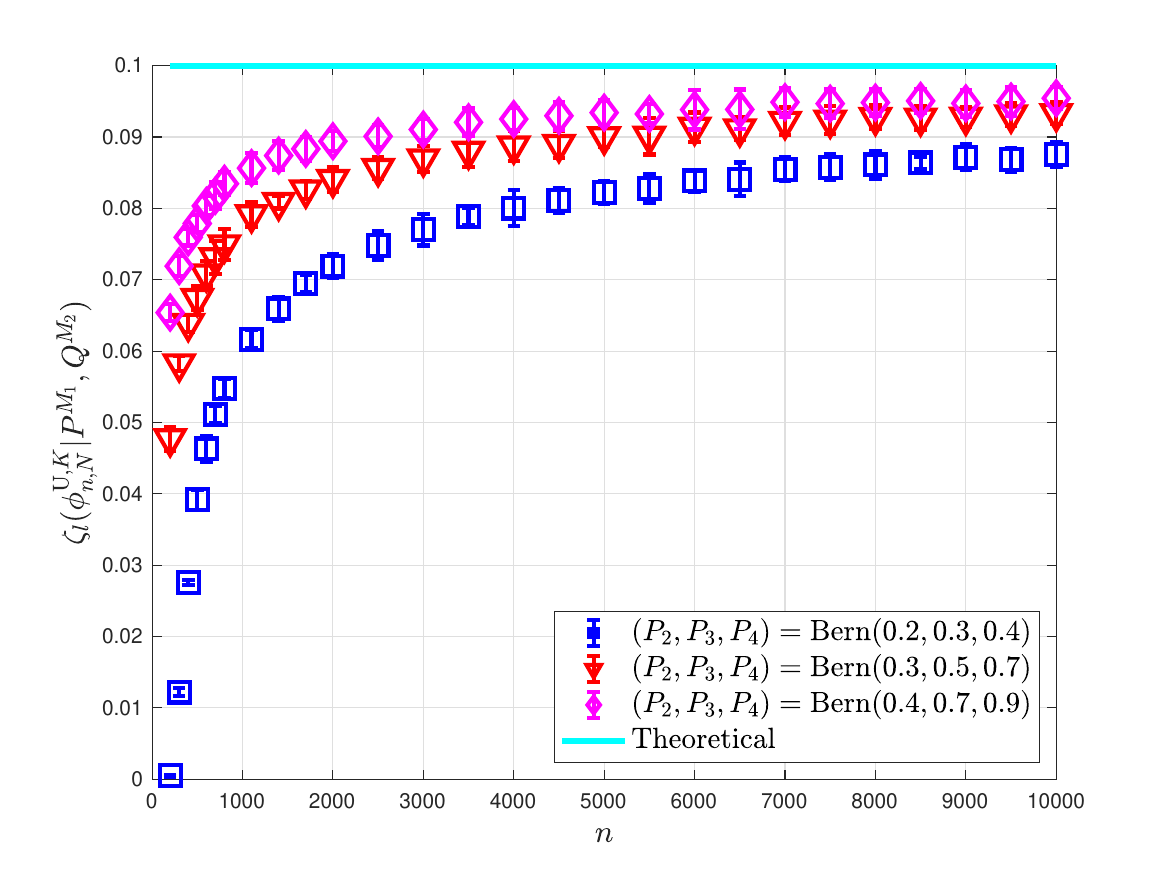}
\caption{Simulated false reject probabilities of Unnikrishnan's test with the threshold $\lambda$ in Theorem \ref{sr:known} when $M_1=4$, $M_2=K=2$ for $P_1=\mathrm{Bern}(0.1)$ and $(P_2,P_3,P_4)$ selected as three different pairs of Bernoulli distributions. The error bar denotes two standard deviations below and above the mean value. For all considered cases, as the sample size $n$ increases, the simulated false reject probability approaches the theoretical bound $\varepsilon$ as desired.}
\label{freject_coro}
\end{figure}

\section{Results for Unknown Number of Matches}
\label{sec:unknown}

In this section, we generalize the results in Section \ref{sec:known} to the more practical case where the number of matches between two databases $\bX^N$ and $\bY^n$ is \emph{unknown}. Compared with the case of known number of matches, we need to search for all possibilities of the number of matches from $0$ to $\min\{M_1,M_2\}$ and analyze the additional false alarm probability $\eta(\phi_{n,N}|P^{M_1},Q^{M_2})$ in \eqref{def:etar} that bounds the probability of the event where matches of sequences are claimed while there is no matched sequence pair. We first propose a slightly modified version of Unnikrishnan's test in Algorithm \ref{modifedtest}, which first estimates the unknown number of matches $K$ as $\hatK$ and then runs Unnikrishnan's test in \eqref{test:unn} with $K=\hatK$. Subsequently, we analyze the achievable performance of the test based on the same techniques that were used to prove results in Section \ref{sec:known}. 

\subsection{Variant of Unnikrishnan's Test}
Recall that $M_1\geq M_2$ and $T_K={M_1\choose K}{M_2\choose K}K!$. When the number of matches $K$ is unknown, it can be any number from $0$ to $M_2$. Thus, the total number of hypotheses is $T+1=\sum_{K\in[M_2]}T_K+1$. For each $K\in[M_2]$, we use $\calH_K$ to denote the set of all $T_K$ hypotheses with $K$ matches between the two databases $\bx^N=(x_1^N,\ldots,x_{M_1}^N)$ and $\by^n=(y_1^n,\ldots,y_{M_2}^n)$. Recall the definitions of $\rmS_l^K(\cdot)$ in \eqref{def:Sl}, $l_K^*(\cdot)$ and $h_K(\cdot)$ in \eqref{def:l4unn} and \eqref{def:h4unn}, respectively. Furthermore, for each $K\in[M_2]$, let
\begin{align}
\underline{S}_{K}(\bx^N,\by^n):=\rmS_{l_K^*(\bx^N,\by^n)}^K(\bx^N,\by^n),
\end{align}
be the minimal scoring function when the number of matches is $K$. Given any $(\lambda_1,\lambda_2)\in\bbR_+^2$, let
\begin{align}
\lambda_{i,n}:=\lambda_i+\frac{K|\calX|\log ((1+\alpha)n+1)}{n},~i\in[2]\label{def:lambdain}.
\end{align}

We present a test in Algorithm \ref{modifedtest} that first estimates the number of matches by comparing a certain scoring function with the threshold $\lambda_{1,n}$ and then implements Unnikrishnan's test with the threshold $\lambda_{2,n}$ using the estimated number of matches.

\begin{algorithm}[bt]
\caption{A test $\phi_{n,N}^{\rmM}$ for unknown number of matches}
\label{modifedtest}
\begin{algorithmic}
\REQUIRE Two databases $\bx^N=(x_1^N,\ldots,x_{M_1}^N)$ and $\by^n=(y_1^n,\ldots,y_{M_2}^n)$ and two thresholds $(\lambda_1,\lambda_2)\in\bbR_+^2$
\ENSURE A decision $\hat{\rmH}$ 
\STATE Set $\hatK=M_2$
\WHILE{$\hatK>0$}
\STATE Calculate $\underline{S}_{\hatK}(\bx^N,\by^n)$ with $K$ replaced by $\hatK$.
\IF{$\underline{S}_{\hatK}(\bx^N,\by^n)\leq \lambda_{1,n}$}
\STATE Return $\hat{\rmH}=\phi_{n,N}^{\rmU,\hatK}(\bx^N,\by^n)$ using Unnikrishnan's test in \eqref{test:unn} with the threshold $\lambda_n$ replaced by $\lambda_{2,n}$.
\ELSIF{$\hatK=0$}
\STATE Return $\hat{\rmH}=\rmH_\rmr$
\ENDIF
\STATE $\hatK=\hatK-1$
\ENDWHILE
\end{algorithmic}
\end{algorithm}
Again, we provide intuition into why such a test works well using the weak law of large numbers. It suffices to establish that the number of matches is estimated accurately given that the weak law of large numbers has already been used to show asymptotic convergence of Unnikrishnan's test in the case of known number of matches (cf. Section \ref{asymp:intuition}). We first assume that true hypothesis is $\rmH_l^K$ where $l\in[T_K]$ for some $K\in[M_2]$. In this case, the true number of matches $K$ is strictly positive. For any $\hatK>K$, we have
\begin{align}
\underline{S}_{\hatK}(\bx^N,\by^n)
&=\min_{t\in[T_{\hatK}]}\sum_{(i,j)\in\calM_t^{\hatK}}\mathrm{GJS}(\hatT_{x_i^N},\hatT_{y_j^n},\alpha)\\
&\to \min_{t\in[T_{\hatK}]}\sum_{(i,j)\in\calM_t^{\hatK}}\mathrm{GJS}(P_i,Q_j,\alpha)\label{almostsure:conver2}\\
&=\min_{t\in[T_{\hatK}]}\sum_{(i,j)\in(\calM_t^{\hatK}\cap(\calM_l^K)^\rmc)}\mathrm{GJS}(P_i,Q_j,\alpha)\label{strictpos0},
\end{align}
where \eqref{almostsure:conver2} holds almost surely due to the weak law of large numbers and the continuous property of the function $\mathrm{GJS}(P,Q,\alpha)$ and \eqref{strictpos0} follows since under hypothesis $\rmH_l^K$, $P_i=Q_j$ if $(i,j)\in\calM_l^K$.

Given any $\hatK>1$, for any hypothesis $\rmH_t^{\hatK}$ with $\hatK$ matched pairs specified by the set $\calM_t^{\hatK}$, one can always find another hypothesis 
$\rmH_j^{\hatK-1}$ with $\hatK-1$ matched pairs specified by $\calM_j^{\hatK-1}$ such that $\calM_t^{\hatK}\cap(\calM_j^{\hatK-1})^\rmc$ is a single matched pair $(\bari,\barj)\in[M_1]\times[M_2]$. Thus,
\begin{align}
\sum_{(i,j)\in\calM_t^{\hatK}\cap(\calM_l^K)^\rmc}\mathrm{GJS}(P_i,Q_j,\alpha)
\geq \sum_{(i,j)\in\calM_j^{\hatK-1}\cap(\calM_l^K)^\rmc}\mathrm{GJS}(P_i,Q_j,\alpha)\label{nonincreasing},
\end{align}
where the equality holds if $(\bari,\barj)\in\calM_l^K$. It follows that
\begin{align}
\nn&\min_{\hatK\in[M_2]:\hatK>K}\min_{t\in[T_{\hatK}]}\sum_{(i,j)\in(\calM_t^{\hatK}\cap(\calM_l^K)^\rmc)}\mathrm{GJS}(P_i,Q_j,\alpha)\\
&=\min_{t\in[T_{K+1}]}\sum_{(i,j)\in\calM_t^{K+1}\cap(\calM_l^K)^\rmc}\mathrm{GJS}(P_i,Q_j,\alpha)\label{just:asymp:intui:step1}\\
&=\min_{\substack{(i,j)\in[M_1]\times[M_2]:\\i\notin\calA_l^K,~j\notin\calB_l^K}}\mathrm{GJS}(P_i,Q_j,\alpha)\label{just:asymp:intui:step2}\\
&=:\rmG_{\mathrm{min}}^{l,K}(P^{M_1},Q^{M_2},\alpha)\label{def:GminlK},
\end{align}
where \eqref{just:asymp:intui:step1} follows from the result in \eqref{nonincreasing}, and \eqref{just:asymp:intui:step2} follows since $\calM_t^{K+1}\cap(\calM_l^K)^\rmc$ is a single pair of indices $(i,j)\in[M_1]\times[M_2]$ such that $i\notin\calA_l^K$ and $j\notin\calB_l^K$.

Furthermore, when $\hatK=K$,
\begin{align}
\underline{S}_{\hatK}(\bx^N,\by^n)
&=\min_{t\in[T_K]}\sum_{(i,j)\in\calM_t^K}\mathrm{GJS}(\hatT_{x_i^N},\hatT_{y_j^n},\alpha)\\
&\to \min_{t\in[T_{\hatK}]}\sum_{(i,j)\in\calM_t^K}\mathrm{GJS}(P_i,Q_j,\alpha)\label{same:almostsure}\\
&\leq \sum_{(i,j)\in\calM_l^K}\mathrm{GJS}(P_i,Q_j,\alpha)\\
&=0\label{sameaspos0},
\end{align}
where \eqref{same:almostsure} follows almost surely similarly to \eqref{almostsure:conver2} and \eqref{sameaspos0} follows for the same reason as \eqref{strictpos0}.

Combining \eqref{strictpos0}, \eqref{def:GminlK} and \eqref{sameaspos0}, we conclude that if the threshold $\lambda_1$ satisfies $0<\lambda_1<\rmG_{\mathrm{min}}^{l,K}(P^{M_1},Q^{M_2},\alpha)$, the estimated number of matches satisfies $\hatK=K$. It then follows from the asymptotic intuition in Section \ref{asymp:intuition} that no mismatch or false alarm event occurs as $n\to\infty$ when the number of matches $K$ is known.

We next consider the case that $K=0$, which corresponds to the null hypothesis $\rmH_\rmr$ that there are no matches. Note that in this case, for any $\hatK>0$, it follows that
\begin{align}
\min_{\hatK\in[M_2]}\underline{S}_{\hatK}(\bx^N,\by^n)
&\to \min_{\hatK\in[M_2]}\min_{t\in[T_{\hatK}]}\sum_{(i,j)\in\calM_t^{\hatK}}\mathrm{GJS}(P_i,Q_j,\alpha)\label{same2:almostsure}\\
&=\min_{(i,j)\in[M_1]\times[M_2]}\mathrm{GJS}(P_i,Q_j,\alpha)\\
&=:\rmG_{\mathrm{min}}^0(P^{M_1},Q^{M_2},\alpha),
\end{align}
where \eqref{same2:almostsure} follows almost surely similarly to \eqref{almostsure:conver2}. Therefore, when $0<\lambda_1<\rmG_{\mathrm{min}}^0(P^{M_1},Q^{M_2},\alpha)$, the output of the test $\phi_{n,N}^\rmM$ is $\rmH_\rmr$ when $K=0$, which implies that no false alarm occurs.

Combining the above intuition together, we conclude that the test in Algorithm \ref{modifedtest} has good asymptotic performance. In the next two subsections, analogous to Section \ref{sec:known} for known number of matches, we characterize its performance by characterizing the tradeoff among the probabilities of mismatch, false reject and false alarm in both large and small deviations regimes.

One might wonder whether it is possible to run Unnikrishnan's test in \eqref{test:unn} repeatedly instead of using our proposed two-phase test when the number of matches is unknown. Unfortunately, this is not practical. Since the real number of matches $K$ is unknown, one would need to run the test in \eqref{test:unn} repeatedly for all $\hatK\in[M_2]$. However, each run of the test in \eqref{test:unn} returns either a decision $\rmH_l^{\hatK}$ or the null hypothesis $\rmH_\rmr$ when $\hatK$ is used as the number of matches. There is no obvious way of combining all test results since these runs may give inconsistent decisions.

\subsection{Large Deviations}
We first consider the asymptotic case when the sample size $n\to\infty$ and characterize the achievable exponent rate of all three probabilities under each hypothesis. Given any $K\in[M_2]$ and $l\in[T_K]$, for any $\lambda_1\in\bbR_+$, define the exponent function
\begin{align}
f_{l,K}(\lambda_1,P^{M_1},Q^{M_2})
&:=\min_{\substack{(i,j)\in[M_1]\times[M_2]:\\i\notin\calA_l^K,j\notin\calB_l^K}}\min_{\substack{(\Omega,\Psi)\in\calP^2(\calX):\\\mathrm{GJS}(\Omega,\Psi,\alpha)\leq\lambda_1}}\Big(\alpha D(\Omega\|P_i)+D(\Psi\|Q_j)\Big)\label{def:flambda1}.
\end{align}
Furthermore, let
\begin{align}
f_0(\lambda_1,P^{M_1},Q^{M_2})
&:=\min_{\substack{(i,j)\in[M_1]\times[M_2]}}\min_{\substack{(\Omega,\Psi)\in\calP^2(\calX):\\\mathrm{GJS}(\Omega,\Psi,\alpha)\leq\lambda_1}}\Big(\alpha D(\Omega\|P_i)+D(\Psi\|Q_j)\Big)\label{def:flambda0}.
\end{align}
As we shall see, $f_{l,K}(\lambda_1,P^{M_1},Q^{M_2})$ bounds mismatch and false reject exponents while $f_0(\lambda_1,P^{M_1},Q^{M_2})$ bounds the false alarm exponent rate. Note that when $K=M_2$, $f_{l,K}(\lambda_1,P^{M_1},Q^{M_2})=\infty$ since the feasible set is empty for the outer minimization.

Useful properties of $f_{l,K}(\lambda_1,P^{M_1},Q^{M_2})$ are summarized in the following lemma and proved in Appendix \ref{proof:prop:flambda1}.
\begin{lemma}
\label{prop:flambda1}
The following claims hold for each $K\in[M_2]$ and $l\in[T_K]$.
\begin{enumerate}
\item Monotonicity: $f_{l,K}(\lambda_1,P^{M_1},Q^{M_2})$ is non-increasing in $\lambda_1$.
\item Zero condition: $f_{l,K}(\lambda_1,P^{M_1},Q^{M_2})=0$ if and only if
\begin{align}
\lambda_1\geq \rmG_{\mathrm{min}}^{l,K}(P^{M_1},Q^{M_2},\alpha),
\end{align}
where $\rmG_{\mathrm{min}}^{l,K}(\cdot)$ was defined in \eqref{def:GminlK}. 
\item Maximal value: the maximal value of $f_{l,K}(\lambda_1,P^{M_1},Q^{M_2})$ is
\begin{align}
f_{l,K}(0,P^{M_1},Q^{M_2})=\min_{\substack{(i,j)\in[M_1]\times[M_2]:\\i\notin\calA_l^K,j\notin\calB_l^K}}D_{\frac{\alpha}{1+\alpha}}(P_i\|Q_j),
\end{align}
where the R\'enyi divergence~\cite{renyi1961measures} of order $\gamma\in\bbR_+$ between distributions $(P,Q)\in\calP(\calX)^2$ is defined as
\begin{align}
D_{\gamma}(P\|Q):=\frac{1}{\gamma-1}\log \bigg(\sum_{x \in\calX}P^{\gamma}(x)Q^{ 1-\gamma}(x)\bigg)\label{def:renyi}.
\end{align}

\end{enumerate}
\end{lemma}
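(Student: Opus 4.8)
The plan is to establish the three claims in sequence, closely paralleling the proof of Lemma \ref{prop:Fl}; each step is elementary once the structure of \eqref{def:flambda1} is unpacked. Throughout I use that $\mathrm{GJS}(\cdot,\cdot,\alpha)$ is continuous and bounded on the compact set $\calP^2(\calX)$, that $D(\cdot\|\cdot)$ is nonnegative and lower semicontinuous, and that the outer index set $\{(i,j)\in[M_1]\times[M_2]:i\notin\calA_l^K,\ j\notin\calB_l^K\}$ is nonempty precisely when $K<M_2$ (since $|\calA_l^K|=|\calB_l^K|=K$ and $M_1\ge M_2$), the case $K=M_2$ giving $f_{l,K}=\infty=\rmG_{\mathrm{min}}^{l,K}(P^{M_1},Q^{M_2},\alpha)$ consistently.

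For Claim (i), monotonicity is immediate: increasing $\lambda_1$ enlarges the feasible set $\{(\Omega,\Psi)\in\calP^2(\calX):\mathrm{GJS}(\Omega,\Psi,\alpha)\le\lambda_1\}$ of the inner minimization in \eqref{def:flambda1}, while leaving the outer minimization over $(i,j)$ unchanged; minimizing a fixed objective over a larger set can only decrease the value, so $f_{l,K}(\lambda_1,P^{M_1},Q^{M_2})$ is non-increasing in $\lambda_1$.

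For Claim (ii), I would first note $f_{l,K}\ge 0$ because $\alpha D(\Omega\|P_i)+D(\Psi\|Q_j)\ge 0$, with equality iff $(\Omega,\Psi)=(P_i,Q_j)$. The inner feasible set is compact and nonempty (it contains every $(\Omega,\Omega)$), so the infimum in \eqref{def:flambda1} is attained. If $\lambda_1\ge\rmG_{\mathrm{min}}^{l,K}(P^{M_1},Q^{M_2},\alpha)$, choosing $(i^\ast,j^\ast)$ that attains the minimum in \eqref{def:GminlK} makes $(\Omega,\Psi)=(P_{i^\ast},Q_{j^\ast})$ feasible with objective value $0$, hence $f_{l,K}=0$. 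Conversely, if $f_{l,K}=0$, an attained minimizer $(i^\ast,j^\ast,\Omega^\ast,\Psi^\ast)$ must satisfy $\Omega^\ast=P_{i^\ast}$, $\Psi^\ast=Q_{j^\ast}$ and $\mathrm{GJS}(P_{i^\ast},Q_{j^\ast},\alpha)\le\lambda_1$, so $\rmG_{\mathrm{min}}^{l,K}(P^{M_1},Q^{M_2},\alpha)\le\lambda_1$; this gives the ``only if'' direction.

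For Claim (iii), by Claim (i) the maximal value is $f_{l,K}(0,P^{M_1},Q^{M_2})$. At $\lambda_1=0$ the constraint $\mathrm{GJS}(\Omega,\Psi,\alpha)\le 0$ forces $\mathrm{GJS}(\Omega,\Psi,\alpha)=0$, which by strict positivity of KL divergence and the definition \eqref{def:GJS} holds iff $\Omega=\Psi$; hence
\begin{align}
f_{l,K}(0,P^{M_1},Q^{M_2})=\min_{\substack{(i,j):\ i\notin\calA_l^K,\ j\notin\calB_l^K}}\ \min_{\Omega\in\calP(\calX)}\big(\alpha D(\Omega\|P_i)+D(\Omega\|Q_j)\big).
\end{align}
The remaining ingredient is the standard identity $\min_{\Omega\in\calP(\calX)}\big(\alpha D(\Omega\|P)+D(\Omega\|Q)\big)=D_{\frac{\alpha}{1+\alpha}}(P\|Q)$: writing $\alpha D(\Omega\|P)+D(\Omega\|Q)=\sum_x\Omega(x)\log\frac{\Omega(x)}{P(x)^{\alpha}Q(x)}$ and completing to a KL divergence against the normalized geometric mixture $\Omega^\ast(x)\propto P(x)^{\alpha/(1+\alpha)}Q(x)^{1/(1+\alpha)}$ shows the minimizer is $\Omega^\ast$, and substituting it and comparing with \eqref{def:renyi} (with $\gamma=\frac{\alpha}{1+\alpha}$, so $\frac{1}{\gamma-1}=-(1+\alpha)$) yields exactly $D_{\frac{\alpha}{1+\alpha}}(P\|Q)$. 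Taking the outer minimum over $(i,j)$ then gives the claimed expression. I do not anticipate a genuine obstacle here: the only mildly delicate points are the measure-theoretic bookkeeping for distributions with vanishing entries — handled by lower semicontinuity of KL and continuity/boundedness of $\mathrm{GJS}$, with the convention that disjoint supports make both sides $+\infty$ — and the verification of the minimization identity, which is a one-line tilting argument.
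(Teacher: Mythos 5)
Your proof is correct and follows essentially the same route as the paper's: monotonicity from nesting of the inner feasible sets, the zero condition from feasibility of $(\Omega,\Psi)=(P_i,Q_j)$ for some admissible $(i,j)$, and the value at $\lambda_1=0$ via the geometric-mixture tilting identity (the paper obtains the same identity from the KKT conditions, citing Tuncel). The only blemish is the intermediate display $\alpha D(\Omega\|P)+D(\Omega\|Q)=\sum_x\Omega(x)\log\frac{\Omega(x)}{P(x)^{\alpha}Q(x)}$, where the numerator should read $\Omega(x)^{1+\alpha}$; your stated minimizer $\Omega^{\ast}(x)\propto P(x)^{\alpha/(1+\alpha)}Q(x)^{1/(1+\alpha)}$ and the resulting value $D_{\frac{\alpha}{1+\alpha}}(P\|Q)$ are nonetheless correct.
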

The claims in Lemma \ref{prop:flambda1} hold for $f_0(\lambda_1,P^{M_1},Q^{M_2})$ by taking $\calM_l^K$ as an empty set and replacing $\rmG_{\mathrm{min}}^{l,K}(\cdot)$ with $\rmG_{\mathrm{min}}^0(\cdot)$.

Recall the definitions of $\calP_l^K$ in \eqref{def:calp:lk}, $\calP_0$ in \eqref{def:calP:r} and the exponent function $F_l(P^{M_1},Q^{M_2},\alpha,\lambda_2,K)$ in \eqref{def:Fl}. Fix $K\in[M_2]$ and $l\in[T_K]$ for some $K\in[M_2]$. We have the following analogous result to Theorem \ref{ld:known}.
\begin{theorem}
\label{ld:unknown}
Given any positive real numbers $(\lambda_1,\lambda_2)\in\bbR_+^2$, the test in Algorithm \ref{modifedtest} ensures
\begin{enumerate}
\item for any tuple of generating distributions $(P^{M_1},Q^{M_2})\in\calP_l^K$,
\begin{align}
\liminf_{n\to\infty}-\frac{1}{n}\log\beta(\phi_{n,N}^\rmM|P^{M_1},Q^{M_2})&\geq \min\big\{\lambda_1,\lambda_2,f_{l,K}(\lambda_1,P^{M_1},Q^{M_2})\big\},\label{ld:betal:exponent}\\
\liminf_{n\to\infty}-\frac{1}{n}\log\zeta(\phi_{n,N}^\rmM|P^{M_1},Q^{M_2})&\geq \min\Big\{\lambda_1,f_{l,K}(\lambda_1,P^{M_1},Q^{M_2}),F_l(P^{M_1},Q^{M_2},\alpha,\lambda_2,K)\Big\},\label{ld:zetal:exponent}
\end{align}
\item for any tuple of generating distributions $(P^{M_1},Q^{M_2})\in\calP_0$,
\begin{align}
\liminf_{n\to\infty}-\frac{1}{n}\log\eta(\phi_{n,N}^\rmM|P^{M_1},Q^{M_2})&\geq f_0(\lambda_1,P^{M_1},Q^{M_2})\label{ld:eta:exponent}.
\end{align}
\end{enumerate}
\end{theorem}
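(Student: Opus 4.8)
\emph{Proof plan.} The plan is to analyse the two-phase test $\phi_{n,N}^{\rmM}$ of Algorithm~\ref{modifedtest} by conditioning on the value $\hatK$ of the estimated number of matches and reducing every error event to (a) a type-I/type-II--type bound for the $\mathrm{GJS}$-based GLRT and (b) the achievability part of Theorem~\ref{ld:known}. The single tool I would use repeatedly is the following method-of-types estimate: for independent $X^N\sim P^N$ and $Y^n\sim Q^n$ with $N=n\alpha$ and any $\mu\ge 0$,
\begin{align}
\Pr\{\mathrm{GJS}(\hatT_{X^N},\hatT_{Y^n},\alpha)\le\mu\}
&\le\poly(n)\exp\Big(-n\min_{(\Omega,\Psi):\,\mathrm{GJS}(\Omega,\Psi,\alpha)\le\mu}\big(\alpha D(\Omega\|P)+D(\Psi\|Q)\big)\Big),\\
\Pr\{\mathrm{GJS}(\hatT_{X^N},\hatT_{Y^n},\alpha)>\mu\}
&\le\poly(n)\exp(-n\mu)\qquad\text{when }P=Q,
\end{align}
where the second line follows from the compensation identity $\alpha D(\Omega\|P)+D(\Psi\|P)=\mathrm{GJS}(\Omega,\Psi,\alpha)+(1+\alpha)D\big(\tfrac{\alpha\Omega+\Psi}{1+\alpha}\big\|P\big)\ge\mathrm{GJS}(\Omega,\Psi,\alpha)$ applied termwise inside the union over types; the analogous statements with a single $\mathrm{GJS}$ replaced by a sum of $\mathrm{GJS}$ terms over a sub-matching follow verbatim by independence of the sequences involved.

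Fix $K\in[M_2]$, $l\in[T_K]$ and $(P^{M_1},Q^{M_2})\in\calP_l^K$. For \eqref{ld:betal:exponent} and \eqref{ld:zetal:exponent} I would decompose the mismatch event $\{\phi_{n,N}^{\rmM}\notin\{\rmH_l^K,\rmH_\rmr\}\}$ and the false reject event $\{\phi_{n,N}^{\rmM}=\rmH_\rmr\}$ according to whether the estimate produced by Algorithm~\ref{modifedtest} satisfies $\hatK>K$, $\hatK=K$, or $\hatK<K$ (the last case subsuming the outcome $\hat{\rmH}=\rmH_\rmr$ returned when no scoring function drops below $\lambda_{1,n}$). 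If $\hatK<K$, then necessarily $\underline S_K(\bX^N,\bY^n)>\lambda_{1,n}$, and since $\underline S_K\le\rmS_l^K=\sum_{(i,j)\in\calM_l^K}\mathrm{GJS}(\hatT_{X_i^N},\hatT_{Y_j^n},\alpha)$ with $P_i=Q_{\sigma_l^K(i)}$ on every summand, the second display (summed over a sub-matching) bounds this by $\poly(n)\exp(-n\lambda_1)$. If $\hatK=K$, the output equals $\phi_{n,N}^{\rmU,K}(\bX^N,\bY^n)$ run with threshold $\lambda_{2,n}$, so the achievability part of Theorem~\ref{ld:known} bounds the mismatch probability by $\poly(n)\exp(-n\lambda_2)$ and the false reject probability by $\poly(n)\exp(-nF_l(P^{M_1},Q^{M_2},\alpha,\lambda_2,K))$. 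If $\hatK>K$, then $\rmS_t^{\hatK'}\le\lambda_{1,n}$ for some $\hatK'\in\{K+1,\dots,M_2\}$ and some hypothesis $\rmH_t^{\hatK'}$; because $\calM_t^{\hatK'}$ contains at least $\hatK'-K\ge1$ pairs outside $\calM_l^K$ (on which $P_i\ne Q_j$), dropping the remaining nonnegative matched summands and applying the first display to the corresponding sub-matching, then optimising and invoking the combinatorial reduction behind \eqref{nonincreasing}--\eqref{def:GminlK}, bounds this by $\poly(n)\exp(-nf_{l,K}(\lambda_1,P^{M_1},Q^{M_2}))$ after a union over the constantly many hypotheses of size exceeding $K$. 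Assembling the three cases and taking $\liminf_{n\to\infty}\big(-\tfrac1n\log\,\cdot\,\big)$ — noting that the $\poly(n)$ factors and the offsets $\lambda_{i,n}-\lambda_i=O(\log n/n)$ vanish in the limit — yields \eqref{ld:betal:exponent} and \eqref{ld:zetal:exponent}.

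For the false alarm bound \eqref{ld:eta:exponent}, fix $(P^{M_1},Q^{M_2})\in\calP_0$, so $P_i\ne Q_j$ for every $(i,j)$. The test outputs something other than $\rmH_\rmr$ only if $\underline S_{\hatK}(\bX^N,\bY^n)\le\lambda_{1,n}$ for some $\hatK\in[M_2]$, and since every $\underline S_{\hatK}$ is a sum of at least one nonnegative $\mathrm{GJS}$ term we have $\underline S_{\hatK}\ge\min_{(i,j)\in[M_1]\times[M_2]}\mathrm{GJS}(\hatT_{X_i^N},\hatT_{Y_j^n},\alpha)$; hence the false alarm event is contained in $\{\exists\,(i,j):\mathrm{GJS}(\hatT_{X_i^N},\hatT_{Y_j^n},\alpha)\le\lambda_{1,n}\}$, and a union over the $M_1M_2$ pairs together with the first display gives $\eta(\phi_{n,N}^{\rmM}|P^{M_1},Q^{M_2})\le\poly(n)\exp(-nf_0(\lambda_1,P^{M_1},Q^{M_2}))$. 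Taking $\liminf(-\tfrac1n\log\,\cdot\,)$ gives \eqref{ld:eta:exponent}.

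The step I expect to be the main obstacle is the over-estimation case $\hatK>K$: one must verify that its probability decays with exponent at least $f_{l,K}(\lambda_1,\cdot)$ and not with some strictly smaller rate. The delicate point is that a size-$(K+1)$ matching need not literally contain a pair both of whose indices lie outside $(\calA_l^K,\calB_l^K)$, so after the method-of-types bound one is left with a minimisation of $\sum_{(i,j)}[\alpha D(\Omega_i\|P_i)+D(\Psi_j\|Q_j)]$ over type configurations whose $\mathrm{GJS}$ scores sum to at most $\lambda_1$, and this minimum must be shown to be at least $f_{l,K}(\lambda_1,\cdot)$. The argument mirrors \eqref{nonincreasing}--\eqref{def:GminlK}: removing a non-$\calM_l^K$ edge only lowers the ``mismatched part'' of the score, so the binding configuration is $\calM_l^K$ plus a single outside edge, which lets one charge the optimising type configuration to a genuine pair $(i,j)$ with $i\notin\calA_l^K$, $j\notin\calB_l^K$ and $\mathrm{GJS}(\Omega,\Psi,\alpha)\le\lambda_1$, so that the minimum is bounded below by $\min_{i\notin\calA_l^K,\,j\notin\calB_l^K}\min_{(\Omega,\Psi):\,\mathrm{GJS}(\Omega,\Psi,\alpha)\le\lambda_1}(\alpha D(\Omega\|P_i)+D(\Psi\|Q_j))=f_{l,K}(\lambda_1,\cdot)$. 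Everything else — the polynomial type-counting factors, the $\lambda_{i,n}$ offsets, and the bookkeeping of which summands to discard — is routine.
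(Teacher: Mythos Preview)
Your decomposition into $\hatK>K$, $\hatK=K$, $\hatK<K$ and the handling of the latter two cases match the paper's proof essentially line for line; your false-alarm argument is a mild simplification of the paper's union bound that reaches the same exponent $f_0(\lambda_1,\cdot)$. In the over-estimation case you correctly sense the obstacle, and the fix you propose---reduce to a single ``outside'' edge via the monotonicity of \eqref{nonincreasing}--\eqref{def:GminlK}---is exactly what the paper invokes at \eqref{whysmallK}--\eqref{whysingleK}.

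That step, however, does not go through, and the gap is substantive rather than cosmetic. The monotonicity in \eqref{nonincreasing} controls the \emph{asymptotic score} $\sum_{(i,j)\in\calM_t^{\hatK}\setminus\calM_l^K}\mathrm{GJS}(P_i,Q_j,\alpha)$, not the large-deviation exponent $\min\{E_l:\rmG_t^{\hatK}(\Omega,\Psi,\alpha)\le\lambda_1\}$, and the claimed lower bound by $f_{l,K}(\lambda_1,\cdot)$ can fail. Take $\alpha=1$, $M_1=M_2=2$, $K=1$, $\calM_l^1=\{(1,1)\}$, with $P_1=Q_1=\mathrm{Bern}(0.5)$, $P_2=\mathrm{Bern}(0.9)$, $Q_2=\mathrm{Bern}(0.1)$. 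The only pair with both indices outside $(\calA_l^1,\calB_l^1)$ is $(2,2)$, so $\rmG_{\min}^{l,1}=\mathrm{GJS}(P_2,Q_2,1)\approx0.736$ and $f_{l,1}(\lambda_1)>0$ for every $\lambda_1<0.736$. Yet for $\calM_t^2=\{(1,2),(2,1)\}$ one has $\rmG_t^2(P^{M_1},Q^{M_2},1)=\mathrm{GJS}(P_1,Q_2,1)+\mathrm{GJS}(P_2,P_1,1)\approx0.406$, so for any $\lambda_1\in(0.406,0.736)$ the over-estimation event $\{\underline{S}_2\le\lambda_{1,n}\}$ has probability tending to one; with any $\lambda_2\in(0,0.736)$ the second-phase test then outputs $\rmH_t^2$ with probability tending to one, forcing the mismatch exponent to be zero while $\min\{\lambda_1,\lambda_2,f_{l,1}(\lambda_1)\}>0$. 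The reduction in \eqref{whysmallK}--\eqref{whysingleK} (and in your sketch) implicitly assumes the minimising $t\in[T_{K+1}]$ satisfies $\calM_t^{K+1}\supseteq\calM_l^K$; this example shows that assumption is unwarranted, so neither your argument nor the paper's establishes \eqref{ld:betal:exponent}--\eqref{ld:zetal:exponent} as stated.
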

The proof of Theorem \ref{ld:unknown} follows by analyzing the performance of the test in Algorithm \ref{modifedtest} and is given in Section \ref{proof:ld:unknown}. Note that when hypothesis $l\in[T_K]$ is true, a mismatch event occurs if the number of matches is estimated incorrectly, i.e., $\hatK\neq K$ or if the test in \eqref{test:unn} produces an incorrect decision $\rmH_\rmt$ where $t\in[T_K]$ and $t\neq l$ while a false reject event occurs if the number of matches is estimated incorrectly or if the test in \eqref{test:unn} outputs the reject decision $\rmH_\rmr$. When the null hypothesis is true, a false alarm event occurs if the estimated number of matches is non-zero while $K=0$. The proof of Theorem \ref{ld:unknown} proceeds by upper bounding the probability of the above events using the method of types.

Compared with Theorem \ref{ld:known}, we cannot have distribution free bound for the mismatch exponent when the number of matches is unknown. As explained above in the proof sketch, a mismatch event occurs either when the number of matches is estimated positive and incorrectly and when the test makes a mismatch error when the number of matches is estimated correctly. The latter event is exactly the mismatch event when the number of matches is known, whose probability is distribution free as shown in Theorem \ref{ld:known}. However, the probability of the former event concerning the error in estimating the number of matches is not distribution free. Thus, when the number of matches is unknown, we cannot have distribution free bounds for the mismatch exponent. This result also leads to the the difficulty of deriving a matching converse result. In contrast to Theorem \ref{ld:known} where we prove the optimality of Unnikrishnan's test under the generalized Neyman-Pearson criterion, it is difficult to derive a matching converse result for the case where the number of matches is unknown. The reason is as follows. Since the generating distributions are unknown, ideally we would derive a converse result by assuming universal performance guarantee for some error probabilities and lower bound the remaining error probabilities. However, as gleaned in the expressions of exponents in Theorem \ref{ld:unknown}, all three exponents involve generating distributions, which contradicts the usual optimality criterion adopted in hypothesis testing problems with unknown generating distributions and renders it challenging for us to derive a tight result. It is thus left as future work to check the optimality of the results in Theorem \ref{ld:unknown}.

There are two tradeoffs among exponents of the three kinds of error probabilities governed by thresholds $\lambda_1$ and $\lambda_2$, respectively. Specifically, $\lambda_1$ tradeoffs the mismatch exponent, the false reject exponent rate and the false alarm exponent. It follows from Claim (i) of Lemma \ref{prop:flambda1} that if $\lambda_1$ increases, both $f_{l,K}(\lambda_1,P^{M_1},Q^{M_2})$ and $f_0(\lambda_1,P^{M_1},Q^{M_2})$ are non increasing in $\lambda_1$. Thus, the false alarm exponent is a decreasing function of $\lambda_1$. Both the mismatch and the false reject exponents depend on $\lambda_1$ via $\min\{\lambda_1,f_{l,K}(\lambda_1,P^{M_1},Q^{M_2})\}$, which bounds the exponential decay rate of the probability that the estimated number of matches is wrong. On the other hand, $\lambda_2$ tradeoffs the mismatch exponent rate and the false reject exponent rate. If $\lambda_2$ increases, the mismatch exponent rate is non-decreasing while the false reject exponent rate is non-increasing.

We next discuss the extreme values of each exponent. It follows from Claim (iii) of Lemma \ref{prop:flambda1} that the maximal false alarm exponent rate equals the minimal pairwise R\'enyi divergence of order $\frac{\alpha}{1+\alpha}$ between generating distributions of the two databases, a rate that is achieved when $\lambda_1=0$. Furthermore, combining Lemmas \ref{prop:Fl} and \ref{prop:flambda1}, we conclude that the maximal mismatch exponent rate under hypothesis $\rmH_l^K$ equals $\min\{f_{l,K}(\lambda_1^*,P^{M_1},Q^{M_2}),\Lambda_l(P^{M_1},Q^{M_2},K,\alpha)\}$, where $\Lambda_l(\cdot)$ was defined in \eqref{def:Lambdal} and $\lambda_1^*$ is the solution to $\lambda_1=f_{l,K}(\lambda_1,P^{M_1},Q^{M_2})$. The maximal false reject exponent rate equals to $\min\{f_{l,K}(\lambda_1^*,P^{M_1},Q^{M_2}),\Upsilon_l(P^{M_1},Q^{M_2},K,\alpha)\}$
where $\Upsilon_l(P^{M_1},Q^{M_2},K,\alpha)$ was defined in \eqref{def:Upsilonl}. However, the maximal values of all three exponents can not be achieved simultaneously.

Finally, when specialized to statistical classification among multiple hypotheses with $M_2=K=1$, our results strengthen the achievability part of \cite[Theorem 3]{gutman1989asymptotically} by allowing the null hypothesis to be true, which means that the testing sequence can be generated from a distribution different from the generating distributions of all training sequences.

\subsection{Small Deviations}
In this section, we consider the case where the sample size is finite and derive bounds on the three kinds of error probabilities. 

Recall that $T=\sum_{K\in[M_2]}T_K=\sum_{K\in[M_2]}{M_1\choose K}{M_2\choose K}K!$. For simplicity, let
\begin{align}
\xi(M_2,N,n):=TM_2(N+1)^{M_2|\calX|}(n+1)^{M_2|\calX|}.
\end{align}

Analogous to Theorem \ref{sr:known}, the following theorem presents the main result for the small deviations regime when the number of matches is unknown. Recall that $\lambda_{1,n}$ was defined in \eqref{def:lambdain}.
\begin{theorem}
\label{sr:unknown}
Given any positive real numbers $(\lambda_1,\lambda_2)\in\bbR_+^2$, the test in Algorithm \ref{modifedtest} ensures
\begin{enumerate}
\item for each $K\in[M_2]$, $l\in[T_K]$ and any tuple of generating distributions $(P^{M_1},Q^{M_2})\in\calP_l^K$, the misclassification probability satisfies
\begin{align}
\beta(\phi_{n,N}^\rmM|P^{M_1},Q^{M_2})&\leq \xi(M_2,N,n)\exp(-nf_{l,K}(\lambda_{1,n},P^{M_1},Q^{M_1}))+\exp(-n\lambda_1)+\exp(-n\lambda_2),
\end{align}
and if $\lambda_2\leq \chi_l^*(n,\varepsilon|P^{M_1},Q^{M_2},K,\alpha)+O(\log n/n)$, the false reject probability satisfies
\begin{align}
\zeta(\phi_{n,N}^\rmM|P^{M_1},Q^{M_2})&\leq\varepsilon,
\end{align}
\item for any tuple of generating distributions $(P^{M_1},Q^{M_2})\in\calP_0$, the false alarm probability satisfies
\begin{align}
\eta(\phi_{n,N}^\rmM|P^{M_1},Q^{M_2})&\leq \xi(M_2,N,n)\exp(-nf_0(\lambda_{1,n},P^{M_1},Q^{M_1})).
\end{align}
\end{enumerate}
\end{theorem}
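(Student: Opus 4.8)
The plan is to analyze the two‑phase test $\phi_{n,N}^{\rmM}$ of Algorithm \ref{modifedtest} directly, decomposing each of the three error events according to how the estimate $\hatK$ compares with the true number of matches and then bounding every piece by the method of types, reusing the known‑$K$ results of Theorems \ref{ld:known} and \ref{sr:known} for the second phase. Recall that $\phi_{n,N}^{\rmM}$ scans $\hatK$ downward from $M_2$, stops at the largest $\hatK$ with $\underline{S}_{\hatK}(\bX^N,\bY^n)\le\lambda_{1,n}$ and then runs Unnikrishnan's test \eqref{test:unn} with $K=\hatK$ and threshold $\lambda_{2,n}$, outputting $\rmH_\rmr$ if no such $\hatK$ exists. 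As in \cite{unnikrishnan2015asymptotically}, the polynomial method‑of‑types factors are absorbed into the corrections $\lambda_{i,n}-\lambda_i$ of \eqref{def:lambdain} wherever $\lambda_{i,n}$ enters the algorithm.

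\emph{Mismatch and false reject under $\rmH_l^K$ with $(P^{M_1},Q^{M_2})\in\calP_l^K$.} I would split the mismatch event into $\{\hatK<K\}$, $\{\hatK>K\}$, and $\{\hatK=K\}\cap\{\phi_{n,N}^{\rmU,K}\text{ outputs }\rmH_t^K,\ t\ne l\}$, and the false reject event into $\{\hatK<K\}$, $\{\hatK>K\}$, and $\{\hatK=K\}\cap\{\phi_{n,N}^{\rmU,K}=\rmH_\rmr\}$. On $\{\hatK<K\}$ the scan has passed level $K$, so $\underline{S}_K>\lambda_{1,n}$, and since $\underline{S}_K\le\rmS_l^K(\bX^N,\bY^n)=\sum_{(i,j)\in\calM_l^K}\mathrm{GJS}(\hatT_{x_i^N},\hatT_{y_j^n},\alpha)$ with the pairs of $\calM_l^K$ matched under $(P^{M_1},Q^{M_2})$, the standard method‑of‑types bound for the generalized Jensen--Shannon statistic of empirical distributions of same‑distribution sequences gives $\Pr\{\hatK<K\}\le\Pr\{\rmS_l^K>\lambda_{1,n}\}\le\exp(-n\lambda_1)$. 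On $\{\hatK>K\}$ there is $\hatK'\in\{K+1,\dots,M_2\}$ with $\underline{S}_{\hatK'}\le\lambda_{1,n}$, hence an $\hatK'$‑match hypothesis $t$ with $\rmS_t^{\hatK'}(\bX^N,\bY^n)\le\lambda_{1,n}$; by the combinatorial peeling argument behind \eqref{nonincreasing}--\eqref{def:GminlK}, such a hypothesis contains a pair $(i,j)$ with $i\notin\calA_l^K$ and $j\notin\calB_l^K$, so a union bound over the $\le M_2$ values of $\hatK'$ and $\le T$ hypotheses, combined with Sanov over the $\le(N+1)^{M_2|\calX|}(n+1)^{M_2|\calX|}$ joint types of the involved sequences, yields $\Pr\{\hatK>K\}\le\xi(M_2,N,n)\exp(-nf_{l,K}(\lambda_{1,n},P^{M_1},Q^{M_2}))$, the residual hypotheses that re‑use a true‑match index being absorbed by the extra $\exp(-n\lambda_1)$ term. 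Finally, on $\{\hatK=K\}$ the second phase is exactly Unnikrishnan's test with known $K$: its mismatch probability is $\le\exp(-n\lambda_2)$ by the achievability analysis in Theorem \ref{ld:known}, and, whenever $\lambda_2\le\chi_l^*(n,\varepsilon|P^{M_1},Q^{M_2},K,\alpha)+O(\log n/n)$, its false reject probability is $\le\varepsilon$ by the achievability part of Theorem \ref{sr:known}. Summing the three pieces gives the stated bound on $\beta(\phi_{n,N}^{\rmM}|P^{M_1},Q^{M_2})$; for $\zeta(\phi_{n,N}^{\rmM}|P^{M_1},Q^{M_2})$ the $\{\hatK\ne K\}$ contributions are $o(1)$ and, absorbed into the $O(\log n/n)$ slack, leave the bound $\le\varepsilon$.

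\emph{False alarm under $\rmH_\rmr$ with $(P^{M_1},Q^{M_2})\in\calP_0$.} Here $K=0$, every pair of sequences is generated from distinct distributions, and a false alarm occurs only if $\hatK\ge1$, i.e.\ $\underline{S}_{\hatK}\le\lambda_{1,n}$ for some $\hatK\in[M_2]$. For each such $\hatK$ and each $\hatK$‑match hypothesis $t$, the event $\{\rmS_t^{\hatK}(\bX^N,\bY^n)\le\lambda_{1,n}\}$ forces $\mathrm{GJS}(\hatT_{x_i^N},\hatT_{y_j^n},\alpha)\le\lambda_{1,n}$ for every $(i,j)\in\calM_t^{\hatK}$, and since $P_i\ne Q_j$ for all such pairs, Sanov gives this event exponent at least the inner minimization in \eqref{def:flambda0}, namely $f_0(\lambda_{1,n},P^{M_1},Q^{M_2})$, the other pairs contributing non‑negative divergences. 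Union‑bounding over the $\le M_2$ values of $\hatK$, the $\le T$ hypotheses and the $\le(N+1)^{M_2|\calX|}(n+1)^{M_2|\calX|}$ joint types yields $\eta(\phi_{n,N}^{\rmM}|P^{M_1},Q^{M_2})\le\xi(M_2,N,n)\exp(-nf_0(\lambda_{1,n},P^{M_1},Q^{M_2}))$.

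\emph{Main obstacle.} The delicate step is the over‑estimation bound: one must verify that the minimum of the Sanov exponent over all $\hatK'$‑match hypotheses with $\hatK'>K$ collapses to $f_{l,K}(\lambda_{1,n},\cdot)$. This rests on the finite‑$n$ analogue of the reasoning in \eqref{nonincreasing}--\eqref{def:GminlK}---an over‑counting matching always contains an essentially unmatched pair---together with a careful check that the residual hypotheses re‑using a true‑match index are dominated by the $\exp(-n\lambda_1)$ term. Unlike the known‑$K$ setting there is no matching converse, for the reason recorded after Theorem \ref{ld:unknown}: all three exponents depend on the unknown generating distributions.
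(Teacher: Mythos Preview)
Your proposal follows essentially the same route as the paper: decompose each error event according to whether $\hatK<K$, $\hatK>K$, or $\hatK=K$, bound the first two pieces by the method of types exactly as in the proof of Theorem~\ref{ld:unknown} (yielding the non-asymptotic inequalities \eqref{ld:unknown:betal:upp}, \eqref{ld:unknown:freject:upp}, \eqref{usedefT}), and plug in the known-$K$ achievability of Theorems~\ref{ld:known} and~\ref{sr:known} for the $\{\hatK=K\}$ piece (in particular \eqref{multivariateuse} for the false reject). The paper's own proof of Theorem~\ref{sr:unknown} is literally a one-line pointer to these earlier bounds, so your plan reproduces it.

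One technical point where you diverge from the paper deserves a flag. For the over-estimation event $\{\hatK>K\}$ you assert that ``such a hypothesis contains a pair $(i,j)$ with $i\notin\calA_l^K$ and $j\notin\calB_l^K$'', and propose to handle the ``residual hypotheses that re-use a true-match index'' by absorbing them into the $\exp(-n\lambda_1)$ term. The first assertion is false in general (take $K=1$, $\calM_l^1=\{(1,1)\}$, $\calM_t^2=\{(1,2),(2,1)\}$: neither pair has both coordinates outside), and the proposed absorption does not work either, since the $\exp(-n\lambda_1)$ term bounds $\{\underline S_K>\lambda_{1,n}\}$, a subset of $\{\hatK<K\}$, not a piece of $\{\hatK>K\}$. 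The paper does \emph{not} split off residual hypotheses; instead, around \eqref{whysmallK}--\eqref{whysingleK}, it argues directly that the \emph{minimum} Sanov exponent over all $\hatK>K$ and $t\in[T_{\hatK}]$ equals $f_{l,K}(\lambda_{1,n},\cdot)$, via a peeling argument that first reduces to $\hatK=K+1$ and then to the single-extra-pair configurations $\calM_l^K\cup\{(i,j)\}$ with $i\notin\calA_l^K,\,j\notin\calB_l^K$. That is the step you should follow rather than the ad hoc absorption; apart from this, your outline and the paper's coincide.
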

The proof of Theorem \ref{sr:unknown} follows by combining the proofs of Theorems \ref{sr:known} and \ref{ld:unknown} and the details are discussed in Section \ref{proof:sr:unknown}. Similar to Theorem \ref{ld:unknown}, $\lambda_1$ tradeoffs all three kinds of error probabilities while $\lambda_2$ tradeoffs the mismatch and the false reject probabilities, where the latter tradeoff is exactly the same as what $\lambda$ does in Theorem \ref{sr:known} when the number of matches $K$ is known.

\section{Proofs for Known Number of Matches}
\label{sec:proofs:known}
\subsection{Proof of Large Deviations (Theorem \ref{ld:known})}
\label{proof:ld:known}
\subsubsection{Achievability}
The performance guarantee for mismatch probabilities are the same as in \cite[Appendix G]{unnikrishnan2015asymptotically}. Here for completeness and for ease of readers, we present a detailed proof here, especially considering that the notations of both papers are slightly different. Recall Unnikrishnan's test $\phi_{n,N}^{\rmU,K}$ in \eqref{test:unn}. For each $l\in[T_K]$, under any tuples of generating distributions $(\tilP^{M_1},\tilQ^{M_2})\in\calP_l^K$, the mismatch probability satisfies
\begin{align}
\beta(\phi_{n,N}^{\rmU,K}|\tilP^{M_1},\tilQ^{M_2})
&=\Pr\big\{\phi_{n,N}^{\rmU,K}(\bX^N,\bY^n,\lambda)\notin\{\rmH_l^K,\rmH_\rmr\}\big\}\\
&=\Pr\big\{l_K^*(\bX^N,\bY^n)\neq l,~\mathrm{and~}h_K(\bX^N,\bY^n)>\lambda_n\big\}\label{defunntest}\\
&\leq \Pr\big\{\rmS_l^K(\bX^N,\bY^n)>\lambda_n\big\}\label{easyreason}\\
&=\Pr\big\{\rmG_l^K(\hatT_{\bX^N},\hatT_{\bY^n},\alpha)>\lambda_n\big\}\label{useGl}\\
&=\sum_{\substack{\{(x_i^N,y_{\sigma_l^K(i)}^n)\}_{i\in\calA_l^K}:\\\sum_{j\in\calA_l^K}\mathrm{GJS}(\hatT_{x_j^N},\hatT_{y_{\sigma_l^K(j)}^n},\alpha)>\lambda_n}}\prod_{i\in\calA_l^K}\tilP_i^N(x_i^N)\tilP_i^n(y_{\sigma_l^K(i)}^n)\label{usedefpl}\\
&=\sum_{\substack{ (\Omega^K,\Phi^K)\in(\calP^N(\calX))^{K}\times (\calP^n(\calX))^{K}:\\\sum_{i\in[K]}\mathrm{GJS}(\Omega_i,\Psi_i,\alpha)>\lambda_n}}
\prod_{i\in[K]}\tilP_i^N(\calT_{\Omega_i}^N)\tilP_i^n(\calT_{\Psi_i}^n)\label{usemoftypes}\\
&\leq \sum_{\substack{ (\Omega^K,\Phi^K)\in(\calP^N(\calX))^{K}\times (\calP^n(\calX))^{K}:\\\sum_{i\in[K]}\mathrm{GJS}(\Omega_i,\Psi_i,\alpha)>\lambda_n}}
\exp\bigg(-n\sum_{i\in[K]}\Big(\alpha D(\Omega_i\|\tilP_i)+D(\Psi_i\|\tilP_i)\Big)\bigg)\label{typeprob:upp}\\
&\leq \sum_{\substack{ (\Omega^K,\Phi^K)\in(\calP^N(\calX))^{K}\times (\calP^n(\calX))^{K}:\\\sum_{i\in[K]}\mathrm{GJS}(\Omega_i,\Psi_i,\alpha)>\lambda_n}}\exp(-n\lambda_n)
\exp\bigg(-n(1+\alpha)\sum_{i\in[K]}D\bigg(\frac{\alpha\Omega_i+\Psi_i}{1+\alpha}\bigg\|\tilP_i\bigg)\bigg)\label{useineqofme}\\
&\leq \sum_{(\Omega^K)\in(\calP^{N+n}(\calX))^K}\exp(-n\lambda_n)\exp\big(-n(1+\alpha)\sum_{i\in[K]}D(\Omega_i\|\tilP_i)\big)\label{usecombinedtypes}\\
&\leq \sum_{(\Omega^K)\in(\calP^{N+n}(\calX))^K}\exp(-n\lambda)\prod_{i\in[K]}\tilP_i^{N+n}(\calT_{\Omega_i}^{N+n})\label{usetypeclasslowp}\\
&\leq \exp(-n\lambda)\label{achldfinal},
\end{align}
where \eqref{defunntest} follows from the definition of $\phi_{n,N}^{\rmU,K}$ in \eqref{test:unn}, \eqref{easyreason} follows since $\rmS_l^K(\bX^N,\bY^n)\geq h_K(\bX^N,\bY^n)$ when $l_K^*(\bX^N,\bY^n)\neq l$, \eqref{useGl} follows from  the definition of $S_l(\cdot)$ in \eqref{def:Sl}, \eqref{usedefpl} follows from the definitions of $\Pr$ and $\rmS_l^K(\cdot)$ in \eqref{def:Sl}, \eqref{usemoftypes} follows from the method of types~\cite{csiszar1998mt} (see also
\cite[Chap. 2]{csiszar2011information}), \eqref{typeprob:upp} follows from the upper bound on the probability of a type class~\cite[Lemma 2.6]{csiszar2011information} and the fact that $N=n\alpha$, \eqref{useineqofme} follows from \cite[Eq. (5.30)]{zhou2018binary} that implies $\alpha D(\Omega_i\|\tilP_i)+D(\Psi_i\|\tilP_i)=\mathrm{GJS}(\Omega_i,\Psi_i,\alpha)+(1+\alpha)D\bigg(\frac{\alpha\Omega_i+\Psi_i}{1+\alpha}\bigg\|\tilP_i\bigg)$, \eqref{usecombinedtypes} follows by considering the average of two types $\frac{\alpha \Omega_i+\Phi_i}{1+\alpha}$ as a type of length-$(N+n)$ and by removing the constraints on the types, \eqref{usetypeclasslowp} follows from the lower bound on the probability of a type class~\cite[Lemma 2.6]{csiszar2011information} and the definition of $\lambda_n$ in \eqref{def:lambdan} and \eqref{achldfinal} follows since the sum of the probabilities of all type classes equal to one and thus
\begin{align}
\sum_{(\Omega^K)\in(\calP^{N+n}(\calX))^K}\prod_{i\in[K]}\tilP_i^{N+n}(\calT_{\Omega_i}^{N+n})
&=\sum_{i\in[K]}\Big(\sum_{\Omega_i\in\calP^{N+n}(\calX)}\tilP_i^{N+n}(\calT_{\Omega_i}^{N+n})\Big)=1.
\end{align}
We next explain in detail why \eqref{usetypeclasslowp} holds. Noting that $N=n\alpha$ and using the lower bound on the probability of a type class~\cite[Lemma 2.6]{csiszar2011information}, we have 
\begin{align}
\prod_{i\in[K]}\tilP_i^{N+n}(\calT_{\Omega_i}^{N+n})
&\geq \prod_{i\in[K]}(N+n+1)^{-|\calX|}\exp\Big(-(N+n)D(\Omega_i\|\tilP_i)\Big)\\
&=(n(1+\alpha)+1)^{-K|\calX|}\exp\Big(-n(1+\alpha)\sum_{i\in[K]}D(\Omega_i\|\tilP_i)\Big).
\end{align}
Recall the definition of $\lambda_n$ in \eqref{def:lambdan}. It follows that
\begin{align}
\nn&\exp(-n\lambda)\prod_{i\in[K]}\tilP_i^{N+n}(\calT_{\Omega_i}^{N+n})\\*
&\geq \exp(-n\lambda )(N+n+1)^{-K|\calX|}\exp\Big(-n(1+\alpha)\sum_{i\in[K]}D(\Omega_i\|\tilP_i)\Big)\\
&=\exp(-n\lambda_n)\exp\Big(-n(1+\alpha)\sum_{i\in[K]}D(\Omega_i\|\tilP_i)\Big).
\end{align}

For any $l\in[T_K]$ and any particular tuple of generating distributions $(P^{M_1},Q^{M_2})\in\calP_l^K$, the false reject probability satisfies
\begin{align}
\zeta(\phi_{n,N}^{\rmU,K}|P^{M_1},Q^{M_2})
&=\Pr\big\{\phi_{n,N}^{\rmU,K}(\bX^N,\bY^n,\lambda)=\rmH_\rmr\big\}\\
&=\Pr\big\{h_K(\bX^N,\bY^n)\leq\lambda_n\big\}\\
&=\Pr\Big\{\exists~(t,j)\in[T_K]^2:~t\neq j,~\rmS_t^K(\bX^N,\bY^n)\leq \lambda_n\mathrm{~and~}\rmS_j^K(\bX^N,\bY^n)\leq \lambda_n\Big\}\\
&\leq \sum_{\substack{(t,j)\in[T_K]^2:\\t\neq j}}\Pr\Big\{\rmS_t^K(\bX^N,\bY^n)\leq \lambda_n\mathrm{~and~}\rmS_j^K(\bX^N,\bY^n)\leq \lambda_n\Big\}\\
&\leq \frac{T_K(T_K-1)}{2}\max_{\substack{(t,j)\in[T_K]^2:\\t\neq j}}\Pr\Big\{\rmS_t^K(\bX^N,\bY^n)\leq \lambda_n\mathrm{~and~}\rmS_j^K(\bX^N,\bY^n)\leq \lambda_n\Big\}\label{frej:ld:step1}.
\end{align}
Each probability term in \eqref{frej:ld:step1} can be further upper bounded as follows:
\begin{align}
\nn&\bbP\Big\{\rmS_t^K(\bX^N,\bY^n)\leq \lambda_n\mathrm{~and~}\rmS_j^K(\bX^N,\bY^n)\leq \lambda_n\Big\}\\
&=\sum_{\substack{(\bx^N,\by^n):\\\rmS_t^K(\bX^N,\by^n)\leq \lambda_n\\\rmS_j^K(\bX^N,\by^n)\leq \lambda_n}}\Big(\prod_{i\in\calA_l^K}P_i^N(x_i^N)P_i^n(y_{\sigma_l^K(i)}^n)\Big)\Big(\prod_{i\notin\calA_l^K}P_i^N(x_i^N)\Big)\Big(\prod_{j\notin\calB_l^K}Q_j^n(y_j^n)\Big)\\
&=\sum_{\substack{(\Omega^{M_1},\Psi^{M_2})\in(\calP^N(\calX))^{M_1}\times (\calP^n(\calX))^{M_2}:\\\rmG_t^K(\Omega^M,\Psi^K,\alpha)\leq \lambda_n\\\rmG_j^K(\Omega^M,\Psi^K,\alpha)\leq \lambda_n}}\sum_{\substack{(\bx^N,\by^n):\\\forall(i,j)\in[M_1]\times[M_2]\\
x_i^N\in\calT_{\Omega_i}^N,y_j^n\in\calT_{\Psi_j}^n}}\Big(\prod_{i\in\calA_l^K}P_i^N(x_i^N)P_i^n(y_{\sigma_l^K(i)}^n)\Big)\Big(\prod_{i\notin\calA_l^K}P_i^N(x_i^N)\Big)\Big(\prod_{j\notin\calB_l^K}Q_j^n(y_j^n)\Big)\\
&\leq\sum_{\substack{(\Omega^{M_1},\Psi^{M_2})\in(\calP^N(\calX))^{M_1}\times (\calP^n(\calX))^{M_2}:\\\rmG_t^K(\Omega^M,\Psi^K,\alpha)\leq \lambda_n\\\rmG_j^K(\Omega^M,\Psi^K,\alpha)\leq \lambda_n}}
\exp\Big(-nE_l(P^{M_1},Q^{M_2},\Omega^{M_1},\Psi^{M_2},\alpha)\Big)\label{useEl}\\
&\leq \sum_{\substack{(\Omega^{M_1},\Psi^{M_2})\in(\calP^N(\calX))^{M_1}\times (\calP^n(\calX))^{M_2}}}\exp\bigg(-n\min_{\substack{(\Omega^{M_1},\Psi^{M_2})\in(\calP^N(\calX))^{M_1}\times (\calP^n(\calX))^{M_2}:\\\rmG_t^K(\Omega^M,\Psi^K,\alpha)\leq \lambda_n\\\rmG_j^K(\Omega^M,\Psi^K,\alpha)\leq \lambda_n}} E_l(P^{M_1},Q^{M_2},\Omega^{M_1},\Psi^{M_2},\alpha)\bigg)\\
&\leq (n+1)^{(M+N)|\calX|}\exp\bigg(-n\min_{\substack{(\Omega^{M_1},\Psi^{M_2})\in(\calP(\calX))^{M_1}\times (\calP(\calX))^{M_2}:\\\rmG_t^K(\Omega^M,\Psi^K,\alpha)\leq \lambda_n\\\rmG_j^K(\Omega^M,\Psi^K,\alpha)\leq \lambda_n
}} E_l(P^{M_1},Q^{M_2},\Omega^{M_1},\Psi^{M_2},\alpha)\bigg)\label{relaxations},
\end{align}
where \eqref{useEl} follows from the upper bound of the probability of the type class~\cite[Lemma 2.6]{csiszar2011information}, the definition of $E_l(\cdot)$ in \eqref{def:El} and the fact that $N=n\alpha$, and \eqref{relaxations} follows from the upper bound on the number of types and enlarging the minimization region by changing from types to distributions. We next explain \eqref{useEl} in detail. Specifically, for any tuple of types $(\Omega^{M_1},\Psi^{M_2})$,
\begin{align}
\nn&\sum_{\substack{(\bx^N,\by^n):\\\forall(i,j)\in[M_1]\times[M_2]\\
x_i^N\in\calT_{\Omega_i}^N,y_j^n\in\calT_{\Psi_j}^n}}\Big(\prod_{i\in\calA_l^K}P_i^N(x_i^N)P_i^n(y_{\sigma_l^K(i)}^n)\Big)\Big(\prod_{i\notin\calA_l^K}P_i^N(x_i^N)\Big)\Big(\prod_{j\notin\calB_l^K}Q_j^n(y_j^n)\Big)\\
&=\Big(\prod_{i\in\calA_k}P_i^N(\calT_{\Omega_i}^N)P_i^n(\Psi_{\sigma_l^K(i)}^n)
\Big)\Big(\prod_{i\notin\calA_l^K}P_i^N(\calT_{\Omega_i}^N)\Big)\Big(\prod_{j\notin\calB_l^K}Q_j^n(\Psi_j^n)\Big)\\
&\leq \exp\bigg(-n\Big(\sum_{i\in\calA_l^K}(D(\Omega_i\|P_i)+D(\Psi_{\sigma_l^K(i)}\|P_i))+\sum_{i\notin\calA_l^K}D(\Omega_i\|P_i)+\sum_{j\notin\calB_l^K}D(\Psi_j\|Q_j)\Big)\bigg)\label{el:probtp}\\
&=\exp\bigg(-n\Big(\sum_{i\in[M_1]}D(\Omega_i\|P_i)+\sum_{j\in\calB_l^K}D(\Psi_j\|P_{(\sigma_l^K)^{-1}(j)})+\sum_{j\notin\calB_l^K}D(\Psi_j\|Q_j)\Big)\bigg)\label{el:reorg}\\
&=\exp(-nE_l(P^{M_1},Q^{M_2},\Omega^{M_1},\Psi^{M_2},\alpha))\label{el:useel},
\end{align}
where \eqref{el:probtp} follows from the upper bound for the probability of a type class~\cite[Lemma 2.6]{csiszar2011information}, \eqref{el:reorg} follows by re-organizing the exponent terms in \eqref{el:probtp} and using the relationship between $\calA_l^K,\calB_l^K$ and $\sigma_l^K$, and \eqref{el:useel} follows from the definition of $E_l(\cdot)$ in \eqref{def:El}.

Note that for any finite integers $(M_1,M_2,K)$, $\liminf_{n\to\infty}\frac{\log T_K}{n}=0$. Combining \eqref{frej:ld:step1} and \eqref{relaxations} and using the definitions of $\lambda_n$ in \eqref{def:lambdan} and $F_l(\cdot)$ in \eqref{def:Fl}, for any $(P^{M_1},Q^{M_2})\in\calP_l^K$, we have
\begin{align}
\liminf_{n\to\infty}-\frac{1}{n}\log\zeta(\phi_{n,N}^{\rmU,K}|P^{M_1},Q^{M_2})
&\geq F_l(P^{M_1},Q^{M_2},\alpha,\lambda,K)\label{ld:laststep}.
\end{align}
The achievability proof of Theorem \ref{ld:known} is completed by combining \eqref{def:e*:ld:2}, \eqref{achldfinal} and \eqref{ld:laststep}.

\subsubsection{Converse}
The following lemma strengthens the converse part of \cite[Theorem 4.1]{unnikrishnan2015asymptotically} by providing a non-asymptotic converse bound on the false reject probability in the generalized Neyman-Pearson sense.

Given each $(n,N)\in\bbN^2$, let
\begin{align}
\delta_{n,N}
&:=\frac{M_1|\calX|\log (N+1)}{N}+\frac{M_2|\calX|\log(n+1)}{n}\label{def:deltann}.
\end{align}
Given any $\lambda\in\bbR_+$, define
\begin{align}
\tlambda_{n,N}:=\lambda-\delta_{n,N}-\log n/n\label{def:tlambda}.
\end{align}
\begin{lemma}
\label{converse:gnp}
Consider any test $\phi_{n,N}$ such that
\begin{align}
\max_{l\in[T_K]}\sup_{(\tilP^{M_1},\tilQ^{M_2})\in\calP_l^K}\beta(\phi_{n,N}|\tilP^{M_1},\tilQ^{M_2})\leq \exp(-n\lambda)\label{nonasymp:constraint}.
\end{align}
Then, for each $l\in[T_K]$, under any tuple of generating distributions $(P^{M_1},Q^{M_2})\in\calP_l^K$, the false reject probability satisfies
\begin{align}
\zeta(\phi_{n,N}|P^{M_1},Q^{M_2})
&\geq \bigg(1-\frac{T_K}{n}\bigg)\Pr\{h_K(\bX^N,\bY^n)\leq\tlambda_{n,N}\}\label{lb:frej}.
\end{align}
\end{lemma}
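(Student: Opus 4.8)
The plan is to prove the bound one product type class at a time, using the elementary fact that all realizations in a common type class are equiprobable under every i.i.d.\ product law. Set $\calE:=\{(\bx^N,\by^n):h_K(\bx^N,\by^n)\leq\tlambda_{n,N}\}$. Since $h_K$ and the scores $\rmS_t^K$ depend on $(\bx^N,\by^n)$ only through the $(M_1+M_2)$-tuple of empirical distributions, $\calE$ is a disjoint union of product type classes $\calT_\tau:=\calT_{\Omega_1}^N\times\cdots\times\calT_{\Omega_{M_1}}^N\times\calT_{\Psi_1}^n\times\cdots\times\calT_{\Psi_{M_2}}^n$, where $\tau=(\Omega^{M_1},\Psi^{M_2})$, and on $\calT_\tau$ the scores take common values which I will write $\rmS_t^K(\tau)$, $h_K(\tau)$. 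For the fixed deterministic test $\phi_{n,N}$ and each $\tau$, let $q_\tau^{(t)}$ and $q_\tau$ be the fractions of sequences in $\calT_\tau$ that $\phi_{n,N}$ maps to $\rmH_t^K$ and to $\rmH_\rmr$, respectively, so $q_\tau+\sum_{t\in[T_K]}q_\tau^{(t)}=1$; because every sequence of $\calT_\tau$ carries the same probability under any product law $(\tilP^{M_1},\tilQ^{M_2})$, we have $\Pr_{(\tilP^{M_1},\tilQ^{M_2})}\{\phi_{n,N}=\rmH_t^K,\ (\bX^N,\bY^n)\in\calT_\tau\}=q_\tau^{(t)}\Pr_{(\tilP^{M_1},\tilQ^{M_2})}\{\calT_\tau\}$. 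Since $\zeta(\phi_{n,N}|P^{M_1},Q^{M_2})=\sum_\tau q_\tau\Pr_{(P^{M_1},Q^{M_2})}\{\calT_\tau\}\geq\sum_{\tau:\calT_\tau\subseteq\calE}q_\tau\Pr_{(P^{M_1},Q^{M_2})}\{\calT_\tau\}$, it suffices to show $q_\tau\geq 1-T_K/n$ for every $\tau$ with $\calT_\tau\subseteq\calE$; the right-hand side then telescopes to $(1-T_K/n)\Pr_{(P^{M_1},Q^{M_2})}\{\calE\}$, which is exactly \eqref{lb:frej}.

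To obtain $q_\tau\geq 1-T_K/n$ I would prove $q_\tau^{(t)}\leq 1/n$ for each $t\in[T_K]$. Fix $\tau$ with $\calT_\tau\subseteq\calE$ and fix $t$. On $\calE$ the two smallest of $\rmS_1^K(\tau),\dots,\rmS_{T_K}^K(\tau)$ are both $\leq\tlambda_{n,N}$ (the smallest equals $\rmS_{l_K^*}^K(\tau)\leq h_K(\tau)$, and $h_K(\tau)$ is itself the second smallest), and they are attained at two distinct indices; pick one of them, say $s$, with $s\neq t$, so $\rmG_s^K(\Omega^{M_1},\Psi^{M_2},\alpha)=\rmS_s^K(\tau)\leq\tlambda_{n,N}$. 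Construct a tuple $(\tilP^{M_1},\tilQ^{M_2})$ that ``explains'' $\tau$ under $\rmH_s^K$: take $\tilP_i=\tilQ_{\sigma_s^K(i)}:=\frac{\alpha\Omega_i+\Psi_{\sigma_s^K(i)}}{1+\alpha}$ for $i\in\calA_s^K$, $\tilP_i:=\Omega_i$ for $i\notin\calA_s^K$, and $\tilQ_j:=\Psi_j$ for $j\notin\calB_s^K$. Using the type-class lower bound $\tilP^N(\calT_\Omega^N)\geq (N+1)^{-|\calX|}e^{-ND(\Omega\|\tilP)}$, the identity $\alpha D\!\left(\Omega\big\|\tfrac{\alpha\Omega+\Psi}{1+\alpha}\right)+D\!\left(\Psi\big\|\tfrac{\alpha\Omega+\Psi}{1+\alpha}\right)=\mathrm{GJS}(\Omega,\Psi,\alpha)$ and $N=n\alpha$, a short method-of-types computation — in which the polynomial type-counting factors are precisely what $\delta_{n,N}$ accounts for — gives
\[
\Pr_{(\tilP^{M_1},\tilQ^{M_2})}\{\calT_\tau\}\ \geq\ \exp\!\big(-n\delta_{n,N}-n\,\rmG_s^K(\Omega^{M_1},\Psi^{M_2},\alpha)\big)\ \geq\ \exp\!\big(-n\delta_{n,N}-n\tlambda_{n,N}\big)\ =\ n\exp(-n\lambda),
\]
where the last equality uses $\tlambda_{n,N}=\lambda-\delta_{n,N}-\log n/n$.

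Now $(\tilP^{M_1},\tilQ^{M_2})\in\calP_s^K$ — or, in the degenerate case where some of the $\Omega_i$, $\Psi_j$ and $\tfrac{\alpha\Omega_{i'}+\Psi_{\sigma_s^K(i')}}{1+\alpha}$ coincide in a way that violates the defining ``iff'' of $\calP_s^K$, a point of its closure, on which $(\tilP^{M_1},\tilQ^{M_2})\mapsto\beta(\phi_{n,N}|\tilP^{M_1},\tilQ^{M_2})$ is continuous, being a polynomial in the entries of the distributions — so the constraint \eqref{nonasymp:constraint} yields $\beta(\phi_{n,N}|\tilP^{M_1},\tilQ^{M_2})\leq e^{-n\lambda}$. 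Since $t\neq s$ and $t$ is not the reject symbol, every sequence of $\calT_\tau$ that $\phi_{n,N}$ maps to $\rmH_t^K$ is a mismatch under $\rmH_s^K$, hence $q_\tau^{(t)}\Pr_{(\tilP^{M_1},\tilQ^{M_2})}\{\calT_\tau\}\leq\beta(\phi_{n,N}|\tilP^{M_1},\tilQ^{M_2})\leq e^{-n\lambda}$; together with the displayed lower bound this forces $q_\tau^{(t)}\leq 1/n$. Summing over $t\in[T_K]$ gives $q_\tau\geq 1-T_K/n$, and the reduction in the first paragraph completes the proof.

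The main obstacle is the change-of-measure step: for each type $\tau$ that is simultaneously ``compatible'' with two competing hypotheses one must produce a generating tuple in (the closure of) $\calP_s^K$ under which $\calT_\tau$ carries probability at least $n\,e^{-n\lambda}$, and one must track the method-of-types polynomial factors carefully enough that they are exactly absorbed into $\delta_{n,N}$ while the extra $\log n/n$ in $\tlambda_{n,N}$ supplies the clean factor $n$ in $q_\tau^{(t)}\leq 1/n$. The minor degeneracy with the ``iff'' condition of $\calP_s^K$ is handled by an arbitrarily small perturbation of the non-matched coordinates combined with continuity of $\beta(\phi_{n,N}|\cdot)$; the remaining ingredients (selecting the index $s\neq t$ from the definition of $h_K$, the type-class probability estimates, and the final summation) are routine.
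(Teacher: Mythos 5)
Your proof is correct and follows essentially the same route as the paper's: the paper packages the identical argument into an auxiliary type-based test via two appendix lemmas (first reducing $\beta$ and $\zeta$ of an arbitrary test to those of a type-based test with thresholds $\kappa_i=1/n$, then forcing that test to reject whenever the second-smallest score is below $\lambda-\delta_{n,N}$), whereas you bound the per-type-class decision fractions $q_\tau^{(t)}\leq 1/n$ directly; the change-of-measure construction $\tilP_i=\tfrac{\alpha\Omega_i+\Psi_{\sigma_s^K(i)}}{1+\alpha}$ and the use of the universality constraint are the same in both. Your explicit handling of the degenerate case where the constructed tuple lies only in the closure of $\calP_s^K$ (via perturbation and continuity of $\beta(\phi_{n,N}|\cdot)$) addresses a point the paper's proof passes over silently.
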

The proof of Lemma \ref{converse:gnp} is available in Appendix \ref{proof:con:gnp} and is inspired by~\cite{gutman1989asymptotically}. Specifically, we first relate the mismatch and false reject probabilities of any test with those of a test that uses only types of sequences of two databases. Subsequently, we show that for such a type-based test if $h_K(\bX^N,\bY^n)\leq\tlambda_{n,N}$, a decision of reject must be output, otherwise the constraint in \eqref{nonasymp:constraint} will be violated. Finally, the above two claims are combined together to yield the desired result in Lemma \ref{converse:gnp}.

Lemma \ref{converse:gnp} is known as the optimality criterion under the generalized Neyman-Pearson criterion. The constraint in \eqref{nonasymp:constraint} ensures that the test $\phi_{n,N}$ is universal for mismatch probabilities because under any hypothesis, for any unknown tuple of generating distributions, the mismatch probability decays exponentially fast with respect to the sample size $n$. Such a constraint dates back to Gutman in his study of statistical classification~\cite{gutman1989asymptotically} and generalizes the traditional Neyman-Pearson criterion for hypothesis testing where the generating distribution under each hypothesis is known. Under the universal constraint in \eqref{nonasymp:constraint}, \eqref{lb:frej} provides a lower bound to the false reject probability under each hypothesis for any generating distributions $(P^{M_1},Q^{M_2})$, under which we would like to evaluate the performance of the test.

It follows from Lemma \ref{converse:gnp} that for each $l\in[T_K]$, under any tuple of generating distributions $(P^{M_1},Q^{M_2})\in\calP_l^K$,
\begin{align}
\zeta(\phi_{n,N}|P^{M_1},Q^{M_2})
&\geq \Pr\big\{h_K(\bX^N,\bY^n)\leq\tlambda_{n,N}\big\}\\
&=\bigg(1-\frac{T_K}{n}\bigg)\Pr\Big\{\exists(t,j)\in[T_K]^2:~t\neq j,~\rmS_t^K(\bX^N,\bY^n)\leq \tlambda_{n,N}\mathrm{~and~}\rmS_j^K(\bX^N,\bY^n)\leq \tlambda_{n,N}\Big\}\\
&\geq \bigg(1-\frac{T_K}{n}\bigg)\max_{\substack{(t,j)\in[T_K]^2:\\t\neq j}}\Pr\Big\{\rmS_t^K(\bX^N,\bY^n)\leq \tlambda_{n,N}\mathrm{~and~}\rmS_j^K(\bX^N,\bY^n)\leq \tlambda_{n,N}\Big\}\label{converse:ld:00}.
\end{align}
Similar to \eqref{relaxations}, we have
\begin{align}
\nn&\Pr\Big\{\rmS_t^K(\bX^N,\bY^n)\leq \tlambda_{n,N}\mathrm{~and~}\rmS_j^K(\bX^N,\bY^n)\leq \tlambda_{n,N}\Big\}\\
&\geq (n+1)^{-(M+N)|\calX|}\sum_{\substack{(\Omega^{M_1},\Psi^{M_2})\in(\calP^N(\calX))^{M_1}\times (\calP^n(\calX))^{M_2}:\\\rmG_t^K(\Omega^M,\Psi^K,\alpha)\leq \tlambda_{n,N}\\\rmG_j^K(\Omega^M,\Psi^K,\alpha)\leq \tlambda_{n,N}}}
\exp\Big(-nE_l(P^{M_1},Q^{M_2},\Omega^{M_1},\Psi^{M_2},\alpha)\Big)\\
&\geq (n+1)^{-(M+N)|\calX|}\exp\bigg(-n\min_{\substack{(\Omega^{M_1},\Psi^{M_2})\in(\calP^N(\calX))^{M_1}\times (\calP^n(\calX))^{M_2}:\\\rmG_t^K(\Omega^M,\Psi^K,\alpha)\leq \tlambda_{n,N}\\\rmG_j^K(\Omega^M,\Psi^K,\alpha)\leq \lambda_n}} E_l(P^{M_1},Q^{M_2},\Omega^{M_1},\Psi^{M_2},\alpha)\bigg)\label{converse:ld:01}.
\end{align}
The converse proof of Theorem \ref{ld:known} is completed by combining \eqref{def:e*:ld:2}, \eqref{converse:ld:00}, \eqref{converse:ld:01} and using the continuity of $E_l(P^{M_1},Q^{M_2},\Omega^{M_1},\Psi^{M_2})$ in $(\Omega^{M_1},\Psi^{M_2})$, the continuity of $F_l(P^{M_1},Q^{M_2},\alpha,\lambda,K)$ in $\lambda$ and the fact that $\lim_{n\to\infty}\tlambda_{n,N}=\lambda$ and $\lim_{n\to\infty}\frac{T_K}{n}=0$.

\subsection{Proof of Small Deviations (Theorem \ref{sr:known})}
\label{proof:sr:known}

\subsubsection{Achievability}
The achievability proof analyzes Unnikrishnan's test in \eqref{test:unn}. The analysis of the mismatch probability $\beta(\cdot)$ is exactly the same as that for large deviations in Section \ref{proof:ld:known} and is thus omitted. For each $l\in[T_K]$ and any tuple of generating distributions $(P^{M_1},Q^{M_2})\in\calP_l^K$, we next bound the false reject probability as follows:
\begin{align}
\zeta(\phi_{n,N}^{\rmU,K}|P^{M_1},Q^{M_2})
&=\Pr\Big\{h_K(\bX^N,\bY^n)\leq \lambda_n\Big\}\\
&\leq \Pr\Big\{\min_{t\in([T_K]\setminus\{l\})}\rmS_t^K(\bX^N,\bY^n)\leq \lambda_n\}\label{usedefh}\\
&=1-\Pr\Big\{\forall~t\in([T_K]\setminus\{l\}),~\rmG_t^K(\hatT_{\bX^N},\hatT_{\bY^n},\alpha)>\lambda_n\Big\}\label{usedefst},
\end{align}
where \eqref{usedefh} follows from the definition of $h_K(\cdot)$ in \eqref{def:h4unn} that implies $h_K(\bX^N,\bY^n)\geq \min_{t\in([T_K]\setminus\{l\})}\rmS_t^K(\bX^N,\bY^n)$. Specifically, for any $(\bX^n,\by^n)$, if $l_K^*(\bx^N,\by^n)=l$, then $h_K(\bx^N,\by^n)=\min_{t\in([T_K]\setminus\{l\})}\rmS_t^K(\bx^N,\by^n)$; otherwise, $h_K(\bx^N,\by^n)\geq \min_{t\in([T_K]\setminus\{l\})}\rmS_t^K(\bx^N,\by^n)=\min_{t\in[T_K]}\rmS_t^K(\bx^N,\by^n)$, and \eqref{usedefst} follows from the definition of the scoring function $S_t(\cdot)$ in \eqref{def:Sl}.

For subsequent analyses, we need the following definitions. Define the following typical set
\begin{align}
\calT(P^{M_1},Q^{M_2})
&:=\Big\{(\bx^N,\by^n):~\max_{i\in[M_1]}\|\hatT_{x_i^N}-P_i\|_{\infty}\leq \sqrt{\log n/n},~\max_{j\in[M_2]}\|\hatT_{y_j^n}-Q_j\|_{\infty}\leq \sqrt{\log n/n}\Big\}\label{def:typical}.
\end{align}
It follows from \cite[Lemma 24]{tan2014state} that for each $l\in[T_K]$,
\begin{align}
\Pr\Big\{(\bX^N,\bY^n)\notin\calT(P^{M_1},Q^{M_2})\Big\}\leq \frac{2M_1|\calX|}{N^2}+\frac{2M_2|\calX|}{n^2}=:\theta_n\label{def:thetan}.
\end{align}

Recall the definitions of the information densities $\imath_1(\cdot)$ and $\imath_2(\cdot)$ in \eqref{def:i1} and \eqref{def:i2}, respectively. Fix $t\in([T_K]\setminus\{l\})$. For any $(\bx^N,\by^n)\in\calT(P^{M_1},Q^{M_2})$, under hypothesis $\rmH_l^K$, the second-order Taylor expansion of $\rmG_t^K(\hatT_{x^N},\hatT_{y^n},\alpha)$ around $(P^{M_1},Q^{M_2})$ implies that there exists a constant $c_1$ such that
\begin{align}
\rmG_t^K(\hatT_{\bx^N},\hatT_{\by^n},\alpha)
&=\sum_{(i,j)\in\calM_t^K}\mathrm{GJS}(\hatT_{x_i^N},\hatT_{y_j^n},\alpha)\\
\nn&=\sum_{(i,j)\in\calM_t^K}\bigg(\mathrm{GJS}(P_i,Q_j,\alpha)+\sum_{x\in\calX}(\hatT_{x_i^N}(x)-P_i(x))\alpha\imath_1(x|P_i,Q_j,\alpha)\\*
&\qquad\qquad\qquad+\sum_{y\in\calX}(\hatT_{y_j^n}(y)-Q_j(y))\imath_2(y|P_i,Q_j,\alpha)\bigg)+\frac{c_1\log n}{n}\label{taylor1:1}\\
&=\frac{1}{n}\sum_{(i,j)\in\calM_t^K}\bigg(\sum_{s\in[N]}\imath_1(x_{i,s}|P_i,Q_j,\alpha)
+\sum_{s\in[n]}\imath_2(y_{j,s}|P_i,Q_j,\alpha)\bigg)+\frac{c_1\log n}{n}\label{taylor1},
\end{align}
where \eqref{taylor1:1} follows since $N=n\alpha$, the alphabet $\calX$ is finite and $|\calM_t^K|=K$ is finite so that the remainder term equals to $\frac{c_1\log n}{n}$ as a function of $\|\hatT_{x_i^N}-P_i\|_2^2$ and $\|\hatT_{y_j^n}-Q_j\|_2^2$ for $(i,j)\in\calM_t^K$ for some constant $c_1$, and \eqref{taylor1} follows since
\begin{align}
\mathrm{GJS}(P_i,Q_j,\alpha)=\sum_{x\in\calX}P_i(x)\alpha\imath_1(x|P_i,Q_j,\alpha)+\sum_{y\in\calX}Q_j(y)\imath_2(y|P_i,Q_j,\alpha).
\end{align}
Analogously, for any $(\bx^N,\by^n)\in\calT(P^{M_1},Q^{M_2})$, under hypothesis $\rmH_l^K$, if $t=l$, the first-order Taylor expansion of $\rmG_t^K(\hatT_{x^N},\hatT_{y^n},\alpha)$ around $(P^{M_1},Q^{M_2})$  satisfies
\begin{align}
\rmG_t^K(\hatT_{\bx^N},\hatT_{\by^n},\alpha)
&=O(\sqrt{\log n/n})\label{taylor2}.
\end{align}
 
Using \eqref{def:thetan} and \eqref{taylor1}, we have
\begin{align}
\nn&\Pr\Big\{\forall~t\in([T_K]\setminus\{l\}),~\rmG_t^K(\hatT_{\bX^N},\hatT_{\bY^n},\alpha)>\lambda_n\Big\}\\*
&\geq \Pr\Big\{\forall~t\in([T_K]\setminus\{l\}),~\rmG_t^K(\hatT_{\bX^N},\hatT_{\bY^n},\alpha)>\lambda_n,~(\bX^N,\bY^n)\in\calT(P^{M_1},Q^{M_2})\Big\}\\
\nn&=\Pr\Bigg\{\forall~t\in([T_K]\setminus\{l\}),~\frac{1}{n}\sum_{(i,j)\in\calM_t^K}\bigg(\sum_{s\in[N]}\imath_1(X_{i,s}|P_i,Q_j,\alpha)+\sum_{s\in[n]}\imath_2(Y_{j,s}|P_i,Q_j,\alpha)\bigg)>\lambda_n+\frac{c_1\log n}{n},\\*
&\qquad\qquad\qquad\qquad\qquad (\bX^N,\bY^n)\in\calT(P^{M_1},Q^{M_2})\Bigg\}\\
\nn&\geq\Pr\Bigg\{\forall~t\in([T_K]\setminus\{l\}),~\frac{1}{n}\sum_{(i,j)\in\calM_t^K}\bigg(\sum_{s\in[N]}\imath_1(X_{i,s}|P_i,Q_j,\alpha)+\sum_{s\in[n]}\imath_2(Y_{j,s}|P_i,Q_j,\alpha)\bigg)>\lambda_n+\frac{c_1\log n}{n}\Bigg\}\\*
&\qquad-\Pr\Big\{\bX^N,\bY^n)\notin\calT(P^{M_1},Q^{M_2})\Big\}\label{ine:lower:Pab}\\
&=\Pr\Bigg\{\forall~t\in([T_K]\setminus\{l\}),~\frac{1}{n}\sum_{(i,j)\in\calM_t^K}\bigg(\sum_{s\in[N]}\imath_1(X_{i,s}|P_i,Q_j,\alpha)+\sum_{s\in[n]}\imath_2(Y_{j,s}|P_i,Q_j,\alpha)\bigg)>\lambda_n+\frac{c_1\log n}{n}\Bigg\}-\theta_n\label{sd:proof:achstep2}.
\end{align}
where \eqref{ine:lower:Pab} follows since for any two events $(\calA,\calB)$, we have that $\Pr\{\calA\cap\calB\}\geq \Pr\{\calA\}-\Pr\{\calB^\rmc\}$.

For ease of notation of subsequent analyses, we need the following definitions. Given any $(i,j)\in[M_1]\times[M_2]$, for each $s\in[N+n]$, let
\begin{align}
Z_s^{i,j}
&:=
\left\{
\begin{array}{ll}
\imath_1(X_{i,s}|P_i,Q_j,\alpha)&\mathrm{if~}s\in[N],\\
\imath_2(Y_{j,s-N}|P_i,Q_j,\alpha)&\mathrm{otherwize}.
\end{array}
\right.
\label{def:Zsij}
\end{align}
Given any $(i,j)\in[M_1]\times[M_2]$, the random variables $Z_1^{i,j},\ldots,Z_{N+n}^{i,j}$ are independent since each $X_i^N$ and $Y_j^n$ are generated i.i.d. and independent of each other. Note that
\begin{align}
\bbE[Z_s^{i,j}]
&=\left\{
\begin{array}{ll}
\bbE_{P_i}[\imath_1(X|P_i,Q_j,\alpha)]&\mathrm{if~}s\in[N],\\
\bbE_{Q_j}[\imath_2(Y|P_i,Q_j,\alpha)]&\mathrm{otherwise}.
\end{array}
\right.
\end{align}
Fix $t\in[T_K]$ such that $t\neq l$. For each $s\in[N+n]$, let
\begin{align}
U_s^t&:=\sum_{(i,j)\in\calM_t^K}(Z_s^{i,j}-\bbE[Z_s^{i,j}])\label{def:Ust},\\
W_t&:=\frac{1}{\sqrt{n}}\sum_{s\in[N+n]}U_s^t
=\frac{1}{\sqrt{n}}\bigg(\sum_{(i,j)\in\calM_t^K}\sum_{s\in[N+n]}Z_s^{i,j}-\sum_{(i,j)\in\calM_t^K}\sum_{s\in[N+n]}\bbE[Z_s^{i,j}]\bigg)\label{def:Wt}.
\end{align}
Note that the random variables $U_1^t,\ldots,U_{N+n}^t$ are independent. It follows that
\begin{align}
\sum_{(i,j)\in\calM_t^K}\sum_{s\in[N+n]}\bbE[Z_s^{i,j}]
&=\sum_{(i,j)\in\calM_t^K}\bigg(\sum_{s\in[N]}\bbE\big[\imath_1(X_{i,s}|P_i,Q_j,\alpha)\big]+\sum_{s\in[n]}\bbE\big[\imath_2(Y_{j,s}|P_i,Q_j,\alpha)\big]\bigg)\\
&=n\sum_{(i,j)\in\calM_t^K}\Big(\alpha\bbE\big[\imath_1(X_i|P_i,Q_j,\alpha)\big]+\bbE\big[\imath_1(Y_{j,s}|P_i,Q_j,\alpha)\big]\Big)\\
&=n\rmG_t^K(P^{M_1},Q^{M_2},\alpha)\label{W:average},
\end{align}
where $X_{i,s}\sim P_i$ and $Y_{j,s}\sim Q_j$. For each $t\in[T_K]$, it follows that
\begin{align}
\bbE[W_t]=0.
\end{align}
For each pair $(t_1,t_2)\in[T_K]^2$, the covariance of $(W_{t_1},W_{t_2})$ satisfies
\begin{align}
\bbE[W_{t_1}W_{t_2}]
&=\frac{1}{n}\sum_{s\in[N+n]}\Big(\bbE[U_s^{t_1}U_s^{t_2}]\Big)\\
&=\frac{1}{n}\sum_{s\in[N+n]}\bbE\bigg[\Big(\sum_{(i,j)\in\calM_{t_1}^K}(Z_s^{i,j}-\bbE[Z_s^{i,j}])\Big)\Big(\sum_{(\bari,\barj)\in\calM_{t_2}^K}(Z_s^{\bari,\barj}-\bbE[Z_s^{\bari,\barj}])\Big)\bigg]\\
&=\frac{1}{n}\sum_{s\in[N+n]}\Bigg(\bbE\bigg[\Big(\sum_{(i,j)\in\calM_{t_1}^K}Z_s^{i,j}\Big)\Big(\sum_{(\bari,\barj)\in\calM_{t_2}^K}Z_s^{\bari,\barj}\Big)\bigg]-\Big(\sum_{(i,j)\in\calM_{t_1}^K}\bbE[Z_s^{i,j}]\Big)\Big(\sum_{(\bari,\barj)\in\calM_{t_2}^K}\bbE[Z_s^{\bari,\barj}]\Big)\Bigg)\\
&=\mathrm{Cov}_{t_1,t_2}(P^{M_1},Q^{M_2},\alpha)\label{useVt1t2},
\end{align}
where \eqref{useVt1t2} follows from the definition of $\mathrm{Cov}_{t_1,t_2}(P^{M_1},Q^{M_2},\alpha)$ in \eqref{def:vt1t2} and the algebra in Appendix \ref{just:usevt1t2}.

Combining \eqref{usedefst}, \eqref{taylor1}, \eqref{sd:proof:achstep2} and \eqref{W:average} and using the definition of $W_t$ in \eqref{def:Wt}, we have
\begin{align}
\zeta(\phi_{n,N}^{\rmU,K}|P^{M_1},Q^{M_2})
&\leq 1-\Pr\Big\{\forall~t\in([T_K]\setminus\{l\}),W_t>\sqrt{n}\big(\lambda_n-\rmG_t^K(P^{M_1},Q^{M_2},\alpha)+\frac{c_1\log n}{n}\big)\Big\}+\theta_n\label{com:ach:1}.
\end{align}

Recall the definitions of $\Lambda_l(P^{M_1},Q^{M_2},K,\alpha)$ in \eqref{def:Lambdal} and the set $\calI_l(P^{M_1},Q^{M_2})$ in \eqref{def:calIl}. Let $L\in\bbR$ be a positive real number to be specified. Choose $\lambda_n$ such that
\begin{align}
\lambda_n=\Lambda_l(P^{M_1},Q^{M_2},K,\alpha)-\frac{L}{\sqrt{n}}-\frac{c_1\log n}{n}\label{lambda:choose:sr:known}.
\end{align} 
For each $t\in[T_K]$ such that $t\neq l$ and $t\notin\calI_l(P^{M_1},Q^{M_2})$, when $n$ is  sufficiently large, there exists a positive real number $r\in\bbR_+$ such that
we have
\begin{align}
\lambda_n-\rmG_t^K(P^{M_1},Q^{M_2},\alpha)\leq -r<0.
\end{align}
Thus, for $n$ large, using Chebyshev's inequality, we have
\begin{align}
\Pr\bigg\{W_t\leq \sqrt{n}\Big(\lambda-\rmG_t^K(P^{M_1},Q^{M_2},\alpha)+\frac{c_1\log n}{n}\Big)\bigg\}
&\leq \Pr\big\{W_t\leq -\sqrt{n}r\big\}\\
&\leq \Pr\big\{|W_t|\geq \sqrt{n}r\big\}\\
&\leq \frac{n\mathrm{Cov}_{t,t}(P^{M_1},Q^{M_2},\alpha)}{n^2r^2}\\
&=O\left(\frac{1}{n}\right)\label{com:ach:3}.
\end{align}
Therefore, the probability term inside \eqref{com:ach:1} can be lower bounded as follows:
\begin{align}
\nn&\Pr\Big\{\forall~t\in([T_K]\setminus\{l\}),W_t>\sqrt{n}\big(\lambda_n-\rmG_t^K(P^{M_1},Q^{M_2},\alpha)+\frac{c_1\log n}{n}\big)\Bigg\}\\*
\nn&\geq\Pr\bigg\{\forall~t\in\calI_l(P^{M_1},Q^{M_2}),~W_t>\sqrt{n}\Big(\lambda_n-\rmG_t^K(P^{M_1},Q^{M_2},\alpha)+\frac{c_1\log n}{n}\Big)\bigg\}\\*
&\qquad-\sum_{t\in[T_K]:~t\neq l,~t\notin\calI_l(P^{M_1},Q^{M_2})}\Pr\bigg\{W_t\leq \sqrt{n}\Big(\lambda_n-\rmG_t^K(P^{M_1},Q^{M_2},\alpha)+\frac{c_1\log n}{n}\Big)\bigg\}\label{com:ach:2}\\
&=\Pr\bigg\{\forall~t\in\calI_l(P^{M_1},Q^{M_2}),~W_t>\sqrt{n}\Big(\lambda_n-\rmG_t^K(P^{M_1},Q^{M_2},\alpha)+\frac{c_1\log n}{n}\Big)\bigg\}+O\left(\frac{1}{n}\right)\label{com:ach:4}\\
&=\Pr\big\{\forall~t\in\calI_l(P^{M_1},Q^{M_2}),W_t>-L\big\}+O\left(\frac{1}{n}\right)\label{com:ach:5},
\end{align}
where \eqref{com:ach:2} follows by recursively applying the inequality $\Pr\{\calA\cap\calB\}\geq \Pr\{\calA\}-\Pr\{\calB^\rmc\}$ for any two sets $(\calA,\calB)$, \eqref{com:ach:4} follows from \eqref{com:ach:3} and \eqref{com:ach:5} follows from the the definitions of $\calI_l(P^{M_1},Q^{M_2})$ in \eqref{def:calIl} and $\lambda_n$ in \eqref{lambda:choose:sr:known}.

Recall that $N=n\alpha$. Combining \eqref{com:ach:1} and \eqref{com:ach:4} and using the definition of $\theta_n$ in \eqref{def:thetan}, when the covariance matrix $\bV^l(P^{M_1},Q^{M_2},\alpha)$ is positive definite, we obtain
\begin{align}
\zeta(\phi_{n,N}^{\rmU,K}|P^{M_1},Q^{M_2})
&=1-\Pr\big\{\forall~t\in\calI_l(P^{M_1},Q^{M_2}),W_t>-L\big\}+O\left(\frac{1}{n}\right)\\
&=1-\Pr\big\{\forall~t\in\calI_l(P^{M_1},Q^{M_2}),-W_t<L\big\}+O\left(\frac{1}{n}\right)\\
&\leq 1-\rvbPh_{\tau_l}\big(L\times \mathbf{1}_{\tau_l};\mathbf{0}_{\tau_l};\bV^l(P^{M_1},Q^{M_2},\alpha)\big)+O\left(\frac{1}{\sqrt{n}}\right)\label{multivariateuse},
\end{align}
where $\tau_l=|\calI_l(P^{M_1},Q^{M_2})|$ and \eqref{multivariateuse} follows from the multivariate Berry-Esseen theorem~\cite[Cor. 29]{watanabe2015}. The case of positive semidefinite $\bV^l(P^{M_1},Q^{M_2},\alpha)$ can be handled similarly to \cite[Cor. 8]{tan2014dispersions}.

Recall the definition of $\nu_l^*(\cdot)$ in \eqref{def:nul*}. Choose $L$ such that
\begin{align}
L=\nu_l^*\big(\varepsilon-O\left(1/\sqrt{n}\right)|P^{M_1},Q^{M_2},K,\alpha\big)\label{choose:L}.
\end{align}
It follows from \eqref{multivariateuse} that
\begin{align}
\zeta(\phi_{n,N}^{\rmU,K}|P^{M_1},Q^{M_2})\leq \varepsilon.
\end{align}
Using the definition of $\lambda^*_{\mathrm{SD}}(\cdot)$ in \eqref{def:l*:sd}, combining \eqref{lambda:choose:sr:known} and \eqref{choose:L}, and applying Taylor's expansion of $\nu_l^*(\cdot)$ around $\varepsilon$, the achievability proof of Theorem \ref{sr:known} is completed.

\subsubsection{Converse}
Recall the definitions of $\delta_{n,N}$ in \eqref{def:deltann} and $\tilde{\lambda}_{n,N}$ in \eqref{def:tlambda}. Let $\delta_{n,N}'=\delta_{n,N}+\log n/n$. Using Lemma \ref{converse:gnp}, we have that for any test satisfying the generalized Neyman-Pearson criterion, for each $l\in[T_K]$, under any tuples of generating distributions $(P^{M_1},Q^{M_2})\in\calP_l^K$, the false reject probability is lower bounded by 
\begin{align}
\zeta(\phi_{n,N}^{\rmU,K}|P^{M_1},Q^{M_2})
&\geq\bigg(1-\frac{T_K}{n}\bigg)\Pr\{h_K(\bX^N,\bY^n)\leq\tlambda_{n,N}\}\\
&=\bigg(1-\frac{T_K}{n}\bigg)\Pr\{h_K(\bX^N,\bY^n)\leq\lambda+\delta_{n,N}'\}\label{sd:proof:constep1}.
\end{align}
It suffices to bound the probability term in \eqref{sd:proof:constep1}. Note that
\begin{align}
\nn&\Pr\{h_K(\bX^N,\bY^n)\leq\lambda+\delta_{n,N}'\}\\
&=\Pr\bigg\{h_K(\bX^N,\bY^n)=\min_{t\in([T_K]\setminus\{l\})}\rmS_t^K(\bX^N,\bY^n),~h_K(\bX^N,\bY^n)\leq\lambda+\delta_{n,N}'\bigg\}\\
&\geq \Pr\bigg\{\min_{t\in([T_K]\setminus\{l\})}\rmS_t^K(\bX^N,\bY^n)\leq\lambda+\delta_{n,N}'\bigg\}-\Pr\bigg\{h_K(\bX^N,\bY^n)\neq \min_{t\in([T_K]\setminus\{l\})}\rmS_t^K(\bX^N,\bY^n)\bigg\}\label{sd:proof:constep2}.
\end{align}

Analogous to the achievability proof, we can lower bound the first term in \eqref{sd:proof:constep2} as follows:
\begin{align}
\nn&\Pr\bigg\{\min_{t\in([T_K]\setminus\{l\})}\rmS_t^K(\bX^N,\bY^n)\leq\lambda+\delta_{n,N}'\bigg\}\\*
&=1-\Pr\Big\{\forall~t\in([T_K]\setminus\{l\}),~\rmS_t^K(\bX^N,\bY^n)>\lambda+\delta_{n,N}'\Big\}\\
\nn&\geq 1-\Pr\Big\{\forall~t\in([T_K]\setminus\{l\}),~\rmS_t^K(\bX^N,\bY^n)>\lambda+\delta_{n,N}',~(\bX^N,\bY^n)\in\calT(P^{M_1},Q^{M_2})\Big\}\\*
&\qquad-\Pr\big\{(\bX^N,\bY^n)\notin\calT(P^{M_1},Q^{M_2})\big\}\\
&\ge 1-\theta_n-\Pr\Big\{\forall~t\in([T_K]\setminus\{l\}),~W_t>\sqrt{n}(\lambda-\rmG_t^K(P^{M_1},Q^{M_2},\alpha)+\delta_{n,N}')\Big\}\\
&\ge 1-\theta_n-\Pr\Big\{\forall~t\in\calI_l(P^{M_1,Q^{M_2}}),~W_t>\sqrt{n}(\lambda-\rmG_t^K(P^{M_1},Q^{M_2},\alpha)+\delta_{n,N}')\Big\}\label{sd:proof:constep3}.
\end{align}
The second term in \eqref{sd:proof:constep2} is upper bounded as follows:
\begin{align}
\nn&\Pr\bigg\{h_K(\bX^N,\bY^n)\neq \min_{t\in([T_K]\setminus\{l\})}\rmS_t^K(\bX^N,\bY^n)\bigg\}\\*
&=\Pr\bigg\{\exists~t\in([T_K]\setminus\{l\}),\rmS_t^K(\bX^N,\bY^n)<\rmS_l^K(\bX^N,\bY^n)\bigg\}\\
&\leq \Pr\bigg\{\exists~t\in([T_K]\setminus\{l\}),\rmS_t^K(\bX^N,\bY^n)<\rmS_l^K(\bX^N,\bY^n),~(\bX^N,\bY^n)\in\calT(P^{M_1},Q^{M_2})\bigg\}+\Pr\big\{(\bX^N,\bY^n)\notin\calT(P^{M_1},Q^{M_2})\big\}\\
&\leq \sum_{t\in([T_K]\setminus\{l\})}\Pr\big\{\rmS_t^K(\bX^N,\bY^n)<\rmS_l^K(\bX^N,\bY^n),~(\bX^N,\bY^n)\in\calT(P^{M_1},Q^{M_2})\big\}+\theta_n\label{usesomeresult}\\
&=\sum_{t\in([T_K]\setminus\{l\})}\Pr\big\{\rmG_t^K(\hatT_{\bX^N},\hatT_{\bY^n},\alpha)<\rmG_l^K(\hatT_{\bX^N},\hatT_{\bY^n},\alpha),~(\bX^N,\bY^n)\in\calT(P^{M_1},Q^{M_2})\big\}+\theta_n\\
&=\sum_{t\in([T_K]\setminus\{l\})}\Pr\Bigg\{\sum_{(i,j)\in\calM_t^K}\bigg(\sum_{s\in[N]}\imath_1(X_{i,s}|P_i,Q_j,\alpha)
+\sum_{s\in[n]}\imath_2(Y_{j,s}|P_i,Q_j,\alpha)\bigg)<O(\sqrt{n\log n})\Bigg\}+\theta_n\label{usetaylorandothers},
\end{align}
where \eqref{usesomeresult} follows from the result in \eqref{def:thetan}, \eqref{usetaylorandothers} follows from the Taylor expansions of $\rmG_t^K(\hatT_{\bX^N},\hatT_{\bY^n},\alpha)$ in \eqref{taylor1} and \eqref{taylor2}.

The probability term in \eqref{usetaylorandothers} can be further upper bounded using Chebyshev's inequality. For ease of analyses, we use $\imath_1(\cdot)$ to denote $\imath_1(\cdot|\cdot)$ and use $\imath_2(\cdot)$ similarly. Fix ant $t\in([T_K]\setminus\{l\})$.
Note that there are in total $|\calM_t^K|(N+n)=Kn(1+\alpha)$ independent variables in the sum term for the probability term in \eqref{usetaylorandothers}. Let the average of these $Kn(1+\alpha)$ random variables be
\begin{align}
S_n^t:=\frac{1}{Kn(1+\alpha)}\sum_{(i,j)\in\calM_t^K}\bigg(\sum_{s\in[N]}\imath_1(X_{i,s})
+\sum_{s\in[n]}\imath_2(Y_{j,s})\bigg).
\end{align}
It follows that
\begin{align}
\bbE[S_n^t]
&=\frac{1}{Kn(1+\alpha)}\sum_{(i,j)\in\calM_t^K}\Big(n(1+\alpha)\bbE_{P_i}[\imath_1(X)]+n\bbE_{Q_j}[\imath_2(Y)]\Big)\\
&=\frac{1}{K(1+\alpha)}\sum_{(i,j)\in\calM_t^K}\Big((1+\alpha)\bbE_{P_i}[\imath_1(X)]+\bbE_{Q_j}[\imath_2(Y)]\Big)\\
&=:\rho(P^{M_1},Q^{M_2},\calM_t^K,\alpha)
\end{align}
and
\begin{align}
\mathrm{Var}[S_n^t]
&=\frac{1}{(Kn(1+\alpha))^2}\sum_{(i,j)\in\calM_t^K}\bigg(\sum_{s\in[N]}\Var[\imath_1(X_{i,s})]
+\sum_{s\in[n]}\Var[\imath_2(Y_{j,s})]\bigg)\\*
&=\frac{1}{K^2(1+\alpha)^2n}\sum_{(i,j)\in\calM_t^K}\Big((1+\alpha)\Var_{P_i}[\imath_1(X)]+\Var_{Q_j}[\imath_2(Y)]\Big)\\
&=:\frac{\rmV(P^{M_1},Q^{M_2},\calM_t^K,\alpha)}{n}.
\end{align}
Thus, for each $t\in([T_K]\setminus\{l\})$, the probability term in \eqref{usetaylorandothers} can be bounded as follows:
\begin{align}
\nn&\Pr\Bigg\{\sum_{(i,j)\in\calM_t^K}\bigg(\sum_{s\in[N]}\imath_1(X_{i,s})
\sum_{s\in[n]}\imath_2(Y_{j,s})\big)<O\left(\sqrt{n\log n}\right)\Bigg\}\\
&=\Pr\Big\{S_n^t<O(\sqrt{\log n/n})\Big\}\\*
&=\Pr\Big\{S_n^t-\bbE[S_n^t]<-\rho(P^{M_1},Q^{M_2},\calM_t^K,\alpha)+O(\sqrt{\log n/n})\Big\}\\
&\leq \Pr\Big\{|S_n^t-\bbE[S_n^t]|>\rho(P^{M_1},Q^{M_2},\calM_t^K,\alpha)+O(\sqrt{\log n/n})\Big\}\\
&\leq \frac{\rmV(P^{M_1},Q^{M_2},\calM_t^K,\alpha)}{n(\rho(P^{M_1},Q^{M_2},\calM_t^K,\alpha)+O(\sqrt{\log n/n}))^2}\\
&=O(1/n)\label{sd:proof:constep4}.
\end{align}

Let $L\in\bbR_+$ be specified later and consider any $\lambda$ such that
\begin{align}
\lambda=\Lambda_l(P^{M_1},Q^{M_2},K,\alpha)-\frac{L}{\sqrt{n}}-\delta_{n,N}'\label{lambda:choose:sr:known:2}.
\end{align}
Combining \eqref{sd:proof:constep1}, \eqref{sd:proof:constep2}, \eqref{sd:proof:constep3}, \eqref{usetaylorandothers} and \eqref{sd:proof:constep4} and using the definitions of $\theta_n$ in \eqref{def:thetan} and $\calI_L(P^{M_1},Q^{M_2})$ in \eqref{def:calIl}, we have
\begin{align}
\zeta(\phi_{n,N}^{\rmU,K}|P^{M_1},Q^{M_2})
&\geq (1-T_K/n)\Big(1-\Pr\big\{\forall~t\in\calI_l(P^{M_1,Q^{M_2}}),~W_t>-L\big\}+O(1/n)\Big)\\
&=1-\Pr\big\{\forall~t\in\calI_l(P^{M_1,Q^{M_2}}),-W_t<L\big\}+O(1/n)\\
&\geq 1-\rvbPh_{\tau_l}\big(L\times \mathbf{1}_{\tau_l};\mathbf{0}_{\tau_l};\bV^l(P^{M_1},Q^{M_2},\alpha)\big)+O\left(\frac{1}{\sqrt{n}}\right)\label{sd:proof:constep5}
\end{align}
where \eqref{sd:proof:constep5} follows by applying the multivariate Berry-Esseen theorem similarly to \eqref{multivariateuse}.

Recall the definition of $\nu_l^*(n,\varepsilon|\cdot)$ in \eqref{def:nul*} and note that $\nu_l^*(n,\varepsilon|\cdot)$ is a decreasing function of $\varepsilon$. Fix any positive real number $\delta\in(0,1-\varepsilon)$. If
\begin{align}
L=\nu_l^*(\varepsilon+\delta+O(1/\sqrt{n})|P^{M_1},Q^{M_2},K,\alpha),
\end{align}
it follows that
\begin{align}
\zeta(\phi_{n,N}^{\rmU,K}|P^{M_1},Q^{M_2})\geq \varepsilon+\delta>\varepsilon.
\end{align}
Thus, to ensure that $\zeta(\phi_{n,N}^{\rmU,K}|P^{M_1},Q^{M_2})\leq \varepsilon$, we should have $L\geq \nu_l^*(\varepsilon+O(1/\sqrt{n})|P^{M_1},Q^{M_2},K,\alpha)$. Therefore, using the definition of $\lambda^*_{\mathrm{SD}}(\cdot)$ in \eqref{def:l*:sd} and the choice of $\lambda$ in \eqref{lambda:choose:sr:known:2}, noting that $\delta_{n,N'}=O(\log n/n)$ and applying Taylor's expansion of $\nu_l^*(\cdot)$ around $\varepsilon$, the converse proof of Theorem \ref{sr:known} is completed.

\section{Proofs for Unknown Number of Matches}
\label{sec:proofs:unknown}

\subsection{Proof of Large Deviations (Theorem \ref{ld:unknown})}
\label{proof:ld:unknown}
The proof of Theorem \ref{ld:unknown} is similar to that of Theorem \ref{ld:known} and we only emphasize the differences here. Recall that when the number of matches $K$ is unknown, we need to consider all possible hypotheses from $K=0$ to $K=M_2$ and we use $\calH_K$ to denote the set of all hypotheses when then number of matches is $K$.

\subsubsection{Mismatch Probability} We first upper bound the mismatch probability. Fix $l\in[T_K]$, where $K\in[M_2]$ is arbitrary. Since a mismatch event occurs if the number of matches is incorrectly estimated or if the decision of the test $\phi_{n,N}^{\rmU,K}$ is incorrect, it follows that
\begin{align}
\beta(\phi_{n,N}^\rmM|P^{M_1},Q^{M_2})
&=\Pr\{\phi_{n,N}^\rmM(\bX^N,\bY^n)\notin\{\rmH_l^K,\rmH_\rmr\}\}\\
\nn&\leq \Pr\{\exists~\hatK\in[M_2]:~\hatK> K\mathrm{~and~}\underline{S}_{\hatK}(\bX^N,\bY^n)\leq\lambda_{1,n}\}
+\Pr\{\underline{S}_K(\bX^N,\bY^n)>\lambda_{1,n}\}
\\*
&\qquad+\Pr\{\phi_{n,N}^{\rmU,K}(\bX^N,\bY^n)\notin\{\rmH_l^K,\rmH_\rmr\}\}\label{ld:unknown:step1}.
\end{align}
Note that the third term in \eqref{ld:unknown:step1} is exactly the mismatch probability of Unnikrishnan's test with threshold $\lambda_2$ when the number of matches is known to be $K$, which is upper bounded in \eqref{achldfinal}.  The second term in \eqref{ld:unknown:step1} is upper bounded as follows:
\begin{align}
\Pr\{\underline{S}_K(\bX^N,\bY^n)>\lambda_{1,n}\}
&=\Pr\Big\{\min_{t\in[T_K]}\rmS_t^K(\bX^N,\bY^n)\geq \lambda_{1,n}\Big\}\\
&\leq\Pr\big\{\rmS_l^K(\bX^N,\bY^n)\geq \lambda_{1,n}\big\}\\
&\leq \exp(-n\lambda_1)\label{sameasachldknown},
\end{align}
where \eqref{sameasachldknown} follows from same steps to prove \eqref{achldfinal}. The first term in \eqref{ld:unknown:step1} is upper bounded as follows:
\begin{align}
\nn&\Pr\{\exists~\hatK\in[M_2]:~\hatK>K\mathrm{~and~}\underline{S}_{\hatK}(\bX^N,\bY^n)\leq\lambda_{1,n}\}\\*
&\leq \sum_{\hatK\in[K+1:M_2]}\Pr\{\underline{S}_{\hatK}(\bX^N,\bY^n)\leq\lambda_{1,n}\}\\
&=\sum_{\hatK\in[K+1:M_2]}\Pr\{\exists~t\in[T_{\hatK}],~\rmS_t^{\hatK}(\bX^N,\bY^n)\leq\lambda_{1,n}\}\\
&\leq \sum_{\hatK\in[K+1:M_2]}\sum_{t\in[T_{\hatK}]}\Pr\{\rmS_t^{\hatK}(\bX^N,\bY^n)\leq\lambda_{1,n}\}\label{ld:unknown:step2},
\end{align}
where the probability term in the last inequality can be bounded using the method of types as follows:
\begin{align}
\nn&\Pr\{\rmS_t^{\hatK}(\bX^N,\bY^n)\leq\lambda_{1,n}\}\\*
&=\sum_{(\bx^N,\by^n):\rmS_t^{\hatK}(\bx^N,\by^n)\leq\lambda_{1,n}}\Big(\prod_{i\in[M_1]}P_i^N(x_i^N)\Big)\Big(\prod_{j\in[M_2]}Q_j^n(y_j^n)\Big)\\
&=\sum_{\substack{\{(\Omega^{M_1},\Psi^{M_2})\}\in(\calP^N(\calX))^{M_1}\times(\calP^n(\calX))^{M_2}:\\
\sum_{(i,j)\in\calM_t^{\hatK}}\mathrm{GJS}(\Omega_i,\Psi_i,\alpha)\leq\lambda_{1,n}}}\Big(\prod_{i\in[M_1]}P_i^N(\calT_{\Omega_i}^N)\Big)\Big(\prod_{j\in[M_2]}
Q_j^n(\calT_{\Psi_j}^n)\Big)\\
&\leq \sum_{\substack{\{(\Omega^{M_1},\Psi^{M_2})\}\in(\calP^N(\calX))^{M_1}\times(\calP^n(\calX))^{M_2}:\\
\sum_{(i,j)\in\calM_t^{\hatK}}\mathrm{GJS}(\Omega_i,\Psi_i,\alpha)\leq\lambda_{1,n}}}\exp\bigg(-n\Big(\sum_{i\in[M_2]}D(\Omega_i\|P_i)+\sum_{j\in[M_2]}D(\Psi_j\|Q_j)\Big)\bigg)\\
&\leq (N+1)^{M_1|\calX|}(n+1)^{M_2|\calX|}\exp\bigg(-n\min_{\substack{\{(\Omega^{M_1},\Psi^{M_2})\}\in(\calP^N(\calX))^{M_1}\times(\calP^n(\calX))^{M_2}:\\
\sum_{(i,j)\in\calM_t^{\hatK}}\mathrm{GJS}(\Omega_i,\Psi_i,\alpha)\leq\lambda_{1,n}}}\Big(\sum_{i\in[M_2]}D(\Omega_i\|P_i)+\sum_{j\in[M_2]}D(\Psi_j\|Q_j)\Big)\bigg)\label{ld:unknown:step3}.
\end{align}

Note that under hypothesis $\rmH_l^K$, $P_i=Q_j$ for $(i,j)\in\calM_l^K$. Thus,
\begin{align}
\nn&\sum_{i\in[M_2]}D(\Omega_i\|P_i)+\sum_{j\in[M_2]}D(\Psi_j\|Q_j)\\*
\nn&=\sum_{(i,j)\in\calM_l^K}\big(D(\Omega_i\|P_i)+D(\Psi_j\|P_i)\big)+\sum_{\substack{(i,j)\in(\calM_t^{\hatK})^\rmc\cap(\calM_l^K)^\rmc}}\big(D(\Omega_i\|P_i)+D(\Psi_j\|Q_j)\big)\\*
&\qquad+\sum_{\substack{(i,j)\in\calM_t^{\hatK}\cap(\calM_l^K)^\rmc}}\big(D(\Omega_i\|P_i)+D(\Psi_j\|Q_j)\big).
\end{align}
It follows that
\begin{align}
\nn&\min_{\hatK\in[K+1:M_2]}\min_{t\in[T_{\hatK}]}\min_{\substack{\{(\Omega^{M_1},\Psi^{M_2})\}\in(\calP^N(\calX))^{M_1}\times(\calP^n(\calX))^{M_2}:\\
\sum_{(i,j)\in\calM_t^{\hatK}}\mathrm{GJS}(\Omega_i,\Psi_i,\alpha)\leq\lambda_{1,n}}}\Big(\sum_{i\in[M_2]}D(\Omega_i\|P_i)+\sum_{j\in[M_2]}D(\Psi_j\|Q_j)\Big)\\*
&=\min_{\hatK\in[K+1:M_2]}\min_{t\in[T_{\hatK}]}\min_{\substack{\{\Omega_i,\Psi_j\}_{(i,j)\in\calM_t^{\hatK}\cap(\calM_l^K)^\rmc}:\\
\sum_{\substack{(i,j)\in\calM_t^{\hatK}\cap(\calM_l^K)^\rmc}}\mathrm{GJS}(\Omega_i,\Psi_i,\alpha)\leq\lambda_{1,n}
}}
\sum_{\substack{(i,j)\in\calM_t^{\hatK}\cap(\calM_l^K)^\rmc}}\big(D(\Omega_i\|P_i)+D(\Psi_j\|Q_j)\big)\\
&=\min_{t\in[T_{K+1}]}\min_{\substack{\{\Omega_i,\Psi_j\}_{(i,j)\in\calM_t^{K+1}\cap(\calM_l^K)^\rmc}:\\
\sum_{\substack{(i,j)\in\calM_t^{K+1}\cap(\calM_l^K)^\rmc}}\mathrm{GJS}(\Omega_i,\Psi_i,\alpha)\leq\lambda_{1,n}
}}
\sum_{\substack{(i,j)\in\calM_t^{K+1}\cap(\calM_l^K)^\rmc}}\big(D(\Omega_i\|P_i)+D(\Psi_j\|Q_j)\big)\label{whysmallK}\\
&=\min_{\substack{(i,j)\in[M_1]\times[M_2]:\\i\notin\calA_l^K,~j\notin\calB_l^K}}
\min_{\substack{(\Omega,\Psi):\mathrm{GJS}(\Omega,\Psi,\alpha)\leq\lambda_{1,n}}}\big(\alpha D(\Omega\|P_i)+D(\Psi\|Q_j)\big)\label{whysingleK},
\end{align}
where \eqref{whysmallK} follows since it suffices to consider $\hatK=K+1$ so that $\calM_t^{\hatK}\cap(\calM_l^K)^\rmc$ is a single element to achieve the outer minimization over $\sum_{\hatK\in[K+1:M_2]}$ since other values of $\hatK$ would lead to a larger value for similar reasons leading to \eqref{nonincreasing}, and \eqref{whysingleK} follows since, when $\calM_t^{\hatK}\cap(\calM_l^K)^\rmc$ is a single element, it is equivalent to optimize over a single pair of distributions $(\Omega,\Psi)$ and the sum term in \eqref{whysmallK} reduces to a single term.

Thus, using the definition of $f_{l,K}(\cdot)$ in \eqref{def:flambda1}, and combining \eqref{ld:unknown:step2}, \eqref{ld:unknown:step3} and \eqref{whysingleK},
we have
\begin{align}
\nn&\Pr\{\exists~\hatK\in[M_2]:~\hatK>K\mathrm{~and~}\underline{S}_{\hatK}(\bX^N,\bY^n)\leq\lambda_{1,n}\}\\*
&\leq \sum_{\hatK\in[K+1:M_2]}\sum_{t\in[T_{\hatK}]}(N+1)^{\hatK|\calX|}(n+1)^{\hatK|\calX|}\times\exp(-nf_{l,K}(\lambda_{1,n},P^{M_1},Q^{M_2}))\\
&\leq M_2T(N+1)^{M_2|\calX|}(n+1)^{M_2|\calX|}\exp(-nf_{l,K}(\lambda_{1,n},P^{M_1},Q^{M_2}))\label{upp:firstterm:unknowk}.
\end{align}

Combining the results in \eqref{achldfinal}, \eqref{ld:unknown:step1}, \eqref{sameasachldknown} and \eqref{upp:firstterm:unknowk}, we have
\begin{align}
\nn&\beta(\phi_{n,N}^\rmM|P^{M_1},Q^{M_2})\\*
&\leq \sum_{\hatK\in[K+1:M_2]}\sum_{t\in[T_{\hatK}]}(N+1)^{\hatK|\calX|}(n+1)^{\hatK|\calX|}\exp(-nf_{l,K}(\lambda_{1,n},P^{M_1},Q^{M_2}))+\exp(-n\lambda_1)+\exp(-n\lambda_2)\\
&\leq M_2T(N+1)^{M_2|\calX|}(n+1)^{M_2|\calX|}\exp(-nf_{l,K}(\lambda_{1,n},P^{M_1},Q^{M_2}))+\exp(-n\lambda_1)+\exp(-n\lambda_2)\label{ld:unknown:betal:upp}.
\end{align}
Thus, using the definitions of $\lambda_{i,n}$ in \eqref{def:lambdain}, we have
\begin{align}
\liminf_{n\to\infty}-\frac{1}{n}\log\beta(\phi_{n,N}^\rmM|P^{M_1},Q^{M_2})
&\geq \min\big\{\lambda_1,\lambda_2,f_{l,K}(\lambda_1,P^{M_1},Q^{M_2})\big\}.
\end{align}

\subsubsection{False Reject Probability} We next bound the false reject probability. Note that $\phi_{n,N}^{\rmM}(\bX^N,\bY^n)=\rmH_\rmr$ if one of the following events occurs:
\begin{align}
\calR_1&:=\Big\{\min_{\hatK\in[M_2]}\underline{S}_{\hatK}(\bX^N,\bY^n)\geq \lambda_{1,n}\Big\},\\
\calR_{2,\hatK}&:=\Big\{\underline{S}_{\hatK}(\bX^N,\bY^n)\leq\lambda_{1,n},~\mathrm{and}~\phi_{n,N}^{\rmU,\hatK}(\bX^N,\bY^n)=\rmH_\rmr\Big\},~\hatK\in[M_2].
\end{align} 
It follows that
\begin{align}
\zeta(\phi_{n,N}^\rmM|P^{M_1},Q^{M_2})
&=\Pr\{\phi_{n,N}^\rmM(\bX^N,\bY^n)=\rmH_\rmr\}\\
&\leq \bbP\{\calR_1\}+\sum_{\hatK\in[M_2]}\Pr\{\calR_{2,\hatK}\}\label{proof:ld:step4}.
\end{align}
The first term in \eqref{proof:ld:step4} is upper bounded as follows:
\begin{align}
\bbP\{\calR_1\}
&\leq \Pr\big\{\underline{S}_K(\bX^N,\bY^n)\geq \lambda_{1,n}\big\}\\*
&\leq \exp(-n\lambda_1)\label{sameasachldknown2},
\end{align}
where \eqref{sameasachldknown2} follows from \eqref{sameasachldknown}.

The second term in \eqref{proof:ld:step4} is upper bounded as follows:
\begin{align}
\sum_{\hatK\in[M_2]}\Pr\{\calR_{2,j}\}
&\leq \sum_{\hatK\in[K+1:M_2]}\Pr\{\underline{S}_{\hatK}(\bX^N,\bY^n)\leq\lambda_{1,n}\}+\Pr\{\phi_{n,N}^{\rmU,K}(\bX^N,\bY^n)=\rmH_\rmr\}\label{proof:ld:step6},
\end{align}
where the first term is upper bounded by \eqref{ld:unknown:step2} and the second term is exactly the false reject probability of Unnikrishnan's test analyzed in \eqref{frej:ld:step1} and \eqref{relaxations}. Combining \eqref{proof:ld:step4}, \eqref{sameasachldknown} and \eqref{proof:ld:step6}, we have
\begin{align}
\nn&\zeta(\phi_{n,N}^\rmM|P^{M_1},Q^{M_2})\\*
&\leq \exp(-n\lambda_1)+M_2T(N+1)^{M_2|\calX|}(n+1)^{M_2|\calX|}\exp(-nf_{l,K}(\lambda_{1,n},P^{M_1},Q^{M_2}))+\Pr\{\phi_{n,N}^{\rmU,K}(\bX^N,\bY^n)=\rmH_\rmr\}\label{ld:unknown:freject:upp}.
\end{align}
Therefore,
\begin{align}
\liminf_{n\to\infty}-\frac{1}{n}\log\zeta(\phi_{n,N}^\rmM|P^{M_1},Q^{M_2})
&\geq \min\Big\{\lambda_1,f_{l,K}(\lambda_1,P^{M_1},Q^{M_2}),F_l(P^{M_1},Q^{M_2},\alpha,\lambda_2,K)\Big\}.
\end{align}

\subsubsection{False Alarm Probability} Finally, we bound the false alarm probability. It follows that
\begin{align}
\nn&\eta(\phi_{n,N}^\rmM|P^{M_1},Q^{M_2})\\*
&=\Pr\{\phi_{n,N}^\rmM(\bX^N,\bY^n)\neq\rmH_\rmr\}\\
&=\Pr\Big\{\exists~j\in[M_2]:~\min_{\hatK\in[j+1:M_2]}\underline{S}_{\hatK}(\bX^N,\bY^n)\geq \lambda_{1,n},~\underline{S}_j(\bX^N,\bY^n)\leq\lambda_{1,n},~\mathrm{and}~\phi_{n,N}^{\rmU,j}(\bX^N,\bY^n)\neq\rmH_\rmr\Big\}\\
&\leq \sum_{j\in[M_2]}\Pr\{\underline{S}_j(\bX^N,\bY^n)\leq\lambda_{1,n}\}\label{ld:falarm:step1}.
\end{align}
For each $j\in[M_2]$, the probability term in \eqref{ld:falarm:step1} can be upper bounded as follows:
\begin{align}
\nn&\Pr\{\underline{S}_j(\bX^N,\bY^n)\leq\lambda_{1,n}\}\\*
&=\Pr\{\exists~t\in[T_j]:~\rmS_t^j(\bX^N,\bY^n)\leq\lambda_{1,n}\}\\
&\leq \sum_{t\in[T_j]}\Pr\{\rmS_t^j(\bX^N,\bY^n)\leq\lambda_{1,n}\}\\
&\leq \sum_{t\in[T_j]}(N+1)^{j|\calX|}(n+1)^{j|\calX|}\exp\bigg(-n\min_{\substack{\{(\Omega_i,\Psi_i)\}_{i\in\calA_t^j}\in\calP^{2j}(\calX):\\
\sum_{i\in\calA_t^j}\mathrm{GJS}(\Omega_i,\Psi_i,\alpha)\leq\lambda_{1,n}}}\sum_{i\in\calA_t^j}\Big(\alpha D(\Omega_i\|P_i)+D(\Psi_i\|Q_{\sigma_t^j(i)})\Big)\bigg)\label{ld:falarm:step2},
\end{align}
where \eqref{ld:falarm:step2} follows from similar steps leading to \eqref{ld:unknown:step3}.

Note that
\begin{align}
\nn&\min_{j\in[M_2]}\min_{t\in[T_j]}\min_{\substack{\{(\Omega_i,\Psi_i)\}_{i\in\calA_t^j}\in\calP^{2j}(\calX):\\
\sum_{i\in\calA_t^j}\mathrm{GJS}(\Omega_i,\Psi_i,\alpha)<\lambda_{1,n}}}\sum_{i\in\calA_t^j}\Big(\alpha D(\Omega_i\|P_i)+D(\Psi_i\|Q_{\sigma_t^j(i)})\Big)\\
&=\min_{t\in[T_1]}\min_{\substack{\{(\Omega_i,\Psi_i)\}_{i\in\calA_t^1}\in\calP^2(\calX):\\
\sum_{i\in\calA_t^1}\mathrm{GJS}(\Omega_i,\Psi_i,\alpha)<\lambda_{1,n}}}\sum_{i\in\calA_t^1}\Big(\alpha D(\Omega_i\|P_i)+D(\Psi_i\|Q_{\sigma_t^1(i)})\Big)\label{whyt=1}\\
&=\min_{\substack{(i,j)\in[M_1]\times[M_2]}}\min_{\substack{(\Omega,\Psi)\in\calP^2(\calX):\\\mathrm{GJS}(\Omega,\Psi,\alpha)\leq \lambda_{1,n}}} (D(\Omega\|P_i)+D(\Psi\|Q_j))\label{uset1}\\
&=f_0(\lambda_{1,n},P^{M_1},Q^{M_2})\label{usef:falarm},
\end{align}
where \eqref{whyt=1} follows since the KL divergence terms are non-negative and the objective function is smaller for a smaller value of $j$, \eqref{uset1} follows since $|\calA_t^K|=1$ for each $t\in[T_1]$, \eqref{usef:falarm} follows from the definition of $f_0(\cdot)$ in \eqref{def:flambda0}.

It follows from \eqref{ld:falarm:step1}, \eqref{ld:falarm:step2} and \eqref{usef:falarm} that
\begin{align}
\eta(\phi_{n,N}^\rmM|P^{M_1},Q^{M_2})
&\leq \sum_{j\in[M_2]}\sum_{t\in[T_j]}(N+1)^{j|\calX|}(n+1)^{j|\calX|}\exp(-nf_0(\lambda_{1,n},P^{M_1},Q^{M_2}))\\
&\leq T(N+1)^{M_2|\calX|}(n+1)^{M_2|\calX|}\exp(-nf_0(\lambda_{1,n},P^{M_1},Q^{M_2}))\label{usedefT},
\end{align}
where \eqref{usedefT} follows from the definition of $T=\sum_{j\in[M_2]}T_j$. 
Therefore, using the definition of $\lambda_{1,n}$ in \eqref{def:lambdain}, we have
\begin{align}
\liminf_{n\to\infty}-\frac{1}{n}\log\eta(\phi_{n,N}^\rmM|P^{M_1},Q^{M_2})
&\geq f_0(\lambda_1,P^{M_1},Q^{M_2}).
\end{align}

\subsection{Proof of Small Deviations (Theorem \ref{sr:unknown})}
\label{proof:sr:unknown}
For brevity and to minimize redundancy, we only sketch the steps in the proof. The bounds on mismatch and false alarm probabilities are given in \eqref{ld:unknown:betal:upp} and \eqref{usedefT}, respectively. The bounds on the false reject probability follows by combining \eqref{multivariateuse} and
\eqref{ld:unknown:freject:upp}.

\section{Conclusion}
\label{sec:conc}
We revisited the problem of statistical sequence matching and derived theoretical results under both large and small deviations regimes for the GLRT. When the number of matches is known, we completely characterized the tradeoff between the mismatch and false reject probabilities for Unnikrishnan's GLRT in \eqref{test:unn} under each hypothesis. We illustrated our results via numerical examples and compared the performance of Unnikrishnan's with a simple test that repeatedly uses Gutman's multiple classification test. When specialized to multiple classification, our large deviations result recovered those of Gutman~\cite{gutman1989asymptotically} and our small deviations result strengthened \cite[Theorem 4.1]{zhou2018binary}. Finally, we generalized our achievability analyses to the case when the number of matches is unknown and derived the tradeoff among the probabilities of mismatch, false reject and false alarm.

There are several avenues for future research. Firstly, we focused on discrete sequences in this paper so that the method of types~\cite{csiszar1998mt} applies. It would be worthwhile to generalize our results to continuous sequences that are generated from a probability density function. In this case, novel ideas such as the kernel method using maximum mean discrepancy~\cite{gretton2012jmlr} could be helpful. Secondly, we focused on exact match in this paper so that each matched pair of sequences should be identified. It would be of interest to generalize our results to account for partial match where only a subset of all matched pairs of sequences is identified. The ideas in partial recovery for group testing~\cite{Scarlett2016,Scarlett2017} might be helpful in this direction of studies. Thirdly, we focused on fixed-length tests in this paper, where the sequences are collected before a decision is made. In practical scenarios, it is possible that the sequence in each database is collected in a sequential manner. It is thus of interest to study the performance of a sequential test. For this purpose, the analyses in \cite{mahdi2021sequential,Ihwang2022sequential} for statistical classification could be generalized to sequential statistical sequence matching. Fourthly, we focused on theoretical benchmarks and used highly complicated tests of generalized likelihood ratio tests that checked all possibilities. It would be interesting to develop low complexity tests as discussed in \cite[Section V]{unnikrishnan2015asymptotically} for both discrete and continuous sequences. Some successful attempts have been made in~\cite{Unnikrishnan2013,Naini2016}. Finally, for the case where the number of matches is unknown, we only derived achievability results for the proposed two stage test that first estimates the number of matches and then applies Unnikrishnan's GLRT. However, the converse part remains unexplored. It is worthwhile to derive a converse result under a certain performance criterion to either prove or disapprove the optimality of our proposed two-phase test.

\appendix
\subsection{Proof of Lemma \ref{prop:Fl}}
\label{proof:prop:Fl}
Claim (i) follows directly from the definition of $F_l(\cdot)$. We next justify Claim (ii). It follows from the definition of $F_l(\cdot)$ in \eqref{def:Fl} that $F_l(P^{M_1},Q^{M_2},\alpha,\lambda,K)=0$ if and only if $(\Omega^{M_1},\Psi^{M_2})=(P^{M_1},Q^{M_2})$ is a feasible pair of distributions so that there exists $(t,j)\in[T_K]^2:t\neq s$ satisfying $\rmG_t^K(P^{M_1},Q^{M_2},\alpha)\leq \lambda$ and $\rmG_s^K(P^{M_1},Q^{M_2},\alpha)\leq \lambda$. Under hypothesis $\rmH_l^K$, we have $\rmG_l^K(P^{M_1},Q^{M_2},\alpha)=0$ and thus if and only if
\begin{align}
\lambda\geq \min_{t\in[T_K]:t\neq l}\rmG_t^K(P^{M_1},Q^{M_2},\alpha)=\Lambda_l(P^{M_1},Q^{M_2},K,\alpha),
\end{align}
we have $F_l(P^{M_1},Q^{M_2},\alpha,\lambda,K)=0$

Finally, we justify Claim (iii). Recall that for each hypothesis $l\in[T_K]$, $\calA_l^K$ specifies the indices of sequences of $\bX^N$ that have matched sequences in $\bY^n$, $\calB_l^K$ specifies the indices of sequences o $\bY^n$ that have matched sequences in $\bX^N$, $\sigma_l^K:\calA_l^K\to\calB_l^K$ specifies the unique pairs of matching while $\calM_l^K=\{(i,\sigma_l^K(i))\}_{i\in\calA_l^K}$. Given any $(t,s)\in[T_K]^2$ such that $t\neq s$, define the set 
\begin{align}
\calP_{t,s}^{M+N}(\calX)
&:=\Big\{(\Omega^{M_1},\Psi^{M_2})\in\calP^{M_1+M_2}(\calX):~\Psi_j=\Omega_i,~\forall~(i,j)\in(\calM_t^K\cup\calM_s^K)\Big\}.
\end{align}
It follows from the definition of $\rmG_l^K(\cdot)$ in \eqref{def:Gl} that $\rmG_t^K(\Omega^{M_1},\Psi^{M_2},\alpha)=0$ and $\rmG_s^K(\Omega^{M_1},\Psi^{M_2},\alpha)=0$ imply that $(\Omega^{M_1},\Psi^{M_2})\in\calP_{t,s}^{M+N}(\calX)$ and the definition of $E_l(\cdot)$ in \eqref{def:El} leads to
\begin{align}
\nn&E_l(P^{M_1},Q^{M_2},\Omega^{M_1},\Psi^{M_2},\alpha)\\*
&=\sum_{i\in[M_1]}\alpha D(\Omega_i\|P_i)+\sum_{j\in\calB_l^K}D(\Psi_j\|P_{(\sigma_l^K)^{-1}(j)})+\sum_{j\notin\calB_l^K}D(\Psi_j\|Q_j)\\
\nn&=\sum_{j\in\calB_l^K:j\in(\calB_t^K\cup\calB_s^K)}\Big(D(\Psi_j\|P_{(\sigma_l^K)^{-1}(j)})+\sum_{\substack{i\in[M_1]:\sigma_t^K(i)=j\\\mathrm{or}~\sigma_s^K(i)=j}}\alpha D(\Psi_j\|P_i)\Big)+\sum_{j\in\calB_l^K:j\notin(\calB_t^K\cup\calB_s^K)}D(\Psi_j\|P_{(\sigma_l^K)^{-1}(j)})\\*
\nn&\qquad+\sum_{j\in(\calB_l^K)^\rmc:j\in(\calB_t^K\cup\calB_s^K)}\Big(D(\Psi_j\|Q_j)+\sum_{\substack{i\in[M_1]:\sigma_t^K(i)=j\\\mathrm{or}~\sigma_s^K(i)=j}}\alpha D(\Psi_j\|P_i)\Big)+\sum_{j\in(\calB_l^K)^\rmc:j\notin(\calB_t^K\cup\calB_s^K)}D(\Psi_j\|Q_j)\\
&\qquad+\sum_{i\in[M_1]:i\notin(\calA_t^K\cup\calA_s)}\alpha D(\Omega_i\|P_i)\label{El:decompose}.
\end{align}
It follows from the definition of $F_l(\cdot)$ in \eqref{def:Fl} that
\begin{align}
F_l(P^{M_1},Q^{M_2},\alpha,0,K)
&=\min_{\substack{(t,s)\in[T_K]^2:\\t\neq s}}\min_{\substack{(\Omega^{M_1},\Psi^{M_2})\in\calP^{M_1+M_2}(\calX):\\\rmG_t^K(\Omega^{M_1},\Psi^{M_2},\alpha)=0\\\rmG_s^K(\Omega^{M_1},\Psi^{M_2},\alpha)=0}}E_l(P^{M_1},Q^{M_2},\Omega^{M_1},\Psi^{M_2},\alpha)\\
&=\min_{(t,s)\in[T_K]^2:t\neq s}\bigg(\sum_{j\in\calB_l^K:j\in(\calB_t^K\cup\calB_s^K)}f_j^{l,t,s}(P^{M_1})+\sum_{j\in(\calB_l^K)\rmc:j\in(\calB_t^K\cup\calB_s^K)}g_j^{l,t,s}(P^{M_1},Q^{M_2})\bigg)\label{usefgj}\\
&=\Upsilon_l(P^{M_1},Q^{M_2})\label{useUpsilonl},
\end{align}
where \eqref{usefgj} follows from simple algebra and the definitions of $\rmG_{\cdot}(\cdot)$ in \eqref{def:Gl}, $f_j(\cdot)$ in \eqref{def:fj} and $g_j$ in \eqref{def:gj} and \eqref{useUpsilonl} follows from the definition of $\Upsilon_l(\cdot)$ in \eqref{def:Upsilonl}.

\subsection{Justification of \eqref{simpleworse}}
\label{just:simpleworse}

It follows from the definition of $E_l(\cdot)$ in \eqref{def:El} that when $M_2=K$,
\begin{align}
E_l(P^{M_1},Q^{M_2},\Omega^{M_1},\Psi^{M_2},\alpha)
&=\sum_{i\in[M_1]}\alpha D(\Omega_i\|P_i)+\sum_{j\in[M_2]}D(\Psi_j\|P_{(\sigma_l^K)^{-1}(j)}).
\end{align}
Thus, using the definition of $F_l(\cdot)$ in \eqref{def:Fl}, we have
\begin{align}
F_l(P^{M_1},Q^{M_2},\alpha,\lambda,K)
&=\min_{\substack{(t,s)\in[T_K]^2:t\neq s}}\min_{\substack{(\Omega^M,\Psi^M)\in\calP^{2M}(\calX):\\\rmG_t^K(\Omega^{M_1},\Psi^{M_2},\alpha)\leq \lambda\\\rmG_s^K(\Omega^{M_1},\Psi^{M_2},\alpha)\leq \lambda}}E_l(P^{M_1},Q^{M_2},\Omega^{M_1},\Psi^{M_2},\alpha)\\
&=\min_{\substack{(t,s)\in[T_K]^2:t\neq s}}\min_{\substack{(\Omega^M,\Psi^M)\in\calP^{2M}(\calX):
\\\sum_{\barj\in[M_2]}\mathrm{GJS}(\Omega_{(\sigma_t^K)^{-1}(\barj)},\Psi_{\barj},\alpha)\leq \lambda
\\\sum_{\barj\in[M_2]}\mathrm{GJS}(\Omega_{(\sigma_s^K)^{-1}(\barj)},\Psi_{\barj},\alpha)\leq \lambda
}}\Big(\sum_{i\in[M_1]}\alpha D(\Omega_i\|P_i)+\sum_{j\in[M_2]}D(\Psi_j\|P_{(\sigma_l^K)^{-1}(j)})\Big)\\
&\geq \min_{\substack{(t,s)\in[T_K]^2:t\neq s}}\min_{\substack{(\Omega^M,\Psi^M)\in\calP^{2M}(\calX):
\\\sum_{\barj\in[M_2]}\mathrm{GJS}(\Omega_{(\sigma_t^K)^{-1}(\barj)},\Psi_{\barj},\alpha)\leq \lambda
\\\sum_{\barj\in[M_2]}\mathrm{GJS}(\Omega_{(\sigma_s^K)^{-1}(\barj)},\Psi_{\barj},\alpha)\leq \lambda
}}\max_{j\in[M_2]}\Big(\sum_{i\in[M_1]}\alpha D(\Omega_i\|P_i)+D(\Psi_j\|P_{(\sigma_l^K)^{-1}(j)})\Big)\label{lb:unconstrained0}\\
&\geq \min_{\substack{(t,s)\in[T_K]^2:t\neq s}}\max_{\substack{j\in[M_2]:\\(\sigma_t^K)^{-1}(j)\neq (\sigma_s^K)^{-1}(j)}}
\min_{\substack{(\Omega^M,\Psi^M)\in\calP^{2M}(\calX):
\\\mathrm{GJS}(\Omega_{(\sigma_t^K)^{-1}(j)},\Psi_j,\alpha)\leq \lambda
\\\mathrm{GJS}(\Omega_{(\sigma_s^K)^{-1}(j)},\Psi_j,\alpha)\leq \lambda
}}
\Big(\sum_{i\in[M_1]}\alpha D(\Omega_i\|P_i)+D(\Psi_j\|P_{(\sigma_l^K)^{-1}(j)})\Big)\label{lb:unconstrained1}\\
&\geq \min_{\substack{(t,s)\in[T_K]^2:t\neq s}}\min_{\substack{j\in[M_2]}}
\min_{\substack{(\Omega^M,\Psi_j)\in\calP^{M+1}(\calX):
\\(\sigma_t^K)^{-1}(j)\neq (\sigma_s^K)^{-1}(j)
\\\mathrm{GJS}(\Omega_{(\sigma_t^K)^{-1}(j)},\Psi_j,\alpha)\leq \lambda
\\\mathrm{GJS}(\Omega_{(\sigma_s^K)^{-1}(j)},\Psi_j,\alpha)\leq \lambda
}}
\Big(\sum_{i\in[M_1]}\alpha D(\Omega_i\|P_i)+D(\Psi_j\|P_{(\sigma_l^K)^{-1}(j)})\Big)\label{lb:unconstrained2}\\
&=\min_{\substack{j\in[M_2]}}
\min_{\substack{(t,s)\in[M_1]^2:t\neq s}}
\min_{\substack{(\Omega^M,\Psi_j)\in\calP^{M+1}(\calX):
\\\mathrm{GJS}(\Omega_t,\Psi_j,\alpha)\leq \lambda
\\\mathrm{GJS}(\Omega_s,\Psi_j,\alpha)\leq \lambda
}}
\Big(\sum_{i\in[M_1]}\alpha D(\Omega_i\|P_i)+D(\Psi_j\|P_{(\sigma_l^K)^{-1}(j)})\Big)\label{lb:unconstrained3}\\
&=\min_{j\in[M_2]}F_l(P^{M_1},P_{(\sigma_l^K)^{-1}(j)},\alpha,\lambda,K)\\
&=\min_{j\in[M_2]}F_l(P^{M_1},Q_j,\alpha,\lambda,K),
\end{align}
where \eqref{lb:unconstrained0} follows by lower bounding the second sum in the objective function with the maximal divergence term $D(\Psi_j\|P_{(\sigma_l^K)^{-1}(j)})$, \eqref{lb:unconstrained1} follows since achieved value of a minimax optimization problem is lower bounded by the achieved value of a corresponding maximin optimization problem, the optimization range $j\in[M_2]$ is squeezed by adding a constraint $(\sigma_t^K)^{-1}(j)\neq (\sigma_s^K)^{-1}(j)$, which is valid since any two different hypotheses map at least one $y_j^n$ sequences into difference sequences in $\bx^N$, \eqref{lb:unconstrained2} follows by replacing the maximization over $j$ with minimization, and \eqref{lb:unconstrained3} holds since $(\sigma_t^K)^{-1}\in[M_1]$, $(\sigma_s^K)^{-1}\in[M_1]$ and $(\sigma_t^K)^{-1}(j)\neq (\sigma_s^K)^{-1}(j)$ is equivalent to the constraint that $(t,s)\in[M_1]^2:t\neq s$, where we reuse the notation $(t,s)$ for different meanings. 

\subsection{Justification of \eqref{useVt1t2}}
\label{just:usevt1t2}
It follows that
\begin{align}
\nn&\sum_{s\in[N+n]}\bbE\bigg[\Big(\sum_{(i,j)\in\calM_{t_1}^K}Z_s^{i,j}\Big)\Big(\sum_{(\bari,\barj)\in\calM_{t_2}^K}Z_s^{i,j}\Big)\bigg]\\*
&=\sum_{s\in[N]}\bbE\bigg[\Big(\sum_{(i,j)\in\calM_{t_1}^K}Z_s^{i,j}\Big)\Big(\sum_{(\bari,\barj)\in\calM_{t_2}^K}Z_s^{\bari,\barj}\Big)\bigg]+\sum_{s\in[N+1:N+n]}\bbE\bigg[\Big(\sum_{(i,j)\in\calM_{t_1}^K}Z_s^{i,j}\Big)\Big(\sum_{(\bari,\barj)\in\calM_{t_2}^K}Z_s^{\bari,\barj}\Big)\bigg]\\
\nn&=N\bbE\bigg[\Big(\sum_{(i,j)\in\calM_{t_1}^K}\imath_1(X_i|P_i,Q_j,\alpha)\Big)\Big(\sum_{(\bari,\barj)\in\calM_{t_2}^K}\imath_1(X_{\bari}|P_{\bari},Q_{\barj},\alpha)\Big)\bigg]\\*
&\qquad+n\bbE\bigg[\Big(\sum_{(i,j)\in\calM_{t_1}^K}\imath_2(Y_j|P_i,Q_j,\alpha)\Big)\Big(\sum_{(\bari,\barj)\in\calM_{t_2}^K}\imath_2(Y_{\barj}|P_{\bari},Q_{\barj},\alpha)\Big)\bigg]\label{verify:step1},
\end{align}
and
\begin{align}
\nn&\sum_{s\in[N+n]}\Big(\sum_{(i,j)\in\calM_{t_1}^K}\bbE[Z_s^{i,j}]\Big)\Big(\sum_{(\bari,\barj)\in\calM_{t_2}^K}\bbE[Z_s^{\bari,\barj}]\Big)\\*
\nn&=\sum_{s\in[N]}\Big(\sum_{(i,j)\in\calM_{t_1}^K}\bbE_{P_i}[\imath_1(X|P_i,Q_j,\alpha)]\Big)\Big(\sum_{(\bari,\barj)\in\calM_{t_2}^K}\bbE_{P_{\bari}}[\imath_1(X|P_{\bari},Q_{\barj},\alpha)]\Big)\\*
&\qquad+\sum_{s\in[n]}\Big(\sum_{(i,j)\in\calM_{t_1}^K}\bbE_{Q_j}[\imath_2(Y|P_i,Q_j,\alpha)]\Big)\Big(\sum_{(\bari,\barj)\in\calM_{t_2}^K}\bbE_{Q_{\barj}}[\imath_2(Y|P_{\bari},Q_{\barj},\alpha)]\Big)\label{verify:step2}.
\end{align}

The first expectation term in \eqref{verify:step1} satisfies
\begin{align}
\nn&\bbE\bigg[\Big(\sum_{(i,j)\in\calM_{t_1}^K}\imath_1(X_i|P_i,Q_j,\alpha)\Big)\Big(\sum_{(\bari,\barj)\in\calM_{t_2}^K}\imath_1(X_{\bari}|P_{\bari},Q_{\barj},\alpha)\Big)\bigg]\\*
\nn&=\sum_{(i,j)\in\calM_{t_1}^K}\sum_{(\bari,\barj)\in\calM_{t_2}^K:\bari=i}\bbE\big[\imath_1(X_i|P_i,Q_j,\alpha)\imath_1(X_{\bari}|P_{\bari},Q_{\barj},\alpha)\big]\\*
&\qquad+\sum_{(i,j)\in\calM_{t_1}^K}\sum_{(\bari,\barj)\in\calM_{t_2}^K:\bari\neq i}\bbE[\imath_1(X_i|P_i,Q_j,\alpha)]\bbE[\imath_1(X_{\bari}|P_{\bari},Q_{\barj},\alpha)]\label{verify:suc1}.
\end{align}
Similarly, the second expectation term in \eqref{verify:step1} satisfies
\begin{align}
\nn&\bbE\bigg[\Big(\sum_{(i,j)\in\calM_{t_1}^K}\imath_2(Y_j|P_i,Q_j,\alpha)\Big)\Big(\sum_{(\bari,\barj)\in\calM_{t_2}^K}\imath_2(Y_{\barj}|P_{\bari},Q_{\barj},\alpha)\Big)\bigg]\\
\nn&=\sum_{(i,j)\in\calM_{t_1}^K}\sum_{(\bari,\barj)\in\calM_{t_2}^K:\barj=j}\bbE\big[\imath_2(Y_j|P_i,Q_j,\alpha)\imath_2(Y_{\bari}|P_{\bari},Q_{\barj},\alpha)\big]\\*
&\qquad+\sum_{(i,j)\in\calM_{t_1}^K}\sum_{(\bari,\barj)\in\calM_{t_2}^K:\barj\neq j}\bbE[\imath_1(Y_j|P_i,Q_j,\alpha)]\bbE[\imath_1(Y_{\barj}|P_{\bari},Q_{\barj},\alpha)]\label{verify:suc2}.
\end{align}
Analogously, the two expectation terms in \eqref{verify:step2} satisfy
\begin{align}
\nn&\Big(\sum_{(i,j)\in\calM_{t_1}^K}\bbE[\imath_1(X_i|P_i,Q_j,\alpha)]\Big)\Big(\sum_{(\bari,\barj)\in\calM_{t_2}^K}\bbE[\imath_1(X_{\bari}|P_{\bari},Q_{\barj},\alpha)]\Big)\\*
&=\sum_{(i,j)\in\calM_{t_1}^K}\sum_{(\bari,\barj)\in\calM_{t_2}^K}\bbE[\imath_1(X_i|P_i,Q_j,\alpha)]\bbE[\imath_1(X_{\bari}|P_{\bari},Q_{\barj},\alpha)],\label{verify:suc3}\\
\nn&\Big(\sum_{(i,j)\in\calM_{t_1}^K}\bbE[\imath_2(Y_j|P_i,Q_j,\alpha)]\Big)\Big(\sum_{(\bari,\barj)\in\calM_{t_2}^K}\bbE[\imath_2(Y_{\bari}|P_{\bari},Q_{\barj},\alpha)]\Big)\\*
&=\sum_{(i,j)\in\calM_{t_1}^K}\sum_{(\bari,\barj)\in\calM_{t_2}^K}\bbE[\imath_2(Y_j|P_i,Q_j,\alpha)]\bbE[\imath_2(Y_{\barj}|P_{\bari},Q_{\barj},\alpha)]\label{verify:suc4}.
\end{align}
The justification of \eqref{useVt1t2} is completed by combining \eqref{verify:suc1} to \eqref{verify:suc4}.

\subsection{Proof of Lemma \ref{prop:flambda1}}
\label{proof:prop:flambda1}

Claim (i) follows from the definition of the exponent function $f_{l,K}(\lambda_1,P^{M_1},Q^{M_2})$. We first prove Claim (ii). Note that $f_{l,K}(\lambda_1,P^{M_1},Q^{M_2})=0$ if $(\Omega,\Psi)=(P_i,Q_j)$ satisfies $\mathrm{GJS}(\Omega,\Psi,\alpha)\leq \lambda_1$ for some $(i,j)\in[M_1]\times[M_2]$ such that $i\notin\calA_l^K$ and $j\notin\calB_l^K$. Therefore, if and only if 
\begin{align}
\lambda_1
&\geq \min_{\substack{(i,j)\in[M_1]\times[M_2]:\\i\notin\calA_l^K,j\notin\calB_l^K}}\mathrm{GJS}(P_i,Q_j,\alpha)\\
&=\rmG_{\mathrm{min}}^{l,K}(P^{M_1},Q^{M_2},\alpha),
\end{align}
the exponent function $f_{l,K}(\lambda_1,P^{M_1},Q^{M_2})$ equals zero. Finally, it follows that
\begin{align}
f_{l,K}(\lambda_1,P^{M_1},Q^{M_2})
&\leq f_{l,K}(0,P^{M_1},Q^{M_2})\\
&=\min_{\substack{(i,j)\in[M_1]\times[M_2]:\\i\notin\calA_l^K,j\notin\calB_l^K}}\min_{\Psi\in\calP(\calX)}\Big(\alpha D(\Omega\|P_i)+D(\Omega\|Q_j)\Big)\\
&=\min_{\substack{(i,j)\in[M_1]\times[M_2]:\\i\notin\calA_l^K,j\notin\calB_l^K}}D_{\frac{\alpha}{1+\alpha}}(P_i\|Q_j),\label{usedef:renyi}
\end{align}
where \eqref{usedef:renyi} follows from the definition of the R\'enyi divergence in \eqref{def:renyi} and the KKT conditions~\cite[Chap. 5]{boyd2004convex} for the convex optimization problem, which implies that
\begin{align}
\min_{\Psi\in\calP(\calX)}\Big(\alpha D(\Omega\|P_i)+D(\Omega\|Q_j)\Big)=D_{\frac{\alpha}{1+\alpha}}(P_i\|Q_j),
\end{align}
and the detailed derivations are available in cf. \cite[Eq. (13)-(16)]{tuncel2005error}.

\subsection{Proof of Lemma \ref{converse:gnp}}
\label{proof:con:gnp}
We first relate the error probabilities of any test with a type-based test. Fix any vector $\bkappa=(\kappa_1,\ldots,\kappa_{T_K})\in[0,1]^{T_K}$ such that $\sum_{i\in[T_K]}\kappa_i\leq 1$, and let
\begin{align}
\kappa_-:=\min_{i\in[T_K]}\kappa_i,~\kappa_+:=\sum_{i\in[T_K]}\kappa_i.
\end{align}

\begin{lemma}
\label{type:optimal}
Given any test $\phi_{n,N}$,  we can construct a test $\phi_{n,N}^\rmT$  that uses types $(\bT_{\bX^N},\bT_{\bY^n})$ such that under each $l\in[T_K]$, under any tuples of generating distributions $(P^{M_1},Q^{M_2})\in\calP_l^K$,
\begin{align}
\beta(\phi_{n,N}|P^{M_1},Q^{M_2})&\geq \kappa_-\beta(\phi_{n,N}^\rmT|P^{M_1},Q^{M_2}),\\
\zeta(\phi_{n,N}|P^{M_1},Q^{M_2})&\geq (1-\kappa_+)\beta(\phi_{n,N}^\rmT|P^{M_1},Q^{M_2}).
\end{align}
\end{lemma}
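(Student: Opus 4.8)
The plan is to construct the type-based test $\phi_{n,N}^\rmT$ by a randomization-over-permutations argument in the same spirit as Gutman~\cite{gutman1989asymptotically}. First I would observe that any two sequence pairs $(\bx^N,\by^n)$ and $(\tilbx^N,\tilby^n)$ that share the same collection of types, i.e. $\hatT_{x_i^N}=\hatT_{\tilx_i^N}$ for all $i\in[M_1]$ and $\hatT_{y_j^n}=\hatT_{\tily_j^n}$ for all $j\in[M_2]$, have exactly the same probability under \emph{any} product distribution $\prod_i P_i^N \prod_j Q_j^n$. Hence a given type profile $(\bV,\bW)$ determines a ``type class'' $\calT_{\bV}^N\times\calT_{\bW}^n\subseteq\calX^{M_1N}\times\calX^{M_2 n}$ on which every generating distribution is uniform. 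The test $\phi_{n,N}^\rmT$ is then defined on each type class by taking a symmetrized version of $\phi_{n,N}$: for each type profile, record the fraction $p_l(\bV,\bW)$ of sequence pairs in the corresponding type class on which $\phi_{n,N}$ outputs $\rmH_l^K$, and let $\phi_{n,N}^\rmT$ output the hypothesis $\rmH_l^K$ achieving $\max_l p_l(\bV,\bW)$ provided this maximum exceeds some decision margin, and output $\rmH_\rmr$ otherwise. More precisely, I would use the vector $\bkappa$ as the decision rule: $\phi_{n,N}^\rmT$ declares $\rmH_l^K$ iff $p_l(\bV,\bW)\ge \kappa_l$ for exactly one $l$ (ties broken arbitrarily), and declares $\rmH_\rmr$ otherwise.

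Next I would verify the two inequalities. Fix $l\in[T_K]$ and $(P^{M_1},Q^{M_2})\in\calP_l^K$. Because every distribution is uniform on each type class, $\beta(\phi_{n,N}|P^{M_1},Q^{M_2})$ equals the sum over type profiles of (probability of the type class) times (fraction of pairs in that class where $\phi_{n,N}$ outputs something in $\{\rmH_t^K\}_{t\neq l}$), which is $\sum_{t\neq l}p_t(\bV,\bW)$. Similarly $\zeta(\phi_{n,N}|P^{M_1},Q^{M_2})$ is the type-class-weighted average of $p_\rmr(\bV,\bW)=1-\sum_{t}p_t(\bV,\bW)$. Now, on any type profile where $\phi_{n,N}^\rmT$ makes a mismatch error (outputs some $\rmH_t^K$ with $t\neq l$), we must have $p_t(\bV,\bW)\ge\kappa_t\ge\kappa_-$, so $\sum_{s\neq l}p_s(\bV,\bW)\ge\kappa_-$; this gives $\beta(\phi_{n,N}|P^{M_1},Q^{M_2})\ge\kappa_-\beta(\phi_{n,N}^\rmT|P^{M_1},Q^{M_2})$ by comparing type-class by type-class. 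For the false reject bound: on any type profile where $\phi_{n,N}^\rmT$ makes a mismatch error (which is when $\beta(\phi_{n,N}^\rmT|\cdot)$ charges), there is $t\neq l$ with $p_t(\bV,\bW)\ge\kappa_t$; hence $\sum_{s}p_s(\bV,\bW)$ could still be as large as $\kappa_+$, so $p_\rmr(\bV,\bW)=1-\sum_s p_s(\bV,\bW)\ge 1-\kappa_+$, giving $\zeta(\phi_{n,N}|P^{M_1},Q^{M_2})\ge(1-\kappa_+)\beta(\phi_{n,N}^\rmT|P^{M_1},Q^{M_2})$ again by a type-class-wise comparison. Summing these pointwise bounds against the type-class probabilities yields the claimed inequalities.

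The main obstacle I anticipate is bookkeeping around the decision margin $\bkappa$ and ensuring the two bounds hold \emph{simultaneously} with the \emph{same} constructed test $\phi_{n,N}^\rmT$; in particular one must be careful that the event ``$\phi_{n,N}^\rmT$ outputs a wrong non-reject hypothesis'' is the right common quantity appearing on the right-hand side of both inequalities, and that no double counting occurs when several $p_t$ exceed their thresholds (the tie-breaking rule in the definition of $\phi_{n,N}^\rmT$ must be fixed in advance, and the bounds should be stated for whatever $\phi_{n,N}^\rmT$ does under that rule). A secondary technical point is that $\phi_{n,N}^\rmT$ as described is deterministic given the types, which is exactly what is needed downstream (it will be combined with the argument that a type-based test must reject whenever $h_K(\bX^N,\bY^n)\le\tlambda_{n,N}$, else the universal constraint \eqref{nonasymp:constraint} is violated). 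Once Lemma~\ref{type:optimal} is in hand, the proof of Lemma~\ref{converse:gnp} proceeds by choosing $\bkappa$ appropriately (e.g. all entries equal so $\kappa_-=\kappa_1$ and $\kappa_+=T_K\kappa_1$ with $\kappa_1$ of order $1/n$) so that the type-based test inherits the exponential mismatch constraint, then showing the type-based test is forced to reject on the event $\{h_K\le\tlambda_{n,N}\}$, and finally converting back via the displayed inequalities to obtain the lower bound $(1-T_K/n)\Pr\{h_K(\bX^N,\bY^n)\le\tlambda_{n,N}\}$ on $\zeta(\phi_{n,N}|P^{M_1},Q^{M_2})$.
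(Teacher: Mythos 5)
Your construction of $\phi_{n,N}^\rmT$ is essentially the paper's: symmetrize $\phi_{n,N}$ over each type class, record the fractions $p_t(\bV,\bW)$ of the class assigned to each $\rmH_t^K$, declare $\rmH_t^K$ when $p_t\geq\kappa_t$ (with a fixed tie-break), and reject otherwise. Your argument for the first inequality is correct and is the paper's argument: on any type class where $\phi_{n,N}^\rmT$ outputs a wrong $\rmH_t^K$ we have $\sum_{s\neq l}p_s\geq p_t\geq\kappa_t\geq\kappa_-$, and summing against the type-class probabilities (each generating distribution being constant on a type class) gives $\beta(\phi_{n,N}|P^{M_1},Q^{M_2})\geq\kappa_-\,\beta(\phi_{n,N}^\rmT|P^{M_1},Q^{M_2})$.

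Your proof of the second inequality has a genuine gap. You argue that on a type class where $\phi_{n,N}^\rmT$ makes a mismatch error, ``$\sum_s p_s$ could still be as large as $\kappa_+$, so $p_\rmr\geq 1-\kappa_+$.'' Knowing that some $p_t\geq\kappa_t$ gives no upper bound on $\sum_s p_s$: take a type class entirely contained in $\calF_t(\phi_{n,N})$, so $p_t=1$ and $p_\rmr=0$, yet $\phi_{n,N}^\rmT$ outputs $\rmH_t^K$ there. Concentrating probability on such a class shows that the displayed inequality $\zeta(\phi_{n,N})\geq(1-\kappa_+)\beta(\phi_{n,N}^\rmT)$ is actually false as printed; the right-hand side is a typo for $(1-\kappa_+)\,\zeta(\phi_{n,N}^\rmT|P^{M_1},Q^{M_2})$, which is what the paper proves and what is used in the proof of Lemma \ref{converse:gnp} (there one needs $\zeta(\phi_{n,N})$ to dominate the false reject probability of the type-based test, since Lemma \ref{type:converse} lower bounds the latter by $\Pr\{h_K(\bX^N,\bY^n)\leq\lambda-\delta_{n,N}\}$). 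The correct pairing is reject-with-reject: on any type class where $\phi_{n,N}^\rmT$ outputs $\rmH_\rmr$, \emph{every} $p_i$ is below $\kappa_i$, hence $\sum_i p_i<\kappa_+$ and $p_\rmr=1-\sum_i p_i>1-\kappa_+$, so summing over the reject region of $\phi_{n,N}^\rmT$ gives $\zeta(\phi_{n,N})\geq(1-\kappa_+)\zeta(\phi_{n,N}^\rmT)$. For this to go through your decision rule must reject \emph{only} when no threshold is met; if, as your ``exactly one $l$'' phrasing literally suggests, the test also rejects when two or more thresholds are exceeded, the bound fails (e.g.\ $p_1=p_2=1/2$ gives $p_\rmr=0$), so ties must be resolved to a non-reject decision as in the paper.
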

\begin{proof}
Note that any test $\phi_{n,N}$ partitions the sample space $\calX^{M_1N}\times\calY^{M_2n}$ into disjoint acceptance regions $\calF_1(\phi_{n,N}),\ldots,\calF_{T_K}(\phi_{n,N})$ and a reject region $\calF_\rmr(\phi_{n,N})$ so that for each $l\in[T_K]$, if $(\bx^N,\by^n)\in\calF_l(\phi_{n,N})$, the test decides on $\rmH_l^K$ and if $(\bx^N,\by^n)\in\calF_\rmr(\phi_{n,N})$, the test $\phi_{n,N}$ outputs the decision $\rmH_\rmr$.

Given any tuple of types $(\Omega^{M_1},\Psi^{M_2})\in(\calP^N(\calX))^{M_1}\times(\calP^n(\calX))^{M_2}$, we use $\calT_{\Omega^{M_1},\Psi^{M_2}}$ to denote the set of sequences $(\bx^N,\by^n)$ such that for each $i\in[M]$, $\hatT_{x_i^N}=\Omega_i$ and for each $j\in[N]$, $\hatT_{y_i^n}=\Psi_j$. We construct a type-based test $\phi_{n,N}^\rmT$ as follows:
\begin{align}
\phi_{n,N}^\rmT(\Omega^{M_1},\Psi^{M_2})
&=
\left\{
\begin{array}{ll}
\rmH_l^K&\mathrm{if~}\frac{|\calF_l\cap\calT_{\Omega^{M_1},\Psi^{M_2}}|}{|\calT_{\Omega^{M_1},\Psi^{M_2}}|}\ge \kappa_l\mathrm{~and~}\max_{i\in[l-1]}\bigg(\frac{|\calF_i\cap\calT_{\Omega^{M_1},\Psi^{M_2}}|}{|\calT_{\Omega^{M_1},\Psi^{M_2}}|}-\kappa_i\bigg)\leq 0,\\
\rmH_\rmr&\mathrm{if~}\max_{i\in[T_K]}\bigg(\frac{|\calF_i\cap\calT_{\Omega^{M_1},\Psi^{M_2}}|}{|\calT_{\Omega^{M_1},\Psi^{M_2}}|}-\kappa_i\bigg)\leq 0.
\end{array}
\right.
\end{align}
For simplicity, we use $\calF_l(\phi_{n,N}^\rmT)$ to denote the set of types $(\Omega^{M_1},\Psi^{M_2})\in(\calP^N(\calX))^{M_1}\times(\calP^n(\calX))^{M_2}$ such that $\phi_{n,N}^\rmT(\Omega^{M_1},\Psi^{M_2})=\rmH_l^K$ for each $l\in[T_K]$ and use $\calF_\rmr(\phi_{n,N}^\rmT)$ similarly. Thus, for each $l\in[T_K]$, the mismatch probability satisfies
\begin{align}
\beta(\phi_{n,N}|P^{M_1},Q^{M_2})
&=\sum_{t\in[T_K]:t\neq l}\Pr\{(\bX^N,\bY^n)\in\calF_t(\phi_{n,N})\}\\
&=\sum_{t\in[T_K]:t\neq l}\sum_{\substack{(\Omega^{M_1},\Psi^{M_2})\in(\calP^N(\calX))^{M_1}\times(\calP^n(\calX))^{M_2}}}\Pr\Big\{(\bX^N,\bY^n)\in(\calF_t(\phi_{n,N})\cap\calT_{\Omega^{M_1},\Psi^{M_2}})\Big\}\\
&\geq \sum_{t\in[T_K]:t\neq l}\sum_{\substack{(\Omega^{M_1},\Psi^{M_2})\in\calF_l(\phi_{n,N}^\rmT)}}\Pr\Big\{(\bX^N,\bY^n)\in(\calF_t(\phi_{n,N})\cap\calT_{\Omega^{M_1},\Psi^{M_2}})\Big\}\\
&\ge \sum_{t\in[T_K]:t\neq l}\sum_{\substack{(\Omega^{M_1},\Psi^{M_2})\in\calF_l(\phi_{n,N}^\rmT)}}\kappa_- \Pr\Big\{(\bX^N,\bY^n)\in\calT_{\Omega^{M_1},\Psi^{M_2}}\Big\}\label{typeequal}\\
&=\kappa_-\Pr\Bigg\{(\bT_{\bX^n},\bT_{Y^n})\in\bigcup_{t\in[T_K]:t\neq l}\calF_l(\phi_{n,N}^\rmT)\Bigg\}\\
&\ge \kappa_-\beta(\psi_{n,N}^\rmT|P^M,Q^N)\label{deferror4typet},
\end{align}
where \eqref{typeequal} follows since each tuples of sequences $(\bx^N,\by^n)\in\calT_{\Omega^{M_1}\Psi^{M_2}}$ have the same probability and thus 
\begin{align}
\Pr\Big\{(\bX^N,\bY^n)\in(\calF_t(\phi_{n,N})\cap\calT_{\Omega^{M_1},\Psi^{M_2}})\Big\}
&\geq \frac{|\calF_t(\phi_{n,N})\cap\calT_{\Omega^{M_1},\Psi^{M_2}})|}{|\calT_{\Omega^{M_1},\Psi^{M_2}})|}\Pr\Big\{(\bX^N,\bY^n)\in\calT_{\Omega^{M_1},\Psi^{M_2}}\Big\}\\
&\geq \kappa_-\Pr\Big\{(\bX^N,\bY^n)\in\calT_{\Omega^{M_1},\Psi^{M_2}}\Big\},
\end{align}
and \eqref{deferror4typet} follows from the definition of the mismatch probability for the type-based test $\phi_{n,N}^\rmT$.

Analogously, for each $l\in[T_K]$, the false reject probability satisfies
\begin{align}
\zeta(\phi_{n,N}|P^{M_1},Q^{M_2})
&=\Pr\{(\bX^N,\bY^n)\in\calF_\rmr(\phi_{n,N})\}\\
&=\sum_{\substack{(\Omega^{M_1},\Psi^{M_2})\in(\calP^N(\calX))^{M_1}\times(\calP^n(\calX))^{M_2}}}\Pr\Big\{(\bX^N,\bY^n)\in(\calF_\rmr(\phi_{n,N})\cap\calT_{\Omega^{M_1},\Psi^{M_2}})\Big\}\\
&\geq \sum_{\substack{(\Omega^{M_1},\Psi^{M_2})\in\calF_\rmr(\phi_{n,N}^\rmT)}}\Pr\Big\{(\bX^N,\bY^n)\in(\calF_t(\phi_{n,N})\cap\calT_{\Omega^{M_1},\Psi^{M_2}})\Big\}\\
&\ge \sum_{\substack{(\Omega^{M_1},\Psi^{M_2})\in\calF_\rmr(\phi_{n,N}^\rmT)}}(1-\kappa_+) \Pr\Big\{(\bX^N,\bY^n)\in\calT_{\Omega^{M_1},\Psi^{M_2}}\Big\}\label{typeequal2}\\
&=(1-\kappa_+) \Pr\Big\{(\bT_{\bX^n},\bT_{Y^n})\in\calF_\rmr(\phi_{n,N}^\rmT)\Big\}\\
&\ge \kappa_-\beta(\psi_{n,N}^\rmT|P^M,Q^N),
\end{align}
where \eqref{typeequal2} follows since when $(\Omega^{M_1},\Psi^{M_2})\in\calF_\rmr(\phi_{n,N}^\rmT)$,
\begin{align}
\frac{|\calF_i(\phi_{n,N})\cap\calT_{\Omega^{M_1},\Psi^{M_2}})|}{|\calT_{\Omega^{M_1},\Psi^{M_2}})|}\leq \kappa_i,
\end{align}
for each $i\in[T_K]$ and thus
\begin{align}
\Pr\Big\{(\bX^N,\bY^n)\in(\calF_\rmr(\phi_{n,N})\cap\calT_{\Omega^{M_1},\Psi^{M_2}})\Big\}
&\geq \frac{|\calF_\rmr(\phi_{n,N})\cap\calT_{\Omega^{M_1},\Psi^{M_2}})|}{|\calT_{\Omega^{M_1},\Psi^{M_2}})|}\Pr\Big\{(\bX^N,\bY^n)\in\calT_{\Omega^{M_1},\Psi^{M_2}}\Big\}\\
&\geq \Bigg(1-\sum_{i\in[T_K]}\frac{|\calF_i(\phi_{n,N})\cap\calT_{\Omega^{M_1},\Psi^{M_2}})|}{|\calT_{\Omega^{M_1},\Psi^{M_2}})|}\Bigg)\Pr\Big\{(\bX^N,\bY^n)\in\calT_{\Omega^{M_1},\Psi^{M_2}}\Big\}\\
&\geq (1-\sum_{i\in[T_K]}\kappa_i)\Pr\Big\{(\bX^N,\bY^n)\in\calT_{\Omega^{M_1},\Psi^{M_2}}\Big\}\\
&=(1-\kappa_+)\Pr\Big\{(\bX^N,\bY^n)\in\calT_{\Omega^{M_1},\Psi^{M_2}}\Big\}.
\end{align}
\end{proof}

Recall the definition of $\delta_{n,N}$ in \eqref{def:deltann}. We next prove a lower bound on the false reject probability for a type-based test.
\begin{lemma}
\label{type:converse}
Consider any type-based test $\phi_{n,N}^\rmT$ such that
\begin{align}
\max_{l\in[T_K]}\sup_{\tilP^{M_1},\tilQ^{M_2})\in\calP_l^K}\beta(\phi_{n,N}^\rmT|\tilP^{M_1},\tilQ^{M_2})\leq \exp(-n\lambda).\label{typeerror:constraint}
\end{align}
For each $l\in[T_K]$, under any pair of unknown generating distributions $(P^M,Q^N)\in\calP_l^K$, the false reject probability of the test under hypothesis $\rmH_l^K$ satisfies
\begin{align}
\zeta(\phi_{n,N}|P^{M_1},Q^{M_2})
&\geq \Pr\{h_K(\bX^N,\bY^n)\leq \lambda-\delta_{n,N}\}.
\end{align}
\end{lemma}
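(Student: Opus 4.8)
The plan is to deduce the bound from a single implication: for any type-based test $\phi_{n,N}^\rmT$ obeying the mismatch constraint \eqref{typeerror:constraint}, every empirical-distribution tuple $(\Omega^{M_1},\Psi^{M_2})\in(\calP^N(\calX))^{M_1}\times(\calP^n(\calX))^{M_2}$ with $h_K(\Omega^{M_1},\Psi^{M_2})\leq\lambda-\delta_{n,N}$ is mapped to $\rmH_\rmr$. Granting this, since $h_K$ is a function of the types alone, the reject region $\calF_\rmr(\phi_{n,N}^\rmT)$ contains the set $\{(\Omega^{M_1},\Psi^{M_2}):h_K(\Omega^{M_1},\Psi^{M_2})\leq\lambda-\delta_{n,N}\}$, and hence, under any $(P^{M_1},Q^{M_2})\in\calP_l^K$,
\begin{align}
\zeta(\phi_{n,N}^\rmT|P^{M_1},Q^{M_2})=\Pr\{(\hatT_{\bX^N},\hatT_{\bY^n})\in\calF_\rmr(\phi_{n,N}^\rmT)\}\geq\Pr\{h_K(\bX^N,\bY^n)\leq\lambda-\delta_{n,N}\}.
\end{align}
Following Gutman~\cite{gutman1989asymptotically}, I would prove the implication by contradiction.

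So suppose $\phi_{n,N}^\rmT$ returns a non-null hypothesis $\rmH_w^K$ on some $(\Omega^{M_1},\Psi^{M_2})$ with $h_K(\Omega^{M_1},\Psi^{M_2})\leq\lambda-\delta_{n,N}$. Because $h_K(\Omega^{M_1},\Psi^{M_2})$ is the second-smallest of the $T_K$ scores $\rmG_t^K(\Omega^{M_1},\Psi^{M_2},\alpha)$, at least two indices attain value $\leq\lambda-\delta_{n,N}$, and since $w$ is at most one of them I may fix $s\in[T_K]$ with $s\neq w$ and $\rmG_s^K(\Omega^{M_1},\Psi^{M_2},\alpha)\leq\lambda-\delta_{n,N}$. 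Then the whole type class $\calT_{\Omega^{M_1},\Psi^{M_2}}$ is a mismatch event under hypothesis $\rmH_s^K$, so $\beta(\phi_{n,N}^\rmT|\tilP^{M_1},\tilQ^{M_2})\geq\Pr\{(\bX^N,\bY^n)\in\calT_{\Omega^{M_1},\Psi^{M_2}}\}$ for every $(\tilP^{M_1},\tilQ^{M_2})\in\calP_s^K$. I would take generating distributions adapted to the type tuple: $\tilP_i=\Omega_i$ for $i\notin\calA_s^K$, $\tilQ_j=\Psi_j$ for $j\notin\calB_s^K$, and $\tilP_i=\tilQ_j=\frac{\alpha\Omega_i+\Psi_j}{1+\alpha}$ for $(i,j)\in\calM_s^K$, mixing in a vanishing perturbation that uses distinct anchor distributions on the distinct vertices of the matching graph of $\rmH_s^K$ (with the two endpoints of each matched pair sharing an anchor) so that the distinctness conditions defining $\calP_s^K$ hold; continuity then permits taking the perturbation to zero at the end.

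For this tuple the type-class probability factorizes over the $M_1+M_2$ sequences, and the lower bound on the probability of a type class~\cite[Lemma 2.6]{csiszar2011information} yields a contribution $(N+1)^{-|\calX|}$ or $(n+1)^{-|\calX|}$ from each unmatched index (its divergence vanishing) and $(N+1)^{-|\calX|}(n+1)^{-|\calX|}\exp(-ND(\Omega_i\|\tilP_i)-nD(\Psi_j\|\tilP_i))$ from each matched pair; since $ND(\Omega_i\|\tilP_i)+nD(\Psi_j\|\tilP_i)=n(\alpha D(\Omega_i\|\tilP_i)+D(\Psi_j\|\tilP_i))=n\,\mathrm{GJS}(\Omega_i,\Psi_j,\alpha)$ by the definition \eqref{def:GJS} evaluated at the mixture, summing over $\calM_s^K$ gives
\begin{align}
\Pr\{(\bX^N,\bY^n)\in\calT_{\Omega^{M_1},\Psi^{M_2}}\}\geq(N+1)^{-M_1|\calX|}(n+1)^{-M_2|\calX|}\exp\big(-n\,\rmG_s^K(\Omega^{M_1},\Psi^{M_2},\alpha)\big).
\end{align}
Invoking $\rmG_s^K(\Omega^{M_1},\Psi^{M_2},\alpha)\leq\lambda-\delta_{n,N}$ together with the definition of $\delta_{n,N}$ in \eqref{def:deltann} --- chosen precisely so that the product of the two polynomial prefactors dominates $\exp(-n\delta_{n,N})$ --- the right-hand side is at least $\exp(-n\lambda)$, so $\beta(\phi_{n,N}^\rmT|\tilP^{M_1},\tilQ^{M_2})\geq\exp(-n\lambda)$ for generating distributions arbitrarily close to $\calP_s^K$; by continuity this contradicts \eqref{typeerror:constraint}. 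Hence $\phi_{n,N}^\rmT$ outputs $\rmH_\rmr$ on $(\Omega^{M_1},\Psi^{M_2})$, which is what we needed.

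The step I expect to be the main obstacle is the bookkeeping of the polynomial prefactors: confirming that $\delta_{n,N}$ as defined in \eqref{def:deltann} is exactly large enough for the last displayed estimate to yield a genuine violation of \eqref{typeerror:constraint}, keeping careful track of the two different sequence lengths $N=n\alpha$ and $n$, and checking that enforcing the distinctness constraints of $\calP_s^K$ through the vanishing perturbation does not inflate the exponent. These are technical rather than conceptual points, but they must be handled with care.
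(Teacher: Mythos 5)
Your proposal is correct and follows essentially the same route as the paper's proof: the same contradiction argument showing that any type tuple whose second-smallest score falls below $\lambda-\delta_{n,N}$ must be mapped to $\rmH_\rmr$, the same choice of generating distributions (the mixtures $\frac{\alpha\Omega_i+\Psi_j}{1+\alpha}$ on matched pairs and the types themselves elsewhere), and the same type-class lower bound with the polynomial prefactors absorbed into $\delta_{n,N}$. Your explicit vanishing perturbation to enforce the distinctness constraints defining $\calP_s^K$ addresses a point the paper silently glosses over, and is a welcome extra degree of care rather than a deviation in approach.
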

\begin{proof}
Given any $\lambda\in\bbR_+$ and any types $(\Omega^{M_1},\Psi^{M_2})\in(\calP^N(\calX))^{M_1}\times(\calP^n(\calX))^{M_2}$, let
\begin{align}
h^*(\Omega^{M_1},\Psi^{M_2})
&:=\min_{t\in[T_K]}\sum_{i\in\calA_t^K}\mathrm{GJS}(\Omega_i,\Psi_{\sigma_l^K(i)},\alpha),\\
\tilh^*(\Omega^{M_1},\Psi^{M_2})
&:=\min_{\substack{t\in[T_K]:\sum_{i\in\calA_t^K}\mathrm{GJS}(\Omega_i,\Psi_{\sigma_l^K(i)},\alpha)\\\geq h^*(\Omega^{M_1},\Psi^{M_2})}}\sum_{i\in\calA_t^K}\mathrm{GJS}(\Omega_i,\Psi_{\sigma_l^K(i)},\alpha).
\end{align}

To prove the lemma, we need to show that for any type-based test $\phi_{n,N}^\rmT$ satisfying \eqref{typeerror:constraint}, if a tuple of types $(\Omega^{M_1},\Psi^{M_2})\in(\calP^N(\calX))^{M_1}\times(\calP^n(\calX))^{M_2}$ satisfy that
\begin{align}
\tilh^*(\Omega^{M_1},\Psi^{M_2})+\delta_{n,N}<\lambda,
\end{align}
then $\phi_{n,N}^\rmT(\Omega^{M_1},\Psi^{M_2})=\rmH_\rmr$. We prove the claim by contradiction. Assume that there exists types $(\tOmega^{M_1},\tPsi^{M_2})\in(\calP^N(\calX))^{M_1}\times(\calP^n(\calX))^{M_2}$ such that
\begin{align}
\tilh^*(\tOmega^{M_1},\tPsi^{M_2})+\delta_{n,N}&<\lambda,\label{contradict1}\\
\phi_{n,N}^\rmT(\Omega^{M_1},\Psi^{M_2})&=\rmH_k,
\end{align}
for some $k\in[T_K]$. Note that \eqref{contradict1} implies that there exists $(t,s)\in[T_K]^2$ such that $t\neq s$ and
\begin{align}
\sum_{i\in\calA_t^K}\mathrm{GJS}(\tOmega_i,\tPsi_{\sigma_l^K(i)},\alpha)+\delta_{n,N}<\lambda,\label{con:step1.5}\\
\sum_{i\in\calA_s}\mathrm{GJS}(\tOmega_i,\tPsi_{\sigma_l^K(i)},\alpha)+\delta_{n,N}<\lambda.
\end{align}
Furthermore, either $t\neq k$ or $s\neq k$. Without loss of generality, assume $t\neq k$. 

For any tuple of generating distributions $(\tilP^{M_1},\tilQ^{M_2})\in\calP_t$, under hypothesis $\rmH_t$, the mismatch probability satisfies 
\begin{align}
\beta(\Psi_{n,N}^\rmT|\tilP^{M_1},\tilQ^{M_2})
&\geq \bigg(\prod_{i\in[M_1]}\tilP_i^N(\calT_{\tOmega_i}^N)\bigg)\bigg(\prod_{j\in[M_2]}\tilQ_j^N(\calT_{\tPsi_j}^n)\bigg)\\
&\ge \exp(-n\delta_{n,N})\exp\Big(-nE_t(\tilP^{M_1},\tilQ^{M_2},\tOmega^{M_1},\tPsi^{M_2},\alpha)\Big)\label{usemethodoftypes2},
\end{align}
where \eqref{usemethodoftypes2} follows from \cite[Lemma 2.6]{csiszar2011information} and the definition of $E_t(\cdot)$ in \eqref{def:El}. Choose $(\tilP^{M_1},\tilQ^{M_2})$ such that
\begin{align}
\tilP_i
=\left\{
\begin{array}{ll}
\frac{\alpha\tOmega_i+\tPsi_{\sigma_l^K(i)}}{1+\alpha}&\mathrm{if}~i\in\calA_t^K,\\
\tOmega_i&\mathrm{if}~i\notin\calA_t^K,\
\end{array}
\right.
\end{align}
and 
\begin{align}
\tilQ_j
=\left\{
\begin{array}{ll}
\frac{\alpha\tOmega_{(\sigma_l^K)^{-1}(j)}+\tPsi_j}{1+\alpha}&\mathrm{if}~j\in\calB_t^K,\\
\tOmega_j&\mathrm{if}~j\notin\calB_t^K.
\end{array}
\right.
\end{align}
It follows that
\begin{align}
E_t(\tilP^{M_1},\tilQ^{M_2},\tOmega^{M_1},\tPsi^{M_2},\alpha)=\sum_{i\in\calA_t^K}\mathrm{GJS}(\tOmega_i,\tPsi_{\sigma_l^K(i)},\alpha)\label{con:step2}.
\end{align}
Thus, combining \eqref{con:step1.5}, \eqref{usemethodoftypes2} and \eqref{con:step2} leads to 
\begin{align}
\beta(\Psi_{n,N}^\rmT|\tilP^{M_1},\tilQ^{M_2})>\exp(-n\lambda),
\end{align}
which contradicts \eqref{typeerror:constraint}. The proof is thus completed.
\end{proof}

The proof of Lemma \ref{converse:gnp} is completed by combining Lemma \ref{type:optimal} with $\kappa_i=\frac{1}{n}$ and Lemma \ref{type:converse}.

\bibliographystyle{IEEEtran}
\bibliography{IEEEfull_lin}
\end{document}